\newif\iflncs\lncsfalse
\newif\iflong\longtrue
\newif\ifthm\thmtrue
\newtheorem{redrule}{Rule}
\newcommand{\Oh}{\ensuremath{\mathcal{O}}}
\newtheorem{myclaim}{Claim}{\itshape}{\rmfamily}
\newtheorem{fact}{Fact}{\bfseries}{\itshape}
\newenvironment{claimproof}{{\noindent\textit{Proof. }}}{\hfill$\blacksquare$\vspace{0.1cm}}
\theoremstyle{plain}
\newtheorem{lemma}{Lemma}
\theoremstyle{plain}
\newtheorem{corollary}{Corollary}
\theoremstyle{plain}
\newtheorem{definition}{Definition}
\theoremstyle{plain}
\newtheorem{theorem}{Theorem}
\newcommand{\lncsqed}{\iflncs\hfill$\qed$\fi}
\DeclareMathOperator{\lp}{lp}
\newcommand{\todom}[2][]{\todo[#1,color=green!50]{#2}}
\newcommand{\todok}[1]{\todo[backgroundcolor=red!60]{ #1}}
\definecolor{myred}{rgb}{1,0.25,0.25}
\newcommand{\prob}[3]{\begin{quote}  \textsc{#1}\\  \textbf{Input:} #2\\  \textbf{Question:} #3\end{quote}}
\newcommand{\W}[1]{\ensuremath{\mathrm{W}[#1]}\xspace}
\newcommand\NP{\ensuremath{\mathrm{NP}}\xspace}
\newcommand\coNP{\ensuremath{\mathrm{coNP}}\xspace}
\newcommand\coNPpoly{\ensuremath{\mathrm{\coNP/poly}}\xspace}
\newcommand\FPT{\ensuremath{\mathrm{FPT}}\xspace}
\newcommand{\VC}{\textsc{Vertex Cover}\xspace}
\newcommand{\IS}{\textsc{Independent Set}\xspace}
\newcommand{\BIC}{\textsc{Biclique}\xspace}
\newcommand{\BIN}{\textsc{Bin Packing}\xspace}
\newcommand{\MCI}{\textsc{MCP}\xspace}
\newcommand{\WMCI}{\textsc{WMCP}\xspace}
\newcommand{\ZMCI}{\textsc{ZWMCP}\xspace}
\newcommand{\MCIlong}{\textsc{Min~$(s,t)$-Cut Prevention}\xspace}
\newcommand{\WMCIlong}{\textsc{Weighted Min~$(s,t)$-Cut Prevention}\xspace}
\newcommand{\ZMCIlong}{\textsc{Zero-Weight Min~$(s,t)$-Cut Prevention}\xspace}
\renewcommand{\ZMCIlong}{\textsc{Weighted Min~$(s,t)$-Cut Prevention}\xspace}
\renewcommand{\ZMCI}{\textsc{WMCP}\xspace}
\newcommand{\KNAP}{\textsc{Knapsack}\xspace}
\newcommand{\vc}{\ensuremath{\mathrm{vc}}}
\newcommand{\pw}{\ensuremath{\mathrm{pw}}}
\newcommand{\fvs}{\ensuremath{\mathrm{fvs}}}
\newcommand{\td}{\ensuremath{\mathrm{td}}}
\newcommand{\tw}{\ensuremath{\mathrm{tw}}}
\newcommand{\mT}{\ensuremath{\mathcal{T}}}
\title{Preventing Small $\mathbf{(s,t)}$-Cuts by Protecting Edges}
\begin{document}

\iflong
\else
\author{Niels Grüttemeier\orcidID{0000-0002-6789-2918} \and Christian Komusiewicz\orcidID{0000-0003-0829-7032} \and Nils Morawietz\thanks{Supported by the \iflong Deutsche Forschungsgemeinschaft (DFG)\else DFG\fi, project OPERAH, KO~3669/5-1.} \and Frank Sommer
  \orcidID{0000-0003-4034-525X}}
\fi

\author{Niels Grüttemeier, Christian Komusiewicz, Nils Morawietz\thanks{Supported by the \iflong Deutsche Forschungsgemeinschaft (DFG)\else DFG\fi, project OPERAH, KO~3669/5-1.}, and Frank~Sommer
\\
Fachbereich Mathematik und Informatik, Philipps-Universität Marburg, Marburg, Germany \\ \{niegru,komusiewicz,morawietz,fsommer\}@informatik.uni-marburg.de}


\maketitle

\begin{abstract}
  We introduce and study \textsc{Weighted Min~$(s,t)$-Cut Prevention}, where we are
  given a graph~$G=(V,E)$ with vertices $s$~and~$t$ and an edge cost function and the aim
  is to choose an edge set~$D$ of total cost at most~$d$ such that~$G$ has no~$(s,t)$-edge cut of capacity at most~$a$ that is disjoint from~$D$. 
  We show that \textsc{Weighted Min $(s,t)$-Cut Prevention} is NP-hard even on subcubcic graphs when all edges have capacity and cost one and provide a comprehensive study of the parameterized complexity of the problem. 
  We show, for example W[1]-hardness with respect to~$d$ and an FPT algorithm for~$a$.
\end{abstract}

\todo[inline]{\textbf{Parameter:~$a -\text{~min cut}$} FPT? \W1? (for ZMCP \W1-hard (see Biclique reduction))}

\todom[inline]{cut for vs cut in

disjoint from~$D$ -> that avoids~$D$

überall 'solution' nutzen

solution in~$D$ und cut in~$A$ umbenennen

Achtung: solution hat noch keine Kostenschranke für den Verteidiger drin}

\todom[inline]{was geht ab auf planaren Graphen erwähnen?}

\todom[inline]{check strange spacing in statements at beginning}

\todom[inline]{cut -> $(s,t)$-cut}

\todom[inline]{bigger than vs. larger than}

\todo[inline]{Achtung: solution hat noch keine Kostenschranke für den Verteidiger drin}
\todo[inline]{check $P \subseteq S \to E(P) \subseteq S$}

\todom[inline]{cite show-master thesis :)}

\section{Introduction}

Network interdiction is a large class of optimization problems with direct applications in
operations research~\cite{CZ17,CMW98,SPG13,Wood93,Zenk10}. In these problems one player wants to achieve a certain goal (for example
finding a short path between two given vertices $s$~and~$t$), and another player wants to modify the network to prevent this.
Given the enormous importance of the maximum-flow/min-cut problem it comes as no
surprise that two-player games where an attacker wants to decrease the maximum $(s,t)$-flow of a network by deleting edges have been considered~\cite{CZ17,Wood93}. We
study an inverse problem: an attacker wants to find an $(s,t)$-cut of capacity at most~$a$
and a defender wants to protect edges in order to increase the capacity of any
minimum $(s,t)$-cut in~$G$ to at least~$a+1$.  \iflong Alternatively, we may think that the
defender increases the capacity of some edges to~$a+1$ in such a way that the maximum~$(s,t)$-flow of
the resulting network exceeds the given threshold~$a$.\fi The formal problem definition reads as follows.
\prob{\WMCIlong (\WMCI)}{A graph~$G=(V,E)$, two vertices~$s,t\in V$, a cost function~$c: E \to \mathds{N}$, a capacity function~$\omega: E \to \mathds{N}$, and integers~$d$ and~$a$.}{Is there a set~$D \subseteq E$ with~$c(D):=\sum_{e \in D}c(e) \leq d$ such that for every~$(s,t)$-cut~$A\subseteq (E\setminus D)$ in~$G$ we have~$\omega(A) := \sum_{e\in A}\omega(e) > a$?}
 The special case where we have only unit capacities and unit costs is referred to as \MCIlong{} (\MCI).
A different problem also called \textsc{Minimum $st$-Cut Interdiction} has been studied recently~\cite{AAT20} but in this problem the graph is directed and the interdictor may freely choose the amount of increase in edge capacities. In our formulation, the interdictor may only decide to fully protect an edge or to leave it unprocted. To the best of our knowledge, this formulation of
\WMCI{} has not been considered so far. We study the classical complexity of
\WMCI{} and its parameterized complexity with respect to~$a$,~$d$, and important structural parameterizations of the input graph~$G$.

\paragraph{Related Work.}
Many interdiction problems have been studied from a (parameterized) complexity perspective: In \textsc{Matching Interdiction}~\cite{Zenk10}, one wants to remove vertices or edges to decrease the weight of a maximum-weight matching. In the \textsc{Most Vital Edges in MST} problem, one aims to remove edges to decrease the weight of any maximum spanning tree. In \textsc{Shortest-Path Interdiction}~\cite{IW02}, also known as \textsc{Shortest Path Most Vital Edges}~\cite{BFN+19,FHN+18} and \textsc{Minimum Length-Bounded Cut}~\cite{BEH+10}, one wants to remove edges to increase the length of a shortest~$(s,t)$-path above a certain threshold. All of these problems are NP-hard and the study of their classical and parameterized complexity has received a lot of attention~\cite{BFN+19,GS14,FHN+18,Zenk10}.

\paragraph{Our Results.} \begin{table}[t]
  \caption{Parameter overview for \WMCI and \MCI.
    We write \NP-h if the problem is \NP-hard even if the corresponding parameter is a constant.}

\begin{tabularx}{\textwidth}{p{1.4cm} p{1.3cm} p{1.6cm} p{1.6cm} p{1.6cm} p{1.9cm} p{0.1cm} p{1.9cm}}
  \hline
  & $a$ & $d$ & $\Delta$ & $d+\Delta$ & $\vc$ & &$\pw + \fvs$ 
			\\
			\hline
			\WMCI~ & $\FPT$ & \W1-h & \NP-h & \W1-h if~$\Delta = 3$ & weak\-ly~{\NP-h} Thm.~\ref{theo:wmci paraNPhard vc}& & weak\-ly {\NP-h} Thm.~\ref{theo:wmci paraNPhard vc}		
			\\
			 &  &  &  &  & \W1-h  & 	&	 \W1-h 
			\\
			
\ifthm	& Thm.~\ref{theo: FPT a} &  Lem.~\ref{lem: np hard}   &  Thm.~\ref{theo: parametr d plus delta}  & Thm.~\ref{theo: parametr d plus delta} &   Thm.~\ref{thm:w hard vc poly weight}  & & Thm.~\ref{thm:w hard vc poly weight} \\ \fi

			\hline
			\MCI & $\FPT$ & \W1-h & \NP-h & $\FPT$ & $\FPT$ & & \W1-h 	\\

\ifthm 
&	 Thm.~\ref{theo: FPT a} & Lem.~\ref{lem: np hard}   &  Thm.~\ref{theo: parameter degree}  &  Thm.~\ref{theo: parameter degree} &   Thm.~\ref{thm-mci-fpt-vc} & & Thm.~\ref{theo: mci w hard pw}\\
\fi			
			
			\hline	
		\end{tabularx}
		\label{tab: parameter overview}
\end{table}
An overview of our results is given in Table~\ref{tab: parameter
  overview}. We show that \WMCI and \MCI are NP-hard even on subcubic graphs. This
motivates a parameterized complexity study with respect to the natural parameters defender
budged~$d$ and attacker budget~$a$ and with respect to structural parameters of the input
graph~$G$. Here, we consider the  structural parameters treewidth
and vertex cover number of~$G$ as well as pathwidth and feedback vertex set number
of~$G$. Our main results are as follows. \MCI{} and \WMCI{} are \W1-hard with respect to
the defender budget~$d$ and FPT with respect to the attacker budget~$a$. \MCI{} and
\WMCI{} are \W1-hard with respect to the combined parameter pathwidth of~$G$ plus
feedback vertex set number of~$G$ and thus also \W1-hard with respect to the treewidth
of~$G$. The hardness for these parameters motivates a study of the vertex cover
number~$\vc(G)$. We show that~\MCI{} is FPT with respect to~$\vc(G)$, whereas~\WMCI{}
is weakly NP-hard even for~$\vc(G)=2$ and \W1-hard with respect
to~$\vc(G)$ even when all capacities and costs are encoded in unary. Finally, we provide a polynomial kernel for~\WMCI{} parameterized
by~$\vc(G)+a$ and complement this result by showing that \MCI{} and \WMCI{} do not admit
polynomial kernels with respect to the large combined parameter~$d+a+\tw(G)+\lp(G)+\Delta(G)$
where~$\lp(G)$ denotes the length of a longest path in~$G$ and~$\Delta(G)$ denotes the
maximum degree.
Overall, our results give a comprehensive complexity overview of \WMCI{} and~\MCI.

\section{Preliminaries}

For integers~$i$ and~$j$ with~$i \leq j$, we define~$[i,j] := \{k \in \mathds{N}\mid i \leq k \leq j\}$.

An (undirected) graph~$G=(V,E)$ consists of a set of vertices~$V$ and a set of edges~$E \subseteq \binom{V}{2}$.
Throughout this work, let~$n:=|V|$ and~$m:=|E|$. 
For vertex sets~$S\subseteq V$ and~$T\subseteq V$ we denote with~$E_G(S,T) := \{\{s,t\}\in E \mid s\in S, t\in T\}$ the edges between~$S$ and~$T$.
Moreover, we define~$E_G(S) := E_G(S,S)$ and~$E_G(v,S) := E_G(\{v\}, S)$ for~$v\in V$.
For a vertex set~$S\subseteq V$ we denote with~$G[S]:=(S, E_G(S))$ the \emph{induced subgraph of~$S$ in~$G$}.
Moreover, for an edge set~$D \subseteq E$ we let~$G-D:=(V, E \setminus D)$ and~$G[D]:=(V,D)$.
For a vertex~$v\in V$, we denote with~$N_G(v):= \{w\in V\mid \{v,w\}\in E\}$ the \emph{open neighborhood} of~$v$ in~$G$.
Analogously, for a vertex set~$S\subseteq V$, we define~$N_G(S) := \bigcup_{v\in S} N_G(S)\setminus S$.
If~$G$ is clear from the context, we may omit the subscript.
A sequence of distinct vertices~$P=(v_0, \dots, v_k)$ is a \emph{path} or~\emph{$(v_0,v_k)$-path} of length~$k$ in~$G$ if~$\{v_{i-1}, v_{i}\}\in E(G)$ for all~$i \in [1,k]$. 
\iflong We denote with~$V(P)$ the vertices of~$P$ and with~$E(P)$ the edges of~$P$. \fi
Let~$s$ and~$t$ be distinct vertices of~$V$. 
An edge set~$A \subseteq E$ is an~$(s,t)$ (edge)-cut in~$G$ if there is no~$(s,t)$-path in~$G-A$.
A graph~$G=(V,E)$ is \emph{connected} if there is an~$(a,b)$-path in~$G$ for each pair of distinct vertices~$a,b\in V$.
\iflong Moreover, we call a vertex set~$S$ a \emph{connected component} of~$G$ if~$G[S]$ is connected and if there is no~$S'\supset S$ such that~$G[S']$ is connected.\fi
\todom{cut vs edge cut (besser machen weil wegen WG)}

\iflong
\subsection{Graph parameter}

Let~$G=(V,E)$ be a graph. 
Moreover, we denote with~$\Delta(G) := \max\{|N_G(v)|\mid v\in V\}$ the \emph{maximum degree} of~$G$.

\todom[inline]{define vertex cover number?}

A set~$S\subseteq V$ is a \emph{feedback vertex set} for~$G$ if~$G - S$ is acyclic, that is, if for each pair of distinct vertices~$a,b\in V\setminus S$ there is at most one~$(a,b)$-path in~$G - S$.
The size of the smallest size feedback vertex set for~$G$ is denoted by~$\fvs(G)$.

A path composition~$\mathcal{B}$ for a graph~$G=(V,E)$ is a sequence of bags~$B_1, \dots, B_q$ where~$B_j \subseteq V$ for each~$j\in [1,q]$, such that:
\begin{enumerate}
\item for every vertex~$v\in V$, there is at least one~$i \in [1,q]$ with~$v\in B_i$,
\item for each edge~$e\in E$, there is at least one~$i \in [1,q]$ such that~$e\subseteq B_i$, and
\item if~$v\in B_i \cap B_j$ with~$i \leq j$, then~$v\in B_k$ for each~$k\in [i,j]$.
\end{enumerate}
The~\emph{width of a path decomposition~$\mathcal{B}$} is the size of the largest bag in~$\mathcal{B}$ minus one and the \emph{pathwidth} of a graph~$G$ is the minimal width of any path decomposition of~$G$ which is denoted by~$\pw(G)$.

A \emph{tree decomposition} of a graph~$G=(V,E)$ is a pair~$(\mT,\beta)$ consisting of a directed tree~$\mT=(\mathcal{V}, \mathcal{A}, r)$ with root~$r\in \mathcal{V}$ and a function~$\beta: \mathcal{V} \to 2^V$ such that
\begin{enumerate}
\item for every vertex~$v\in V$, there is at least one~$x\in \mathcal{V}$ with~$v\in \beta(x)$,
\item for each edge~$\{u,v\}\in E$, there is at least one~$x\in X$ such that~$u\in \beta(x)$ and~$v\in \beta(x)$, and
\item for each vertex~$v\in V$, the subgraph~$\mT[\mathcal{V}_v]$ is connected, where~$\mathcal{V}_v := \{x\in \mathcal{V}\mid v\in \beta(x)\}$.
\end{enumerate}
We call~$\beta(x)$ the~\emph{bag} of~$x$.
The~\emph{width of a tree decomposition} is the size of the largest bag minus one and the \emph{treewidth} of a graph~$G$ is the minimal width of any tree decomposition of~$G$ denoted by~$\tw(G)$.

We consider tree decompositions with specific properties.
A node~$x\in \mathcal{V}$ is called:
\begin{enumerate}
\item a \emph{leaf node} if~$x$ has no child nodes in~$\mT$,
\item a \emph{forget node} if~$x$ has exactly one child node~$y$ in~$\mT$ and~$\beta(y) = \beta(x) \cup \{v\}$ for some~$v\in V\setminus \beta(x)$,
\item an \emph{introduce node} if~$x$ has exactly one child node~$y$ in~$\mT$ and~$\beta(y) = \beta(x) \setminus \{v\}$ for some~$v\in V\setminus \beta(y)$, or
\item a \emph{join node} if~$x$ has exactly two child nodes~$y$ and~$z$ in~$\mT$ and~$\beta(x) = \beta(y) = \beta(z)$.
\end{enumerate} 
A tree decomposition~$(\mT=(\mathcal{V}, \mathcal{A}, r),\beta)$ is called~\emph{nice} if every node~$x\in \mathcal{V}$ is either a leaf node, a forget node, an introduce node, or a join node.
\todom{evtl vor die nodearten}

For a node~$x\in \mathcal{V}$, we define with~$V_x$ the union of all bags~$\beta(y)$, where~$y$ is reachable from~$x$ in~$\mT$.
Moreover, we set~$G_x := G[V_x]$ and~$E_x := E_G(V_x)$.

The tree-depth~$\td(G)$ is the smallest height of any directed tree~$T=(V(G), A)$ with the property that for each edge~$\{u,w\}\in E(G)$ either~$u$ is an ancestor of~$w$ in~$T$ or vice versa.
\fi

Two instances~$I$ and~$I'$ of the same decision problem~$L$ are equivalent if~$I$ is a yes-instance of~$L$ if and only if~$I'$ is a yes-instance of~$L$.
A \emph{reduction rule} for a decision problem~$L$ is an algorithm~$A$ that transforms any instance~$I$ of~$L$ into another instance~$A(I)$ of~$L$. 
We call~$A$~\emph{safe}, if for each instance~$I$ of~$L$,~$I$ and~$A(I)$ are equivalent instances of~$L$.
A reduction rule~$A$ is exhaustively applied for an instance~$I$ if~$A(I) = I$. 
\todom[inline]{passt das so?}

\iflong
For details on parameterized complexity, we refer to the standard monograph~\cite{CFK+15}. 
\else

For details on parameterized complexity and the definitions of all graph parameters considered in this work, we refer to the standard monograph~\cite{CFK+15}. 
\fi

\iflong

The two further variants of \WMCI{} that we study are defined as follows. 

\prob{\ZMCIlong (\ZMCI)}{A graph~$G=(V,E)$, two vertices~$s,t\in V$, a cost function~$c: E \to \mathds{N}$, a capacity function~$\omega: E \to \mathds{N} \cup \{0\}$, and integers~$d$ and~$a$.}{Is there a set~$D \subseteq E$ with~$c(D):=\sum_{e \in D}c(e) \leq d$ such that for every~$(s,t)$-cut~$A\subseteq (E\setminus D)$ in~$G$ it holds that~$\omega(A) := \sum_{e\in A}\omega(e) > a$?}

\prob{\MCIlong (\MCI)}{A graph~$G=(V,E)$, two vertices~$s,t\in V$, and integers~$d$ and~$a$.}{Is there an edge set~$D \subseteq E$ of size at most~$d$ such that every disjoint~$(s,t)$-cut~$A\subseteq (E\setminus D)$ in~$G$ has size more than~$a$?}

Informally, we search for a cheap set of edges~$S$ such that every disjoint~$(s,t)$-cut~$M$ is expensive.\fi{}

Let~$I=(G=(V,E), s,t,c, \omega, d, a)$ be an instance of any of the above problems (in the case of~\MCI,~$c(e):=\omega(e) := 1$ for all~$e\in E$). 
We call an edge set~$D \subseteq E$ a \emph{solution} of~$I$ if every~$(s,t)$-cut~$A\subseteq E \setminus D$ has capacity at least~$a+1$ according to~$\omega$.
A solution~$D$ of~$I$ is called a \emph{minimum solution} of~$I$, if there is no solution~$D'$ of~$I$ with~$c(D') < c(D)$.

\iflong
\subsection{Basic Observations}
\else 
\paragraph{Basic Observations.}
\fi
We assume \iflong without loss of generality \fi that~$G$ is connected and that~$c(e)\le d+1$ and~$\omega(e)\le a+1$ for each edge~$e\in E(G)$, as otherwise we can decrease these weights accordingly.
Furthermore, we can assume that~$d\le c(E)$ \iflong where~$c(E)$ denotes the total sum of edge-costs.
Analogously, we can assume that\else and\fi~$a\le\omega(E)$.

\begin{fact}\label{Lemma: Compute Cut}
Let~$G=(V,E)$ be a graph, let~$\omega:E \rightarrow \mathds{N}$ be a capacity function, and let~$D\subseteq E$. Then, in~$n^{\Oh(1)}$ time we can compute an~$(s,t)$-cut~$A \subseteq E \setminus D$ with~$\omega(A) \leq a$ or report that no such~$(s,t)$-cut exists.
\end{fact}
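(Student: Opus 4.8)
The statement is essentially a reduction to a standard minimum-cut computation, with the twist that the edges in $D$ must be excluded from the cut. The plan is to reduce the problem to computing a minimum-capacity $(s,t)$-cut in a modified graph where the edges of $D$ are made "uncuttable", and then compare the resulting minimum capacity against the threshold $a$.

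\medskip

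\noindent\emph{Step 1: Make the edges of $D$ uncuttable.} Construct a new capacity function $\omega'$ on $E$ by setting $\omega'(e) := \omega(e)$ for $e \in E \setminus D$ and $\omega'(e) := a+1$ for $e \in D$ (equivalently, any value strictly larger than $\omega(E\setminus D)$; using $a+1$ suffices since we only care about cuts of capacity at most $a$, and by the basic observations $a \le \omega(E)$). Any $(s,t)$-cut $A$ in $G$ with $\omega'(A) \le a$ cannot contain any edge of $D$, hence $A \subseteq E \setminus D$ and $\omega(A) = \omega'(A) \le a$; conversely, any $(s,t)$-cut $A \subseteq E\setminus D$ satisfies $\omega'(A) = \omega(A)$. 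Thus the $(s,t)$-cuts $A \subseteq E \setminus D$ with $\omega(A) \le a$ are exactly the $(s,t)$-cuts $A$ with $\omega'(A) \le a$, and the latter always avoid $D$.

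\medskip

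\noindent\emph{Step 2: Compute a minimum $(s,t)$-cut in $(G,\omega')$.} By the classical max-flow/min-cut theorem, a minimum-capacity $(s,t)$-cut $A^\ast$ in $G$ with respect to $\omega'$ can be computed in $n^{\Oh(1)}$ time (via any polynomial-time maximum-flow algorithm on the corresponding flow network, reading off the cut from the residual graph). If $\omega'(A^\ast) \le a$, then by Step 1 we have $A^\ast \subseteq E \setminus D$ and $\omega(A^\ast) = \omega'(A^\ast) \le a$, so we output $A^\ast$. If $\omega'(A^\ast) > a$, then no $(s,t)$-cut has $\omega'$-capacity at most $a$, so in particular no $(s,t)$-cut $A \subseteq E \setminus D$ has $\omega$-capacity at most $a$, and we correctly report that no such cut exists.

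\medskip

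\noindent There is no real obstacle here; the only points requiring (minor) care are the correctness argument that reweighting the $D$-edges exactly removes them from the family of small cuts, and the remark that the capacities remain polynomially bounded so that a standard polynomial-time max-flow routine applies. The running time is dominated by the single max-flow computation and is therefore $n^{\Oh(1)}$.
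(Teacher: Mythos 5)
Your proposal is correct and follows exactly the same approach as the paper's proof: reweight the edges of $D$ to capacity $a+1$, compute a minimum $(s,t)$-cut with respect to the modified capacities, and compare against $a$. The correctness argument (a cut of modified capacity at most $a$ cannot contain a $D$-edge, and otherwise every cut either meets $D$ or is too expensive) matches the paper's reasoning.
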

\iflong
\todom{proof irgendwo hin}
\begin{proof}
We define a capacity function~$\omega': E \rightarrow \mathds{N}$ by~$\omega'(e):= a+1$ if~$e \in D$ and~$\omega'(e):=\omega(e)$ otherwise. We then compute a min~$(s,t)$-cut~$A$ in~$G$ with respect to the new capacity function~$\omega'$ in~$n^{\Oh(1)}$ time. If~$\omega'(A) \leq a$, then we return~$A$. Otherwise, we report that no such~$(s,t)$-cut exists.

Observe that if~$\omega'(A) \leq a$, then~$A \subseteq E \setminus D$, since~$\omega'(e)=a+1$ for every~$e \in D$. Otherwise, if~$\omega'(A)>a$, then every $(s,t)$-cut in~$G$ either contains an edge from~$D$ or has capacity bigger than~$a$. Thus, the algorithm is correct.
\lncsqed\end{proof}
\fi

\iflong
\begin{lemma}\label{lem to unit costs}
Let~$I=(G=(V,E),s,t,c,\omega, d,a)$ be an instance of~\ZMCI and let~$e^*=\{v^*, w^*\}\in E$ with~$c(e^*) > 1$.
Moreover, let~$I' := (G'=(V',E'), s, t, c', \omega',d,a)$ be the instance of~\ZMCI obtained by replacing~$e^*$ by an~$(v^*, w^*)$-path with~$c(e^*)$ edges of cost~$1$ and capacity~$\omega(e^*)$.
Then,~$I$ and~$I'$ are equivalent instances of~\ZMCI and~$I'$ can be computed in $\Oh(c(e^*)\cdot|I|)$~time.
\end{lemma}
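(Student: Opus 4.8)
The claim is that replacing a single edge $e^*=\{v^*,w^*\}$ of cost $k:=c(e^*)>1$ by a path $Q$ of $k$ unit-cost edges, each of capacity $\omega(e^*)$, yields an equivalent instance. The argument is a bijection-style correspondence between solutions of $I$ and solutions of $I'$, together with a matching correspondence between the relevant $(s,t)$-cuts. First I would fix notation: let $Q=(v^*=u_0,u_1,\dots,u_k=w^*)$ be the new path with internal vertices $u_1,\dots,u_{k-1}$ fresh, and let $E(Q)=\{f_1,\dots,f_k\}$ with $f_i=\{u_{i-1},u_i\}$, $c'(f_i)=1$, $\omega'(f_i)=\omega(e^*)$; all other edges keep their cost and capacity.

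**Cuts correspond.** The key structural observation is that for $(s,t)$-paths, the new path $Q$ behaves exactly like the old edge $e^*$: any $(s,t)$-path in $G$ using $e^*$ maps to an $(s,t)$-path in $G'$ routing through all of $Q$, and conversely any $(s,t)$-path in $G'$ meeting an internal vertex of $Q$ must traverse a contiguous stretch of $Q$; since the $u_i$ have degree $2$ in $G'$, such a path either uses all of $f_1,\dots,f_k$ or none of them (it cannot "enter and leave" $Q$ in the middle). Hence I would show: (i) if $A\subseteq E\setminus D$ is an $(s,t)$-cut in $G$ with $e^*\in A$, then $(A\setminus\{e^*\})\cup\{f_1\}$ is an $(s,t)$-cut in $G'$ of the same capacity (and $(A\setminus\{e^*\})\cup\{f_i\}$ works for any $i$); if $e^*\notin A$ then $A$ itself is an $(s,t)$-cut in $G'$ of the same capacity; (ii) conversely, given an $(s,t)$-cut $A'\subseteq E'\setminus D'$ in $G'$, the set $A:=(A'\setminus E(Q))\cup\{e^*\}$ if $A'\cap E(Q)\neq\emptyset$, and $A:=A'$ otherwise, is an $(s,t)$-cut in $G$ with $\omega(A)\le\omega'(A')$ (replacing up to $k$ copies of an edge of capacity $\omega(e^*)$ by one such edge does not increase capacity). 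The cleanest way to prove these is via the "no $(s,t)$-path in the complement" definition: a subset blocks all $(s,t)$-paths in $G-A$ iff the corresponding subset blocks all $(s,t)$-paths in $G'-A'$, using the path correspondence above.

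**Solutions correspond.** Next I relate defender sets. Given a solution $D\subseteq E$ of $I$, I claim $D':=D$ is a solution of $I'$ if $e^*\notin D$, and $D':=(D\setminus\{e^*\})\cup E(Q)$ if $e^*\in D$; in both cases $c'(D')=c(D)\le d$ (in the second case $c(e^*)=k$ is replaced by $k$ unit edges). To verify $D'$ is a solution: take any $(s,t)$-cut $A'\subseteq E'\setminus D'$ in $G'$; by the cut correspondence it yields an $(s,t)$-cut $A\subseteq E\setminus D$ in $G$ with $\omega(A)\le\omega'(A')$ — here one checks that $A$ avoids $D$: if $e^*\in A$ this came from $A'$ meeting $E(Q)$, which is possible only when $E(Q)\not\subseteq D'$, i.e. $e^*\notin D$, so $e^*\notin D$; edges of $A$ outside $e^*$ are in $A'\setminus D'=A'\setminus D\subseteq E\setminus D$. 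Since $D$ is a solution, $\omega(A)\ge a+1$, hence $\omega'(A')\ge a+1$. For the converse, given a solution $D'$ of $I'$, I would first argue we may assume $D'\cap E(Q)$ is either empty or all of $E(Q)$: if $D'$ contains some but not all $f_i$, then $D'\setminus E(Q)$ is still a solution (any cut avoiding $D'\setminus E(Q)$ that we must block still corresponds to a cut in $G$, and partial protection of $Q$ gives no benefit since a cut can use any single $f_i$), so removing those edges only decreases cost and preserves the solution property; this normalization step is where I expect the most care is needed. Then set $D:=D'$ if $D'\cap E(Q)=\emptyset$ and $D:=(D\setminus E(Q))\cup\{e^*\}$ otherwise, and verify $D$ is a solution of $I$ by the same cut correspondence in reverse.

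**Running time and the main obstacle.** The construction of $I'$ from $I$ is immediate: delete $e^*$, add $k-1$ vertices and $k$ edges, set their weights — clearly $O(k\cdot|I|)$ time (the factor $|I|$ is a generous bound covering the cost of writing down the modified instance and recomputing incidence data). The main obstacle is not any single step but getting the cut correspondence airtight, in particular the claim that an $(s,t)$-path in $G'$ cannot use a proper nonempty subset of $E(Q)$ and the dual normalization claim that a defender need never protect a proper nonempty subset of $E(Q)$. Both follow from the degree-$2$ structure of the internal vertices $u_1,\dots,u_{k-1}$, but I would state this as an explicit sub-claim and prove it carefully, since the whole equivalence rests on it. Everything else — that replacing $k$ equal-capacity parallel-in-effect edges by one does not change capacities of the cuts that matter, and the cost bookkeeping — is routine.
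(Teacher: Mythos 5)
Your proposal is correct and follows essentially the same route as the paper's proof: a case distinction on whether $e^*$ (respectively all of $E(Q)$) is protected, with the cut correspondence resting on the fact that the degree-two internal vertices force any $(s,t)$-path to use either all or none of the subdivision edges. Your normalization step for a defender set containing a proper nonempty subset of $E(Q)$ is the same idea the paper applies directly, namely rerouting a would-be cheap cut through an unprotected edge $e'\in E^*\setminus D'$, so the two arguments coincide in substance.
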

\begin{proof}
The running time bound follows immediately by the construction.
It remains to prove the correctness.
Let~$E^* := E' \setminus E$ be the edges of the~$(v^*, w^*)$-path in~$G'$ that replaces the edge~$e^*$.

We show that~$I$ is a yes-instances of~\ZMCI if and only if~$I'$ is a yes-instances of~\ZMCI. 

$(\Rightarrow)$ 
Let~$D$ be a solution of~$I$ of cost at most~$d$.

\textbf{Case 1:~$e^* \in D$.} 
We set~$D' := D \setminus \{e^*\} \cup E^*$. 
Note that~$c'(D') \leq d$. 
We show that~$D'$ is a solution of~$I'$.
Assume towards a contradiction that there is an~$(s,t)$-cut~$A'\subseteq E' \setminus D'$ in~$G'$ with~$\omega(A') \leq a$.
By definition of~$D'$ it follows that~$A' \subseteq E \setminus \{e^*\}$ with~$\omega(A') \leq a$.
Moreover, since we obtained~$G'$ from~$G$ by replacing~$e^*$ with a path consisting of the edges~$E^*$ and~$A'$ is disjoint to~$E^*$, it follows that~$A'$ is an~$(s,t)$-cut in~$G$, a contradiction.

\textbf{Case 2:~$e^* \notin D$.} 
Note that~$D \subseteq E'$ and that~$c'(D) \leq d$. 
We show that~$D$ is a solution of~$I'$.
Assume towards a contradiction that there is an~$(s,t)$-cut~$A'\subseteq E' \setminus D$ in~$G'$ with~$\omega(A') \leq a$.

If~$A' \cap E^*  = \emptyset$, then~$A'$ is also an~$(s,t)$-cut disjoint to~$D$ in~$G$ with~$\omega(A') \leq a$. 
A contradiction.
Otherwise,~$A' \cap E^* \neq \emptyset$.
Hence,~$A := A' \setminus E^* \cup \{e^*\}$ is an~$(s,t)$-cut disjoint to~$D$ in~$G$ with~$\omega(A) \leq a$, a contradiction.

$(\Leftarrow)$
Let~$D'$ be a solution of~$I'$.

\textbf{Case 1:~$E^* \subseteq D'$.} 
We set~$D := D' \setminus E^* \cup \{e^*\}$. 
Note that~$c(D) \leq d$. 
We show that~$D$ is a solution of~$I$.
Assume towards a contradiction that there is an~$(s,t)$-cut~$A\subseteq E \setminus A$ in~$G$ with~$\omega(A) \leq a$.
By definition of~$D$, it holds that~$A\subseteq E' \setminus D'$.
Moreover, since we obtained~$G'$ from~$G$ by replacing~$e^*$ with a path consisting of the edges~$E^*$ and~$A$ is disjoint to~$E^*$, it follows that~$A$ is an~$(s,t)$-cut in~$G'$ with~$\omega'(A) \leq a$, a contradiction.

\textbf{Case 2:~$E^* \not \subseteq D'$.} 
We set~$D := D' \setminus E^*$. 
Note that~$c(D) \leq d$. 
We show that~$D$ is a solution of~$I$.
Assume towards a contradiction that there is an~$(s,t)$-cut~$A\subseteq E \setminus D$ in~$G$ with~$\omega(A) \leq a$.

If~$e^* \notin A$, then~$A$ is also an~$(s,t)$-cut disjoint to~$D'$ in~$G'$ with~$\omega'(A) \leq a$. 
A contradiction.
Otherwise,~$e^*\in A$.
Hence,~$A' := A \setminus \{e^*\} \cup \{e'\}$ for some~$e' \in E^* \setminus D'$ is an~$(s,t)$-cut disjoint to~$D'$ in~$G'$ with~$\omega'(A') \leq a$, a contradiction.
\lncsqed\end{proof}

\begin{lemma}\label{lem to unit weights}
Let~$I=(G=(V,E),s,t,c,\omega, d,a)$ be an instance of~\ZMCI and let~$e^*=\{v^*, w^*\}\in E$ with~$\omega(e^*) > 1$.
Moreover, let~$I' := (G'=(V',E'), s, t, c', \omega',d,a)$ be the instance of~\ZMCI obtained by updating the capacity of~$e^*$ to~$1$ and by adding~$\omega(e^*)-1$ many~$(v^*, w^*)$-paths with two edges of cost~$c(e^*)$ and capacity~$1$ each.
Then,~$I$ and~$I'$ are equivalent instances of~\ZMCI and~$I'$ can be computed in $\Oh(\omega(e^*)\cdot |I|)$~time.
\end{lemma}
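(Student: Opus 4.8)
The plan is to follow the pattern of the proof of Lemma~\ref{lem to unit costs}. The running-time bound is immediate from the construction, since~$I'$ arises from~$I$ by adding only~$\Oh(\omega(e^*))$ vertices and edges. For the equivalence, write~$k:=\omega(e^*)\ge 2$, let~$P_1,\dots,P_{k-1}$ be the added~$(v^*,w^*)$-paths with~$P_i$ having internal vertex~$x_i$ and edges~$f_i:=\{v^*,x_i\}$ and~$g_i:=\{x_i,w^*\}$, and set~$B:=\{e^*\}\cup\bigcup_{i\in[1,k-1]}\{f_i,g_i\}$; call~$B$ the \emph{bundle}. Note that~$\omega'$ and~$c'$ agree with~$\omega$ and~$c$ on~$E\setminus\{e^*\}$, that~$c'(e^*)=c'(f_i)=c'(g_i)=c(e^*)$, and that~$\omega'(e^*)=\omega'(f_i)=\omega'(g_i)=1$ while~$\omega(e^*)=k$. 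The key structural fact is that each~$x_i$ has degree exactly~$2$ in~$G'$, so any simple~$(s,t)$-path of~$G'$ uses the edge~$e^*$, or traverses exactly one~$P_i$ (hence uses both~$f_i$ and~$g_i$), or avoids~$B$ altogether; in particular every~$(s,t)$-path of~$G$ either avoids~$B$ or uses~$e^*$.

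The first step is to prove two statements relating cuts of~$G$ and cuts of~$G'$. \textbf{(Down)} Every~$(s,t)$-cut~$A'$ of~$G'$ contains an~$(s,t)$-cut~$A$ of~$G$ with~$\omega(A)\le\omega'(A')$: if~$v^*$ and~$w^*$ are connected in the graph with edge set~$B\setminus A'$, take~$A:=A'\setminus B\subseteq E$ and observe that any~$(s,t)$-path surviving~$A$ in~$G$ either already survives~$A'$ in~$G'$ or, if it uses~$e^*\in A'$, can be rerouted through the~$P_i$ surviving in~$B\setminus A'$; otherwise~$e^*\in A'$ and~$A'$ contains an edge of every~$P_i$, so~$\omega'(A'\cap B)\ge k$ and~$A:=(A'\setminus B)\cup\{e^*\}\subseteq E$ is an~$(s,t)$-cut of~$G$ with~$\omega(A)=\omega'(A'\setminus B)+k\le\omega'(A')$; in both cases~$A\subseteq A'$. \textbf{(Up)} For every~$(s,t)$-cut~$A$ of~$G$ and every choice~$h_i\in\{f_i,g_i\}$, the set~$A':=A$ if~$e^*\notin A$ and~$A':=A\cup\{h_1,\dots,h_{k-1}\}$ if~$e^*\in A$ is an~$(s,t)$-cut of~$G'$ with~$A'\cap E=A$ and~$\omega'(A')=\omega(A)$; the point is that deleting~$A'$ destroys every traversal of every~$P_i$ (one of~$f_i,g_i$ lies in~$A'$) and the weight identity uses~$\omega'(e^*)+\sum_i\omega'(h_i)=1+(k-1)=k=\omega(e^*)$.

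With \textbf{(Down)} and \textbf{(Up)} in hand the equivalence is assembled as follows. For~``$\Rightarrow$'', given a solution~$D$ of~$I$ with~$c(D)\le d$, set~$D':=D$; then~$c'(D')=c(D)\le d$, and any~$(s,t)$-cut~$A'$ of~$G'$ with~$A'\cap D'=\emptyset$ and~$\omega'(A')\le a$ yields via \textbf{(Down)} an~$(s,t)$-cut~$A\subseteq A'$ of~$G$ with~$A\cap D=\emptyset$ and~$\omega(A)\le a$, contradicting that~$D$ is a solution of~$I$. For~``$\Leftarrow$'', given a solution~$D'$ of~$I'$ with~$c'(D')\le d$, set~$D:=(D'\cap E)\cup\{e^*\}$ if~$e^*\in D'$ or~$\{f_i,g_i\}\subseteq D'$ for some~$i$, and~$D:=D'\cap E$ otherwise; one checks~$c(D)\le c'(D')\le d$ (when~$e^*\notin D'$ but~$\{f_i,g_i\}\subseteq D'$, the extra~$c(e^*)$ is paid for by~$c'(f_i)+c'(g_i)=2c(e^*)$). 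To show~$D$ is a solution of~$I$, suppose an~$(s,t)$-cut~$A$ of~$G$ has~$A\cap D=\emptyset$ and~$\omega(A)\le a$: since~$e^*\in D$ in the first case, $e^*\in A$ can only happen in the ``otherwise'' case, where each~$P_i$ contains an edge~$h_i\notin D'$; feeding these~$h_i$ (or~$A'=A$ when~$e^*\notin A$) into \textbf{(Up)} gives an~$(s,t)$-cut~$A'$ of~$G'$ with~$A'\cap D'=\emptyset$ and~$\omega'(A')=\omega(A)\le a$, contradicting that~$D'$ is a solution of~$I'$.

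The step I expect to be the main obstacle is the~``$\Leftarrow$'' direction: a solution~$D'$ of~$I'$ may spend its budget on the new edges~$f_i,g_i$, and one must realize that protecting~$e^*$, or protecting both edges of a single~$P_i$, has in~$G'$ exactly the same effect as protecting~$e^*$ in~$G$ -- this is what forces the case distinction in the definition of~$D$ -- whereas in every other configuration enough of the bundle stays unprotected to realize the ``blown-up'' version (via \textbf{(Up)}) of any cheap~$(s,t)$-cut of~$G$. The remaining checks -- that the translated cuts really avoid~$D$ respectively~$D'$, and the cost and capacity (in)equalities -- are routine bookkeeping once the two translation statements are set up correctly.
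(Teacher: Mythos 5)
Your proof is correct and takes essentially the same route as the paper's: the same case analysis on whether~$e^*$ (or an entire added path) is protected, and the same cut translations with the capacity bookkeeping~$1+(\omega(e^*)-1)=\omega(e^*)$, merely factored out into your explicit claims \textbf{(Down)} and \textbf{(Up)}. If anything, your treatment of the subcase~$e^*\in A$ in the backward direction --- keeping~$e^*$ in the translated cut and choosing for each added path an edge outside~$D'$ --- is slightly more careful than the paper's writeup, which always inserts the edges~$\{v^*,z\}$ without verifying that they avoid~$D'$.
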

\begin{proof}
The running time bound follows immediately by the construction.
It remains to prove the correctness.
By~$V^*:=V'\setminus V$ we denote the vertices and by~$E^* := E' \setminus E$ we denote the edges added to~$G$ to obtain the graph~$G'$.
We prove that~$I$ is a yes-instance of \ZMCI if and only if~$I'$ is a yes-instance for \ZMCI.

$(\Rightarrow)$ Let~$D\subseteq E$ be a solution of~$I$ of cost at most~$d$.

\textbf{Case 1:~$e^*\in D$.} 
We set~$D':=D$.
Clearly,~$c(D')\le d$.
We show that~$D'$ is a solution of~$I'$.
Assume towards a contradiction that there is an~$(s,t)$-cut~$A'\subseteq E' \setminus D'$ in~$G'$ with~$\omega(A') \leq a$.
Recall that each edge in~$E^*$ is on a path between~$v^*$ and~$w^*$.
Since~$e^*\in D'$, we conclude that~$A'\cap E^*=\emptyset$.
Hence,~$A'$ is also an~$(s,t)$-cut of capacity at most~$a$ in~$G$, a contradiction.

\textbf{Case 2:~$e^*\notin D$.} 
Observe that~$D\subseteq E'$ and that~$c'(D)\le d$.
We show that~$D$ is a solution of~$I'$.
Assume towards a contradiction that there is an inclusion-minimal~$(s,t)$-cut~$A'\subseteq E' \setminus D$ in~$G'$ with~$\omega(A') \le a$.

If~$A' \cap E^*  = \emptyset$, then~$A'$ is also an~$(s,t)$-cut disjoint to~$D$ in~$G$ with~$\omega(A') \le a$, a contradiction.
Otherwise,~$A' \cap E^* \neq \emptyset$.
Recall that all edges in~$E^*$ are on paths with two edges between~$v^*$ and~$w^*$.
Thus,~$A'$ is also an~$(v^*,w^*)$-cut in~$G'$. 
Hence,~$\{v^*,w^*\}\in A'$ and for each vertex~$z\in V^*$ at least one adjacent edge is contained in~$A'$. 
Since~$|V^*|=\omega(e^*)-1$, we conclude that~$|A'\cap E^*|\ge\omega(e^*)-1$.
Thus,~$A'\setminus E^*$ is an~$(s,t)$-cut of cost at most~$a$ in~$G$, a contradiction.

$(\Leftarrow)$
Let~$D'$ be a solution of~$I'$ of cost at most~$d$.
By~$P_z$ we denote the path~$(v^*,z,w^*)$ for some vertex~$z\in V^*$.

\textbf{Case 1:~$E(P_z)\subseteq D'$ for some~$z\in V^*$ or~$e^*\in D'$.} 
We set~$D := D' \setminus E^* \cup \{e^*\}$. 
Note that since~$c(e)=c(e^*)$ for each edge~$e\in E^*$ we obtain~$c(D) \le d$. 
We show that~$D'$ is a solution of~$I$.
Assume towards a contradiction that there is an~$(s,t)$-cut~$A\subseteq E \setminus D$ in~$G$ with~$\omega(A) \le a$.
By definition of~$D$, we observe that~$A\subseteq E' \setminus D'$.
Moreover, since we obtained~$G'$ from~$G$ by adding~$\omega(e^*)-1$ paths consisting of the edges~$E^*$, and the fact that~$A$ is disjoint to~$E^*$, we conclude that~$A$ is an~$(s,t)$-cut in~$G'$ with~$\omega'(A) \le a$, a contradiction.

\textbf{Case 2:~$E(P_z)\nsubseteq D'$ for each~$z\in V^*$ and~$e^*\notin D'$.} 
We set~$D := D' \setminus E^*$. 
Note that~$c(D) \le d$. 
We show that~$D$ is a solution of~$I$.
Assume towards a contradiction that there is an~$(s,t)$-cut~$A\subseteq E \setminus D$ in~$G$ with~$\omega(A) \le a$.

If~$e^* \notin A$, then~$A$ is also an~$(s,t)$-cut disjoint to~$D'$ in~$G'$ with~$\omega'(A) \leq a$, a contradiction.
Otherwise,~$e^*\in A$.
Let~$A' := A \setminus \{e^*\} \cup \{\{v^*,z\}\mid z\in V^*\}$. 
Note that for each~$e\in E^*\cup\{e^*\}$ we have~$\omega(e)=1$ and that~$|V^*|=\omega(e^*)-1$.
Hence,~$A'$ is an~$(s,t)$-cut disjoint to~$D'$ in~$G'$ with~$\omega'(A') \le a$, a contradiction.
\lncsqed\end{proof}

Recall that we can assume~$c(e)\le d+1$ and~$\omega(e)\le a+1$ for each edge~$e\in E$. Hence, the subsequent application of Lemmas~\ref{lem to unit costs} and~\ref{lem to unit weights} leads to the following.

\begin{corollary}
\label{cor-weighted-to-unweigthed}
Let~$I=(G=(V,E),s,t,c,\omega, d,a)$ be an instance of~\WMCI.
Then, one can compute in $(n+a+d)^{\Oh(1)}$~time an equivalent instance~$I'=(G',s',t',d,a)$ of~\MCI.
\end{corollary}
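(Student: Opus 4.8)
The plan is to derive Corollary~\ref{cor-weighted-to-unweigthed} by chaining the two preceding transformation lemmas together with the normalising assumptions established in the basic observations. First I would recall that, as noted before Fact~\ref{Lemma: Compute Cut}, we may assume $c(e)\le d+1$ and $\omega(e)\le a+1$ for every edge $e\in E$, and also $d\le c(E)$ and $a\le\omega(E)$; in particular these bounds are all polynomial in $n+a+d$ after the weight-capping preprocessing, since $c(E)\le m(d+1)$ and $\omega(E)\le m(a+1)$. The strategy is: start from the \WMCI{} instance $I$, which is also a \ZMCI{} instance (the capacity range $\mathds{N}$ of \WMCI{} is contained in $\mathds{N}\cup\{0\}$, and the question is syntactically identical), apply Lemma~\ref{lem to unit costs} repeatedly to make all costs equal to $1$, then apply Lemma~\ref{lem to unit weights} repeatedly to make all capacities equal to $1$, and observe that the final instance has $c\equiv\omega\equiv 1$ and hence is an \MCI{} instance on a graph $G'$ with the same $s,t,d,a$.

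The key steps, in order, would be as follows. (1) Bound the number and size of the subdivisions: each application of Lemma~\ref{lem to unit costs} to an edge $e^*$ introduces $c(e^*)-1\le d$ new edges and runs in $\Oh(c(e^*)\cdot|I|)$ time; doing this for all original edges replaces each edge by a path of at most $d+1$ edges, so the resulting graph has $\Oh(m\cdot d)$ edges and the whole phase takes $(n+d)^{\Oh(1)}$ time; crucially all new edges have capacity equal to that of the edge they came from, hence still at most $a+1$. (2) Now apply Lemma~\ref{lem to unit weights} to every edge of capacity greater than $1$: each such application to $e^*$ adds $\omega(e^*)-1\le a$ two-edge paths, so it introduces $\Oh(a)$ new vertices and edges and runs in $\Oh(\omega(e^*)\cdot|I'|)$ time; importantly, the newly added edges all have capacity $1$ and cost equal to $c(e^*)=1$ (since after phase~(1) every edge has cost $1$), so they need no further processing, and the updated edge $e^*$ also gets capacity $1$. (3) Conclude that after both phases every edge has cost $1$ and capacity $1$, so the instance is exactly an \MCI{} instance $I'=(G',s,t,d,a)$ with $|V(G')|,|E(G')|$ polynomial in $n+a$, and it is equivalent to $I$ by transitivity of the equivalences guaranteed by Lemmas~\ref{lem to unit costs} and~\ref{lem to unit weights}. (4) Finally, total the running times: $(n+d)^{\Oh(1)}$ for phase~(1) plus $(n+d+a)^{\Oh(1)}$ for phase~(2) on an already polynomially-sized instance, giving the claimed $(n+a+d)^{\Oh(1)}$ bound.

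The only genuinely delicate point — and the one I would be most careful about — is the ordering of the two phases and the invariant it maintains: one must subdivide for costs \emph{first} and split for capacities \emph{second}, because Lemma~\ref{lem to unit weights} creates new edges whose cost equals $c(e^*)$, and we need that cost to already be $1$ so that no further cost-subdivision is triggered; conversely Lemma~\ref{lem to unit costs} creates edges of the same capacity as $e^*$, which is harmless because phase~(2) handles arbitrary capacities anyway. A secondary bookkeeping obstacle is verifying that $a$ and $d$ are genuinely unchanged by both lemmas (they are, by the statements) and that the instance size stays polynomial throughout, so that the $|I'|$ factor in the running time of phase~(2) does not blow up — this is where the capping $c(e)\le d+1$, $\omega(e)\le a+1$ and $d\le c(E)$, $a\le\omega(E)$ from the basic observations is essential. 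Beyond that, the argument is a straightforward induction on the number of edges with non-unit cost (respectively non-unit capacity), invoking the already-proved equivalence lemmas, so no further calculation is needed.
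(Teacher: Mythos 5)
Your proposal is correct and follows exactly the paper's intended argument: the paper derives this corollary in one sentence by noting that $c(e)\le d+1$ and $\omega(e)\le a+1$ may be assumed and then applying Lemma~\ref{lem to unit costs} followed by Lemma~\ref{lem to unit weights}, which is precisely your two-phase plan (costs first, then capacities), and your added bookkeeping on the ordering, the size bounds, and the running time is accurate.
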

\else

\begin{lemma}
\label{cor-weighted-to-unweigthed}
Let~$I=(G=(V,E),s,t,c,\omega, d,a)$ be an instance of~\WMCI.
Then, one can compute in $(n+a+d)^{\Oh(1)}$~time an equivalent instance~$I'=(G',s',t',d,a)$ of~\MCI.
\end{lemma}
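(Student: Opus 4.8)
The plan is to reduce \WMCI{} to \MCI{} in two stages, each eliminating one kind of weight, exactly as the two preceding lemmas prescribe. Given an instance $I=(G,s,t,c,\omega,d,a)$ of \WMCI{}, I would first invoke the basic observations from the preliminaries to assume that $c(e)\le d+1$ and $\omega(e)\le a+1$ for every edge $e\in E$; this bounds the "size" of every individual weight by a polynomial in the parameters we are allowed to spend. The first stage is to make all \emph{costs} equal to one by repeatedly applying Lemma~\ref{lem to unit costs}: for each edge $e^*$ with $c(e^*)>1$, replace it by a $(v^*,w^*)$-path of $c(e^*)$ unit-cost edges each carrying the capacity $\omega(e^*)$. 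Since $c(e^*)\le d+1$, each replacement blows up the instance by at most a factor of $d+1$, and doing this for all $m$ edges yields, in $(n+d)^{\Oh(1)}$ time, an equivalent instance in which $c\equiv 1$ but capacities may still exceed one.

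The second stage is to make all \emph{capacities} equal to one using Lemma~\ref{lem to unit weights}: for each remaining edge $e^*$ with $\omega(e^*)>1$, set its capacity to $1$ and add $\omega(e^*)-1$ parallel two-edge $(v^*,w^*)$-paths, all of whose edges have capacity $1$ and cost $c(e^*)$ (which is now $1$ after stage one, so the unit-cost property is preserved). Because $\omega(e^*)\le a+1$, each such replacement enlarges the instance by at most a factor $a+1$, so the whole stage runs in $(n+a+d)^{\Oh(1)}$ time and produces an instance where both $c\equiv 1$ and $\omega\equiv 1$, i.e.\ a genuine instance $I'=(G',s',t',d,a)$ of \MCI{} (with $s'=s$, $t'=t$, since these vertices are untouched). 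Correctness at every step is immediate from the two cited lemmas, which guarantee equivalence of the intermediate instances; chaining these equivalences gives $I$ is a yes-instance of \WMCI{} iff $I'$ is a yes-instance of \MCI{}.

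The only subtle point — and the one worth checking carefully — is the \emph{order} of the two stages and the resulting size bound. One must do the cost-reduction first: Lemma~\ref{lem to unit weights} introduces new edges whose cost equals $c(e^*)$, so if costs were still large at that point we would reintroduce large costs; by contrast, Lemma~\ref{lem to unit costs} introduces edges of cost $1$ whose capacity equals $\omega(e^*)$, which is fine since we handle capacities afterward, and that capacity is still bounded by $a+1$. The total blow-up is a product of at most one factor of $d+1$ and one factor of $a+1$ per original edge, keeping the final instance of size $(n+a+d)^{\Oh(1)}$ and hence the running time polynomial in $n+a+d$. Strictly speaking one should also remark that after stage one the preprocessing assumption $\omega(e)\le a+1$ still holds for every edge (the new edges inherit $\omega(e^*)\le a+1$), so Lemma~\ref{lem to unit weights} applies verbatim in stage two. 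With these remarks in place the statement follows.
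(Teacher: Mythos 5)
Your proof is correct and matches the paper's argument exactly: the paper obtains this statement precisely by the subsequent exhaustive application of Lemma~\ref{lem to unit costs} (costs first) and then Lemma~\ref{lem to unit weights} (capacities second), relying on the preprocessing bounds $c(e)\le d+1$ and $\omega(e)\le a+1$ to keep the blow-up polynomial in $n+a+d$, just as you do. The only quibble is that your claim that the cost-reduction \emph{must} come first is stronger than necessary---the edges introduced by the capacity gadget have cost $c(e^*)\le d+1$, so the reverse order would also terminate with the same polynomial bound---but the order you chose is the one the paper uses and the argument is complete.
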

\fi

The next definition will be a useful tool in several proofs in this work.

\begin{definition}\label{def:edge contract}
Let~$I=(G=(V,E),s,t,c,\omega,d,a)$ be an instance of~\WMCI, and let~$e=\{u,w\}\in E$.
The \emph{merge of~$u$ and~$w$ in~$I$} is the instance~$I'$ obtained from~$I$ by removing~$u$ and~$w$ from~$G$ and adding a new vertex~$v_{\{u,w\}}$ which is adjacent to~$N(\{u,w\})$.
The cost and capacity for each edge in~$E \cap E'$ are set to the corresponding cost and capacity in~$E$, and for each~$x\in N(\{u,w\})$,
\begin{itemize}
\item $c'(\{v_{\{u,w\}},x\}) = \min\{c(e') \mid e'\in E(x, \{u,w\})\}$, and
\item $\omega'(\{v_{\{u,w\}},x\}) = \sum_{e'\in E(x, \{u,w\})} \omega(e')$. 
\end{itemize}
\end{definition}

\begin{redrule}\label{rr: every edge in minimal cut}
If~$G$ contains an edge~$e^*=\{u^*,w^*\}\in E$ which is not contained in any inclusion-minimal~$(s,t)$-cut of capacity at most~$a$ in~$G$, then merge~$u^*$ and~$w^*$.
\end{redrule}

\begin{lemma}\label{lem: every edge in minimal cut}
Rule~\ref{rr: every edge in minimal cut} is safe and can be applied exhaustively in \iflong polynomial \else $n^{\Oh(1)}$ \fi time.
\end{lemma}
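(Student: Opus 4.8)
**Plan for proving Lemma~\ref{lem: every edge in minimal cut} (safety and polynomial running time of Rule~\ref{rr: every edge in minimal cut}).**

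The proof has two halves: safety and running time. For \emph{safety}, fix an instance $I$ with an edge $e^*=\{u^*,w^*\}$ not contained in any inclusion-minimal $(s,t)$-cut of capacity at most $a$, and let $I'$ be the result of merging $u^*$ and $w^*$. The plan is to set up a correspondence between $(s,t)$-cuts of $I$ and of $I'$ that preserves "having capacity at most $a$", and then lift it to solutions. The key observation to extract first is: \emph{no inclusion-minimal $(s,t)$-cut of capacity $\le a$ in $G$ contains $e^*$} implies that \emph{no $(s,t)$-cut $A$ of capacity $\le a$ in $G$ has $A\setminus\{e^*\}$ failing to be an $(s,t)$-cut} — equivalently, whenever $A$ is an $(s,t)$-cut of capacity $\le a$, so is $A\setminus\{e^*\}$. (If $A\setminus\{e^*\}$ were not an $(s,t)$-cut, some inclusion-minimal subcut of $A$ would have to use $e^*$, contradicting the hypothesis.) Hence we may restrict attention to $(s,t)$-cuts of capacity $\le a$ that avoid $e^*$; these are exactly the $(s,t)$-cuts of capacity $\le a$ in $G-e^*$. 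Now the merge operation, applied to the pair $u^*,w^*$, is precisely edge contraction of $e^*$ at the level of the underlying multigraph, with parallel edges to a common neighbor $x$ combined into one edge whose capacity is their sum and whose cost is their minimum; and $(s,t)$-cuts of capacity $\le a$ in $G-e^*$ correspond bijectively (via the obvious relabelling, summing capacities of parallel edges) to $(s,t)$-cuts of capacity $\le a$ in $G'$.

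With this correspondence in hand, I would prove "$D$ is a solution of $I$ $\iff$ the corresponding edge set $D'$ of $I'$ is a solution of $I$," where an edge of $G'$ incident to the merged vertex is protected iff \emph{some} edge it came from is protected — but one must be careful here: since capacities are summed but costs are minimized, the cleanest route is to first argue we may assume, for each neighbor $x$, that a minimum solution protects at most one edge in $E(x,\{u^*,w^*\})$ (namely a cheapest one), because protecting any edge of a parallel bundle removes that bundle from consideration in every surviving cut. Then a solution $D$ of $I$ maps to $D'$ by keeping edges not incident to $u^*,w^*$ and, for each neighbor $x$ with $D\cap E(x,\{u^*,w^*\})\neq\emptyset$, including the merged edge $\{v_{\{u^*,w^*\}},x\}$ (whose cost is $\le c$ of the protected edge, so $c'(D')\le c(D)$); conversely a solution $D'$ of $I'$ maps back by replacing each protected merged edge $\{v_{\{u^*,w^*\}},x\}$ with a cheapest edge in $E(x,\{u^*,w^*\})$ (same cost) and keeping everything else. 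Using the cut correspondence above, a surviving cut of capacity $\le a$ on one side yields one on the other, so the two instances are equivalent.

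For the \emph{running time}, the plan is: to test whether a \emph{given} edge $e^*=\{u^*,w^*\}$ lies in some inclusion-minimal $(s,t)$-cut of capacity $\le a$, note that $e^*$ lies in such a cut iff there is an $(s,t)$-cut $A$ of capacity $\le a$ with $e^*\in A$ and $A\setminus\{e^*\}$ not an $(s,t)$-cut; by submodularity/standard min-cut arguments this holds iff there is a minimum-capacity $(s,t)$-cut in the graph obtained by forcing $e^*$ to be cut (set $\omega(e^*)$ to, say, $0$, or equivalently contract its endpoints on the $s$-side vs.\ $t$-side appropriately) of total capacity $\le a - \omega(e^*)$; more concretely, $e^*$ is in some minimal cut of capacity $\le a$ iff the minimum $(s,t)$-cut capacity in $G$ that \emph{separates $u^*$ from $w^*$ in the cut} — which can be computed by two max-flow calls, one treating $u^*$ as on the $s$-side and $w^*$ on the $t$-side and vice versa, after zeroing $\omega(e^*)$ — is at most $a-\omega(e^*)$. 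Either way this is a constant number of max-flow computations per edge, so $n^{\Oh(1)}$ per edge; we scan all $m$ edges, apply one merge when we find a qualifying edge, and repeat. Each merge strictly decreases $|V|$, so there are at most $n$ merges, giving a total of $n^{\Oh(1)}$ time. Invoke Fact~\ref{Lemma: Compute Cut}-style min-cut computations for the per-edge test.

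The main obstacle I anticipate is the bookkeeping around the mismatch between \emph{summed} capacities and \emph{minimized} costs in the merge definition: the cut side of the argument is clean because contraction is standard, but one must verify carefully that protecting a parallel bundle can always be realized at the cost of a single cheapest edge and that this does not interact badly with which cuts survive. Getting that reduction-to-one-protected-edge-per-bundle lemma stated and checked is where the care goes; the rest is routine.
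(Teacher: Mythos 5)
Your safety argument is sound and follows essentially the same route as the paper's: the precondition of Rule~\ref{rr: every edge in minimal cut} is used exactly once, to show that any $(s,t)$-cut of capacity at most~$a$ may be replaced by one avoiding~$e^*$ (your ``key observation'' is the paper's corresponding claim in slightly different clothing), after which the merge is handled by the standard contraction correspondence between cuts, with a merged edge placed in the image of a cut only when its whole parallel bundle is cut (so summed capacities do not overcount) and a protected bundle realized by a single cheapest edge (so minimized costs do not increase). The only imprecision is the assertion that the relevant cuts ``are exactly the $(s,t)$-cuts of capacity $\le a$ in $G-e^*$'' and that these biject with cuts of~$G'$: a cut of $G-e^*$ need not cut~$G$, and a cut of $G-e^*$ separating $u^*$ from $w^*$ has no counterpart in~$G'$. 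The correspondence you actually use --- cuts of~$G$ that avoid~$e^*$ versus cuts of~$G'$ --- is the right one and is what the paper proves.

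The genuine gap is in the running-time half, namely your test for the precondition. Your reformulation ``$e^*$ lies in an inclusion-minimal $(s,t)$-cut of capacity $\le a$ iff there is an $(s,t)$-cut $A$ with $e^*\in A$, $\omega(A)\le a$, and $A\setminus\{e^*\}$ not an $(s,t)$-cut'' is correct, but the reduction of this to two max-flow computations with $u^*$ and $w^*$ forced onto opposite sides is not: putting $u^*$ on the $s$-side of the \emph{partition} does not guarantee that $u^*$ remains \emph{connected} to~$s$ once the cut is removed, and the latter is what is needed for $A\setminus\{e^*\}$ to fail to be a cut. Concretely, take the path $s,u^*,w^*,t$ with unit capacities, add the chord $\{u^*,t\}$ of capacity $a+1=6$, and set $a=5$. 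The inclusion-minimal $(s,t)$-cuts are $\{\{s,u^*\}\}$ (capacity~$1$) and $\{\{u^*,w^*\},\{u^*,t\}\}$ and $\{\{w^*,t\},\{u^*,t\}\}$ (capacity~$7$ each), so $e^*=\{u^*,w^*\}$ lies in no inclusion-minimal cut of capacity at most~$5$ and the rule must merge its endpoints. Yet after zeroing $\omega(e^*)$, the minimum cut separating $\{s,w^*\}$ from $\{t,u^*\}$ is $\{\{s,u^*\},\{w^*,t\}\}$ of capacity $2\le a-\omega(e^*)=4$, so your test reports that $e^*$ does lie in such a cut and never merges it; the rule is then not applied exhaustively, which is exactly what Corollary~\ref{cor: bounded tw} and Theorem~\ref{theo: FPT a} later rely on. (Each merge your algorithm \emph{does} perform is still safe, since false positives only suppress merges; the failure is exhaustiveness, not equivalence.) To be fair, the paper's own proof merely asserts that each application ``can be performed in polynomial time'' without exhibiting a test, so this is a point where both texts owe the reader an argument; but your proposed test is demonstrably wrong, and a correct one must certify that, after removing the candidate cut, $u^*$ still reaches~$s$ and $w^*$ still reaches~$t$ --- i.e., both sides of the inclusion-minimal cut induce connected subgraphs containing their terminal --- which is not a plain minimum-cut condition.
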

\iflong
\begin{proof}
Let~$I = (G=(V,E),s,t,c,\omega, d,a)$ be an instance of \WMCI and let~$I'=(G'=(V',E'),s',t',c',\omega',d,a)$ be the merge of~$u^*$ and~$w^*$ in~$I$.
We show that~$I$ and~$I'$ are equivalent instances of~\WMCI.

$(\Rightarrow)$
Let~$D\subseteq E$ be a solution of~$I$ of cost at most~$d$.

\begin{myclaim}\label{claim:unimportant edge not in solution}
The set~$D^* := D\setminus \{e^*\}$ is a solution of~$I$. 
\end{myclaim}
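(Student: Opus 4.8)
\textbf{Proof plan for Claim~\ref{claim:unimportant edge not in solution}.}
The plan is to argue by contradiction: assume that~$D^* = D \setminus \{e^*\}$ is not a solution of~$I$ and derive that~$D$ itself fails to be a solution. So suppose there is an~$(s,t)$-cut~$A \subseteq E \setminus D^*$ in~$G$ with~$\omega(A) \le a$. Among all such cuts, I would pick one that is inclusion-minimal; this is possible because any subset of~$A$ that is still an~$(s,t)$-cut has capacity at most~$\omega(A) \le a$ (capacities are nonnegative), and it is still contained in~$E\setminus D^*$, so an inclusion-minimal~$(s,t)$-cut of capacity at most~$a$ lying inside~$A$ exists.

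The key step is then to invoke the precondition of Rule~\ref{rr: every edge in minimal cut}: since~$e^* = \{u^*,w^*\}$ is not contained in any inclusion-minimal~$(s,t)$-cut of capacity at most~$a$ in~$G$, the chosen minimal cut~$A$ satisfies~$e^* \notin A$. Combining this with~$A \subseteq E\setminus D^*$ gives~$A \cap (D^* \cup \{e^*\}) = \emptyset$; but~$D^* \cup \{e^*\} = D$, so~$A \subseteq E \setminus D$. Thus~$A$ is an~$(s,t)$-cut disjoint from~$D$ with~$\omega(A) \le a$, contradicting the assumption that~$D$ is a solution of~$I$. Hence~$D^*$ is a solution of~$I$.

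I do not expect a genuine obstacle here; the only point that needs a word of care is the reduction to an inclusion-minimal cut before applying the hypothesis on~$e^*$, since without minimality the precondition of the rule says nothing. Note also that~$c(D^*) \le c(D) \le d$, so the cost bound is trivially preserved, which is why dropping~$e^*$ from the solution is harmless. After this claim, the remainder of the safeness proof of Rule~\ref{rr: every edge in minimal cut} will transfer such an~$e^*$-free solution of~$I$ to a solution of the merged instance~$I'$ (and conversely), using Definition~\ref{def:edge contract} to translate cuts across the vertex identification.
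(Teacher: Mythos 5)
Your proposal is correct and follows essentially the same route as the paper's proof: pass to an inclusion-minimal $(s,t)$-cut of capacity at most~$a$ inside~$E\setminus D^*$, use the precondition of Rule~\ref{rr: every edge in minimal cut} to conclude~$e^*\notin A$, and hence~$A\subseteq E\setminus D$, contradicting that~$D$ is a solution. Your explicit justification of why an inclusion-minimal cut can be chosen is a slight (and welcome) elaboration of a step the paper leaves implicit.
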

\begin{claimproof}
Assume towards a contradiction that~$D^*$ is not a solution of~$I$.
Then, there is an inclusion-minimal~$(s,t)$-cut~$A\subseteq E\setminus D^*$ of capacity at most~$a$ in~$G$. 
By the condition of Rule~\ref{rr: every edge in minimal cut}, it holds that~$e^* \not \in A$.
Note that~$A$ avoids~$D^*$. 
This contradicts the fact that~$D^*$ is a solution.  
\end{claimproof}

Due to Claim~\ref{claim:unimportant edge not in solution}  we can assume that~$e^* \not \in D$.
We set~$D':= (D \cap E') \cup \{\{v_{e^*}, x\}\mid \{u^*, x\}\in D \text{ or } \{w^*, x\}\in D\}$.
By definition of~$c'$ it follows that~$D'$ has cost at most~$c(D)$.
Hence, it remains to show that~$D'$ is a solution of~$I'$.

Assume towards a contradiction that~$D'$ is not a solution of~$I'$. 
Then, there is an~$(s',t')$-cut~$A'\subseteq E' \setminus D'$ of capacity at most~$a$ in~$G'$.
We set~$A := (A' \cap E) \cup \{e\in E(x, e^*)\mid \{v_{e^*}, x\} \in A'\}$.
Note that~$A \subseteq E \setminus D$.
By definition of~$\omega'$, we obtain that~$\omega(A) = \omega'(A') \leq a$.
Since~$\{x,v_{e^*}\} \in A'$ if and only if~$E(x, e^*) \subseteq A$, and~$A$ and~$A'$ agree on~$E\cap E'$, we obtain that~$A$ is an~$(s,t)$-cut in~$G$ which contradicts the fact that~$D$ is a solution of~$I$.
Consequently,~$I$ is a yes-instance of~\WMCI.

$(\Leftarrow)$
Let~$D'\subseteq E'$ be a solution of~$I'$ of cost at most~$d$.
We set~$D := (D' \cap E) \cup \{e_x \mid \{v_{e^*}, x\} \in D'\}$, where~$e_x$ is an edge in~$E(x, e^*)$ with minimal cost.
By definition of~$c'$ it follows that~$c(D) \leq c'(D')$.
It remains to show that~$D$ is a solution of~$I$.

\todom{evtl ohne widerspruch beweisen?}
Assume towards a contradiction that~$D$ is not a solution of~$I$.
Then, there is an~$(s,t)$-cut~$A^*\subseteq E \setminus D$ of capacity at most~$a$ in~$G$.
Since~$e^*$ is not contained in any inclusion-minimal~$(s,t)$-cut of capacity at most~$a$, there is an inclusion-minimal~$(s,t)$-cut~$A\subseteq A^* \setminus \{e^*\}$ of capacity at most~$a$ in~$G$.
We set~$A' := (E' \cap A) \cup \{\{v_{e^*}, x\} \in E'\mid E(x, e^*) \subseteq A\}$.
By definition of~$\omega'$, we obtain~$\omega'(A') \leq \omega(A) \leq a$.
Since~$\{x,v_{e^*}\} \in A'$ if and only if~$E(x, e^*) \subseteq A$ and~$A$ and~$A'$ agree on~$E\cap E'$, we obtain that~$A'$ is an~$(s',t')$-cut in~$G$ which contradicts the fact that~$D'$ is a solution of~$I'$.
Consequently,~$I$ is a yes-instance of~\WMCI.

It remains to bound the running time. 
Each application of Rule~\ref{rr: every edge in minimal cut} reduces the number of vertices by one, and each such application can be performed in polynomial time, we obtain that, Rule~\ref{rr: every edge in minimal cut} can be exhaustively applied in polynomial time.
\lncsqed\end{proof}
\fi

\section{NP-hardness and Parameterization by the Defender Budget~$d$}

In this section we prove that \MCI is NP-hard and we analyze parameterization by~$d$ and~$\Delta(G)$.
In particular, we provide a complexity dichotomy for~$\Delta(G)$.

\begin{lemma}\label{lem: np hard}
\WMCI{} is \NP-complete and \W1-hard when parameterized by~$d$ even if~$G$ is bipartite,~$\omega(e) = 1$, and~$c(e) \in\Oh(|G|)$ for all~$e\in E$.
\end{lemma}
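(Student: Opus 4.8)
The plan is to reduce from a W[1]-hard problem whose natural parameter will play the role of the defender budget~$d$. A natural choice is \textsc{Clique} (or equivalently \textsc{Multicolored Clique}), parameterized by the solution size~$k$: the defender budget~$d$ should be a function of~$k$ only, so that a W[1]-hardness transfer goes through, and simultaneously the construction must be polynomial in the size of the \textsc{Clique} instance, which forces~$c(e)\in\Oh(|G|)$ rather than exponential edge costs. Since we already have Corollary~\ref{cor-weighted-to-unweigthed} (or Lemma~\ref{lem to unit costs}) to convert polynomially-bounded costs into unit costs by subdividing edges, it suffices to build a \WMCI-instance with unit capacities, polynomially-bounded integer costs, and a bipartite graph; then the final \MCI-style/unit-cost statement follows, and bipartiteness is preserved by even-length subdivision (one must be mildly careful to subdivide with an even number of edges, or argue bipartiteness directly).

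\textbf{Construction.} From a graph~$H$ with vertex set~$V(H)$ and edge set~$E(H)$ and parameter~$k$, I would build a layered graph between~$s$ and~$t$. The idea is that the defender, with budget roughly~$\binom{k}{2}$ or~$k + \binom{k}{2}$, must ``protect'' a selection of~$k$ vertex-gadgets and~$\binom{k}{2}$ edge-gadgets of~$H$; the attacker, with a cut budget~$a$, is looking for a cheap small-capacity $(s,t)$-cut, and such a cut exists unless the protected edges exactly certify that the chosen~$k$ vertices form a clique (equivalently: every one of the~$\binom{k}{2}$ chosen pairs is an edge of~$H$, and the chosen edge-gadgets are consistent with the chosen vertex-gadgets). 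Concretely, I would make, for each vertex~$v\in V(H)$, a vertex-path from~$s$ to~$t$, and for each edge~$\{u,v\}\in E(H)$ an edge-path; cheap ``bottleneck'' edges of capacity~$1$ on these paths are the only edges worth protecting, while all structural connector edges get a prohibitively large capacity (namely~$a+1$) so that they never appear in a minimum cut and never need protecting. The attacker budget~$a$ and the incidence structure are tuned so that: (i) an unprotected vertex-path contributes a capacity-$1$ cut edge, so to block all cheap cuts the defender must protect essentially all but~$k$ of the vertex-paths — forcing a choice of~$k$ vertices; and (ii) the edge-paths are wired through both endpoints' gadgets so that a protected edge-path only ``helps'' if both its endpoint vertex-paths are among the unprotected~$k$, which forces the~$k$ chosen vertices to be pairwise adjacent. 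Bipartiteness is arranged by making every $s$–$t$ path have even length (subdivide connectors as needed).

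\textbf{Correctness.} Forward direction: given a $k$-clique~$K$ in~$H$, the defender protects every vertex-path except the~$k$ paths of~$K$ and protects the~$\binom{k}{2}$ edge-paths inside~$K$; a counting/Menger argument shows every remaining $(s,t)$-cut avoiding~$D$ must pick up capacity more than~$a$, because each of the~$k$ unprotected vertex-paths still needs a cut edge but the ``shared'' cheap edges that could have been reused are exactly the protected edge-paths of~$K$. Backward direction: given a solution~$D$ of cost at most~$d$, first apply Reduction Rule~\ref{rr: every edge in minimal cut} / the basic observations to assume~$D$ only contains the capacity-$1$ bottleneck edges; a budget argument shows~$D$ protects all but exactly~$k$ vertex-paths and at most~$\binom{k}{2}$ edge-paths; and if some unprotected pair of vertex-paths were non-adjacent in~$H$, one exhibits an explicit cheap $(s,t)$-cut of capacity~$\le a$ disjoint from~$D$, a contradiction — so the~$k$ unprotected vertices form a clique. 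Finally invoke Corollary~\ref{cor-weighted-to-unweigthed} to reach unit costs, and note NP-completeness follows since \WMCI{} is in NP (a solution~$D$ plus the ability to verify in polynomial time, via Fact~\ref{Lemma: Compute Cut}, that no cheap cut avoids~$D$).

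\textbf{Main obstacle.} The delicate part is the simultaneous tuning of the attacker budget~$a$, the bottleneck capacities, and the incidence wiring so that the three ``forcing'' properties hold exactly: (a) the defender is forced to leave exactly~$k$ vertex-paths unprotected (not fewer, not more — this needs the budget~$d$ and the per-path cut cost to be matched precisely), (b) protecting an edge-path is worthless unless both endpoint vertex-paths are unprotected, and (c) when the~$k$ unprotected vertices do form a clique, the residual minimum cut capacity strictly exceeds~$a$. Getting (b) right is typically the crux: it usually requires routing each edge-path so that it passes through capacity-$1$ segments of both endpoint vertex-gadgets, and then a careful case analysis of how a minimum cut can cheat by cutting an edge-path in the ``middle'' instead. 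I expect to spend most of the proof verifying the backward direction's cut-exhibition lemma under all the ways the adversarial defender might try to allocate its budget.
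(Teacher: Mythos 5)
There is a genuine gap, and it is concrete: your proposed defender strategy breaks the parameter bound that the \W1-hardness transfer requires. In your correctness sketch the defender ``protects every vertex-path except the~$k$ paths of~$K$ and protects the~$\binom{k}{2}$ edge-paths inside~$K$''. Since every edge has cost at least one, protecting all but~$k$ of the~$|V(H)|$ vertex-paths costs at least~$|V(H)|-k$, so the defender budget~$d$ of the constructed instance is~$\Omega(|V(H)|)$ and not a function of~$k$ alone. This contradicts your own earlier statement that~$d$ is ``roughly~$\binom{k}{2}$ or~$k+\binom{k}{2}$'', and it means the reduction is not a parameterized reduction with respect to~$d$: \W1-hardness of \textsc{Clique} in~$k$ would transfer nothing about \WMCI{} parameterized by~$d$. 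Beyond this, the proposal never actually resolves what you yourself identify as the crux (the simultaneous tuning of~$a$, the capacities, and the wiring so that exactly~$k$ vertex-paths stay unprotected and edge-paths only ``pay off'' for adjacent pairs), so even setting the budget issue aside there is no verifiable argument yet.

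For comparison, the paper's proof sidesteps exactly this trap by reducing from \IS{} on $r$-regular graphs and having the defender protect only the~$k$ edges~$\{s,v\}$ for~$v$ in the independent set, so~$d=k$; everything else is absorbed into the attacker budget~$a=n+kr-1$. The construction attaches to each vertex~$v$ a private path~$(s,v,v',t)$ and to each edge~$e=\{u,v\}$ a vertex~$w_e$ with edges to~$u$,~$v$, and~$t$; protecting~$\{s,v\}$ buys~$r+1$ $(s,t)$-paths through~$v$ that pairwise share only the protected edge, and independence of the chosen set guarantees these path bundles do not collide at any~$w_e$. Regularity makes the count exact, and all non-$s$-edges get cost~$k+1$ to force the solution onto the~$s$-edges. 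If you want to salvage your clique-based layout, you would have to invert the roles so that the defender selects the~$k$ clique vertices (and the~$\binom{k}{2}$ clique edges) rather than deselecting the~$n-k$ non-clique vertices, which essentially lands you back at a selection-type reduction of the paper's flavor.
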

\begin{proof}
We describe a parameterized reduction from a variant of \IS which is known to be \W1-hard when parameterized by~$k$~\cite{CFK+15,DF13}. 

\prob{Regular-\IS}{An~$r$-regular graph~$G=(V,E)$ for some integer~$r$ and an integer~$k$.}{Is there an independent set~$S \subseteq V$ of size at least~$k$ in~$G$?}

Let~$I:=(G=(V,E),k)$ be an instance of \textsc{$r$-Regular-\IS}. 
We describe how to construct an instance~$I':=(G'=(V',E'),s,t,c,\omega,d,a)$ of~\WMCI in polynomial time such that~$I$ is a yes-instance of \textsc{Regular-\IS} if and only if~$I'$ is a yes-instance of~\WMCI.
 
We start with an empty graph~$G'$ and add all vertices of~$V$ to~$G'$.
For each vertex~$v\in V$ we also add an additional vertex~$v'$.
Furthermore, for each edge~$e\in E$ we add a vertex~$w_e$, and two new vertices~$s$ and~$t$ to~$G'$.
Moreover, we add the edges~$\{s,v\}$,~$\{v,v'\}$ and~$\{v',t\}$ to~$G'$ for each vertex~$v\in V$.
Next, we add the edges~$\{u,w_e\},\{v,w_e\},$ and~$\{w_e,t\}$ to~$G'$ for each edge~$e=\{u,v\}\in E$.
Now, we set~$\omega(e') := 1$ for all~$e'\in E'$.
Furthermore, for each~$e'\in E'$, we set~$c(e') := 1$ if~$s\in e'$ and~$c(e') := k+1$ otherwise.
Finally, we set~$d:= k$ and~$a:=n + kr-1$ where~$n:= |V|$.
This completes the construction of~$I'$.
Observe that~$G'$ is bipartite with one partite set being~$\{t\}\cup V$.
Note that only the edges incident with~$s$ can be protected, since all other edges have cost exactly~$d+1$.

Next, we show that~$I$ is a yes-instance of \textsc{Regular-\IS} if and only if~$I'$ is a yes-instance of~\WMCI.
 
 $(\Rightarrow)$ 
 Let~$S\subseteq V$ be an independent set of~$G$ of size exactly~$k=d$.
 We set~$D':=\{\{s,v\}\mid v\in S\}$.
 Note that~$D'$ has cost exactly~$d$.
 It remains to show that~$D'$ is a solution of~$I'$.
 To this end, we provide~$a+1$ many paths whose edge sets may only intersect in~$D'$.
 
Note that for each vertex~$v\in V\setminus S$ we have a path~$(s,v,v',t)$.
These are~$n-k$ many.
Next, consider a vertex~$v\in S$.
Observe that~$(s,v,v',t)$ and~$\{(s,v,w_e,t) \mid e\in E, v\in e\}$ are~$r+1$ paths only sharing the edge~$\{s,v\}\in D'$.
Since~$|S|=k$ and~$G$ is~$r$-regular, these are~$kr+k$ many paths.
Moreover, since~$S$ is an independent set no two vertices~$u,v\in S$ have a common neighbor~$w_e$ in~$G'$ for~$e=\{u,v\}$.
Hence, there are~$n-k+kr+k=n+kr=a+1$ many~$(s,t)$-paths in~$G'$ whose edge sets only intersect in~$D'$.
 
$(\Leftarrow)$
Suppose that~$I'$ is a yes-instance of~\WMCI.
Let~$D'$ be a solution with cost at most~$d$ of~$I'$.
Recall that~$c(e)=d+1$ for each edge~$e'\in E'$ with~$s\notin e'$.
Hence,~$D'\subseteq \{\{s,v\}\mid v\in V\}$.  
If~$|D'|< d$, then we add exactly~$d-|D'|$ many edges of the form~$\{s,v\}$ which are not already contained in~$D'$ to~$D'$. 
Note that~$D'$ remains a solution of~$I'$.
Thus, in the following we can assume that~$|D'|=d=k$.
Let~$S:=\{v\mid \{s,v\}\in D'\}$. 
We prove that~$S$ is an independent set in~$G$.

Assume towards a contradiction that~$S$ is no independent set in~$G$ and let~$e^*$ be an edge of~$G[S]$.
In the following, we construct an~$(s,t)$-cut~$A\subseteq (E'\setminus D')$ in~$G'$ of size at most~$a$.
Let~$A_{V\setminus S}:=\{\{s,v\}\mid v\notin S\}$,~$A_S:=\{\{v,v'\}\mid v\in S\}$, and~$A_E:=\{\{v, w_e\}\in E'\mid v\in S, e \neq e^*\}$.
We show that~$A:=A_{V\setminus S}\cup A_S\cup A_E \cup \{\{w_{e^*}, t\}\}$ is an~$(s,t)$-cut of size at most~$a$ in~$G'$.
Note that~$|A_{V\setminus S}| + |A_{S}|= n$.
Moreover, since~$|S| = k$ and each vertex~$v\in V$ has degree exactly~$r$ in~$G$,~$|A_E| \le kr-2$.
Hence,~$A$ has capacity at most~$n + kr -1 = a$ since~$\omega(e') = 1$ for each~$e'\in E'$.
It remains to show that~$A$ is an~$(s,t)$-cut in~$G'$. 
Let~$G^* := G' - A$.
Note that~$N_{G^*}(s)=S$ and~$N_{G^*}(v)=\{s\}$ for each~$v\in S \setminus e^*$.
Moreover, note that~$N_{G^*}(v)=\{s,w_{e^*}\}$ for each~$v\in e^*$ and~$N_{G^*}(w_{e^*}) = e^*$.
Hence,~$A$ is an~$(s,t)$-cut in~$G'$ with capacity at most~$a$. A contradiction.

Consequently,~$S$ is an independent set of size~$k$ in~$G$ and, therefore,~$I$ is a yes-instance of \textsc{Regular-\IS}. 
\lncsqed\end{proof}

By applying~\Cref{cor-weighted-to-unweigthed}, we can extend the hardness results to \MCI.
\iflong
Note that if~$k$ is odd,~\Cref{cor-weighted-to-unweigthed} replaces an edge with costs~$k+1$ by a path of even length and thus the resulting instance of \MCI is not bipartite.
Hence, to obtain \W1-hardness in case of odd~$k$, we set~$c(e)=k+2$ for edges not containing~$s$.

\fi

\begin{corollary}\label{Satz:WmciNPh}
\MCI{} is \NP{}-complete and~\W1-hard when parameterized by~$d$, even on bipartite graphs. 
\end{corollary}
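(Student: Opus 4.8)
The plan is to obtain \Cref{Satz:WmciNPh} as a direct consequence of \Cref{lem: np hard} together with \Cref{cor-weighted-to-unweigthed}. By \Cref{lem: np hard}, \WMCI{} is \NP-complete and \W1-hard with respect to~$d$ even on bipartite instances in which~$\omega(e)=1$ for all edges and~$c(e)\in\Oh(|G|)$ for all edges. Since \Cref{cor-weighted-to-unweigthed} transforms any \WMCI{} instance into an equivalent \MCI{} instance in $(n+a+d)^{\Oh(1)}$ time, and on the instances produced by \Cref{lem: np hard} we have~$a\in\Oh(|G|)$ and~$d=k\le|G|$, this transformation runs in polynomial time and preserves the value of the parameter~$d$. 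Hence it is both a polynomial-time many-one reduction (giving \NP-hardness, while membership in \NP{} is clear since a solution~$D$ together with \Cref{Lemma: Compute Cut} gives a polynomial-time verifier) and a parameterized reduction for parameter~$d$ (giving \W1-hardness).

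The one point that needs care is the claim that the resulting \MCI{} instance can be taken to be bipartite. The reduction underlying \Cref{cor-weighted-to-unweigthed} subdivides every edge of cost~$c(e)$ into a path with~$c(e)$ edges (Lemma~\ref{lem to unit costs}) and replaces capacities analogously; in the instance built in \Cref{lem: np hard} the only non-unit costs are the edges not incident with~$s$, which all have cost~$k+1$. Subdividing an edge into a path of~$k+1$ edges inserts~$k$ new internal vertices, which preserves bipartiteness exactly when~$k+1$ is even, i.e.\ when~$k$ is odd this would turn an odd number of edges between the two sides into an even number and destroy bipartiteness. This is the only obstacle, and it is handled by a trivial tweak: when~$k$ is odd, run the reduction of \Cref{lem: np hard} with cost~$k+2$ instead of~$k+1$ on the edges not containing~$s$; the argument of \Cref{lem: np hard} is unchanged since all that mattered was that those costs exceed~$d=k$, and now every subdivided edge is split into an odd number of edges, so the graph stays bipartite. (The capacity-to-unary replacement of Lemma~\ref{lem to unit weights} does not arise here, since all capacities are already~$1$.)

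Thus the proof is essentially one line of bookkeeping: invoke \Cref{lem: np hard} (with the cost~$k+1$ or~$k+2$ depending on the parity of~$k$), apply \Cref{cor-weighted-to-unweigthed}, check that the running time is polynomial and the parameter~$d$ is unchanged, and observe that the resulting \MCI{} instance is bipartite. The main "obstacle" — really just a subtlety — is the parity issue with the subdivision lengths, which the excerpt already flags in the text preceding the corollary statement; everything else is immediate from the two cited results.
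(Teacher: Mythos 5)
Your proposal is correct and follows exactly the paper's route: invoke Lemma~\ref{lem: np hard}, push the instance through the transformation of \Cref{cor-weighted-to-unweigthed}, and repair the parity of the subdivision lengths by using cost~$k+2$ instead of~$k+1$ when~$k$ is odd --- which is precisely the fix the paper itself applies. The only blemish is the clause ``preserves bipartiteness exactly when~$k+1$ is even,'' which should read \emph{odd} (a path replacing an edge must have odd length to keep the graph bipartite); your subsequent conclusion and fix are consistent with the correct parity, so this is just a slip of wording.
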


Next, we provide a complexity dichotomy for the classical complexity with respect to the maximum degree of the graph.

\iflong
\begin{lemma}\label{lem-poly-max-deg-2} \ZMCI{} can be solved in \iflong polynomial \else $n^{\Oh(1)}$ \fi time on graphs of maximum degree~two.
\end{lemma}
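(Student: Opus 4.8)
The plan is to exploit that a connected graph of maximum degree at most two is either a simple path or a simple cycle; in both cases the inclusion-minimal $(s,t)$-cuts are so simple that an optimal defender set is described by a single guessed threshold. Since $\omega$ is nonnegative, $D$ is a solution if and only if every inclusion-minimal $(s,t)$-cut of capacity at most $a$ meets $D$ (any $(s,t)$-cut of capacity at most $a$ contains an inclusion-minimal one of capacity at most $a$). If $G$ is a path, let $Q$ be the unique $(s,t)$-path in $G$; its inclusion-minimal $(s,t)$-cuts are exactly the singletons $\{e\}$ with $e\in E(Q)$, so $D$ is a solution if and only if $D\supseteq\{e\in E(Q)\mid\omega(e)\le a\}$, and $I$ is a yes-instance if and only if $\sum_{e\in E(Q),\,\omega(e)\le a}c(e)\le d$, which is checkable in $n^{\Oh(1)}$ time.

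Next I would treat the case that $G$ is a cycle. Then the edge set of $G$ partitions into the edge sets of two $(s,t)$-paths $P_1$ and $P_2$ sharing only the vertices $s$ and $t$, and the inclusion-minimal $(s,t)$-cuts of $G$ are exactly the pairs $\{e_1,e_2\}$ with $e_1\in E(P_1)$ and $e_2\in E(P_2)$. Hence $D$ is a solution if and only if it meets every \emph{bad pair}, i.e., every pair $\{e_1,e_2\}$ with $e_1\in E(P_1)$, $e_2\in E(P_2)$, and $\omega(e_1)+\omega(e_2)\le a$. The key step is to show that some minimum solution has, for a suitable $\mu\in\{\omega(e)\mid e\in E(P_1)\}\cup\{\infty\}$, the form
\[ D_\mu := \{e\in E(P_1)\mid\omega(e)<\mu\}\ \cup\ \{e\in E(P_2)\mid\omega(e)\le a-\mu\}, \]
with the convention $D_\infty = E(P_1)$.

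Two observations give this. First, every $D_\mu$ is a solution: if a bad pair $\{e_1,e_2\}$ avoided $D_\mu$, then $\omega(e_1)\ge\mu$ (as $D_\mu$ contains all $P_1$-edges of capacity below $\mu$), so $\omega(e_2)\le a-\omega(e_1)\le a-\mu$, so $e_2\in D_\mu$, a contradiction. Second, given any minimum solution $D$, set $\mu$ to the smallest capacity of an edge of $E(P_1)\setminus D$, or $\mu=\infty$ if $E(P_1)\subseteq D$; then $D_\mu\subseteq D$, because all $P_1$-edges of capacity below $\mu$ lie in $D$ by the choice of $\mu$, and every $e_2\in E(P_2)$ with $\omega(e_2)\le a-\mu$ together with a lightest edge of $E(P_1)\setminus D$ forms a bad pair and hence also lies in $D$ (this is immediate when $\mu=\infty$). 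Therefore $c(D_\mu)\le c(D)$, so by minimality $D_\mu$ is itself a minimum solution.

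Given this, the algorithm enumerates the at most $|E(P_1)|+1$ candidate values of $\mu$, computes $c(D_\mu)$ for each in $\Oh(n)$ time, and returns yes if and only if $\min_\mu c(D_\mu)\le d$; this is correct because the $D_\mu$ of minimum cost is a minimum solution, and the whole procedure runs in $n^{\Oh(1)}$ time. I expect the only genuine obstacle to be the exchange argument above — realizing that it suffices to guess the lightest edge that the defender leaves unprotected on one of the two $(s,t)$-paths of the cycle; recognizing the path/cycle structure, describing the inclusion-minimal cuts, and bounding the running time are routine. (Alternatively, one may note that the bad pairs induce a bipartite chain graph on $E(P_1)\cup E(P_2)$ and invoke polynomial-time minimum-weight vertex cover on chain graphs, but the direct argument is self-contained.)
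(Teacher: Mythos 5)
Your proof is correct, and it follows the paper's decomposition exactly up to the last step: connectedness plus maximum degree two gives a path or a cycle; on a path the forced edges are precisely the $P$-edges of capacity at most $a$; on a cycle a set $D$ is a solution if and only if it hits every ``bad pair'' $\{e_1,e_2\}$ with $e_1\in E(P_1)$, $e_2\in E(P_2)$, and $\omega(e_1)+\omega(e_2)\le a$. Where you diverge is in how that covering problem is solved. The paper observes that the bad pairs form a bipartite graph on $E(P_1)\cup E(P_2)$ and invokes polynomial-time minimum-weight vertex cover on bipartite graphs as a black box. You instead prove an exchange lemma: some minimum solution is of the form $D_\mu=\{e\in E(P_1)\mid\omega(e)<\mu\}\cup\{e\in E(P_2)\mid\omega(e)\le a-\mu\}$ for one of at most $|E(P_1)|+1$ thresholds $\mu$, and both directions of that lemma (each $D_\mu$ is a solution; the $\mu$ induced by a minimum solution $D$ satisfies $D_\mu\subseteq D$) check out, including the $\mu=\infty$ boundary case. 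Your route is more self-contained and elementary, and as you note it implicitly exploits that the bad-pair graph is a chain graph; the paper's route is shorter to state but leans on an external algorithmic result. Both yield the claimed polynomial running time, so this is a stylistic rather than substantive difference.
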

\begin{proof}
Let~$I=(G=(V,E),s,t,c,\omega,d,a)$ be an instance of~\ZMCI where~$G$ has degree at most two.
Recall that we can assume without loss of generality that~$G$ is connected.
Observe that since~$G$ has degree at most two,~$G$ is either a path or a cycle.

\textbf{~$G$ is a path.} 
Let~$P$ be the unique~$(s,t)$-path in~$G$ and let~$E_A := \{e_i \in E(P) \mid \omega(e_i) \leq a\}$ be the set of edges of capacity at most~$a$.
Since~$\{e_i\}$ is an~$(s,t)$-cut of capacity at most~$a$ in~$G$ for every~$e_i\in E_A$, we conclude that~$E_A$ is a subset of every solution of~$I$.
Consequently,~$I$ is a yes-instance of~\ZMCI if and only if~$d \geq c(E_A)$, since every~$(s,t)$-cut~$M \subseteq E \setminus E_A$ has capacity larger than~$a$.

\textbf{~$G$ is a cycle.} 
Let~$P_1$ and~$P_2$ be the unique~$(s,t)$-paths in~$G$.
Moreover, let~$E_A := \{\{e^1_i, e^2_j\}\mid e^1_i \in E(P_1), e^2_j \in E(P_2), \omega(e^1_i) + \omega(e^2_j) \leq a\}$ be the set of minimal~$(s,t)$-cuts of capacity at most~$a$ in~$G$.
Note that every other~$(s,t)$-cut of capacity at most~$a$ is a superset of any~$(s,t)$-cut in~$E_A$.
Hence,~$I$ is a yes-instance of \ZMCI if and only if there is a set~$S\subseteq E(P_1) \cup E(P_2)$ with~$c(S) \leq d$ such that~$S \cap \mathbf{e} \neq \emptyset$ for all~$\mathbf{e}\in E_A$.
This is equivalent to the question if the graph~$G'$ with bipartition~$(E(P_1), E(P_2))$  and edges~$E_A$ has a vertex cover of capacity at most~$d$ with~$c$ as the capacity function.
This can be done in polynomial time.

Consequently, \ZMCI can be solved in polynomial time on graphs of degree at most two.
\lncsqed\end{proof}
\fi
  
\iflong  
\begin{lemma}\label{lem-wcmi-np-h-wrt-d-in-subcubic}\WMCI{} is \NP-hard and \W1-hard when parameterized by~$d$ even on subcubic graphs and even if~$c(e) = 1$ and~$\omega(e) \in\Oh(|G|)$ for all~$e\in E$.
\end{lemma}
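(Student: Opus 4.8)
The goal is to transfer the hardness from Lemma~\ref{lem: np hard} into the subcubic world while keeping the defender budget~$d$ as the parameter. The natural approach is to take the instance~$I'=(G',s,t,c,\omega,d,a)$ produced there and replace every high-degree vertex by a gadget of bounded degree, taking care that (i) the cost function becomes all-ones, (ii) the capacity function stays polynomially bounded (in fact, can be made all-ones at the cost of a polynomial blow-up via Lemma~\ref{lem to unit weights}), and (iii) the parameter~$d$ grows only by a function of~$d$ (ideally stays exactly~$k$, or at most a polynomial in~$k$). I would not re-run the \IS reduction from scratch; instead I would phrase the construction as a sequence of local modifications to~$G'$.

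\textbf{Key steps.} First, recall from the proof of Lemma~\ref{lem: np hard} that in~$G'$ the only vertices of high degree are~$s$ (degree~$n$), ~$t$ (degree~$n+m$), and the~$w_e$'s and~$v$'s which have bounded degree already. So the two offenders are~$s$ and~$t$. The standard fix is to replace the star at~$s$ by a balanced binary tree~$T_s$ rooted at~$s$ whose leaves are the vertices~$v\in V$; symmetrically replace the star at~$t$ by a binary tree~$T_t$. Each internal node then has degree~$3$. The delicate point is the cost/capacity bookkeeping on these trees: edges of the $s$-tree played the role of the ``protectable'' edges (cost~$1$) and each~$\{s,v\}$ could be protected individually; after subdividing into a tree, protecting the single tree-edge incident to leaf~$v$ still isolates exactly the $(s,v)$ connection, so correctness is preserved, but I must make sure that the defender cannot get a better deal by protecting an internal tree edge (which would ``cover'' many leaves at once). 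To prevent this, I would blow up the capacity of internal tree edges, or, more cleanly, replace each internal tree edge by~$a+1$ parallel length-two paths (subdivided to stay simple and subcubic — this needs a little gadgeting, e.g. a ``thick edge'' gadget that is itself a subcubic graph of large min-cut), so that no minimum cut of capacity~$\le a$ ever uses an internal tree edge; then only leaf edges are relevant and the argument of Lemma~\ref{lem: np hard} goes through verbatim with~$d=k$. The~$t$-side tree needs no protection consideration (nothing incident to~$t$ was protectable, cost~$d+1$), but I still must route the~$a+1$ nearly-disjoint $(s,t)$-paths through~$T_t$; since~$T_t$ has~$\ge n+m$ leaves and each leaf is used by exactly one path, and the paths only want to be edge-disjoint *outside*~$D$, this is fine as long as distinct $(s,t)$-paths enter~$T_t$ at distinct leaves — which they do. Finally apply Corollary~\ref{cor-weighted-to-unweigthed}/Lemma~\ref{lem to unit weights}~and~\ref{lem to unit costs} to make all costs and capacities unit (these operations subdivide edges and add parallel paths, preserving subcubicity if done carefully, and they keep~$d$ and~$a$ polynomial), and observe the whole construction is polynomial-time and leaves~$d$ as a function of~$k$ alone.

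\textbf{Main obstacle.} The subtle part is ensuring the degree-reduction gadgets (both the binary trees and the ``thick edge'' gadgets of large min-cut) are simultaneously (a) subcubic, (b) simple graphs, (c) of min-cut~$>a$ across the relevant terminal pair so that no cheap cut uses them, and (d) cost-$1$ on every edge so that the defender cannot exploit them. Reconciling (c) with (a)--(b) is where the real work lies: a single edge of capacity~$a+1$ is easy, but simulating it with unit-capacity subcubic pieces requires an expander-like or long-grid-like gadget, and one must argue its contribution to the cost budget is irrelevant because those edges are never worth protecting (each individually lies in no inclusion-minimal $(s,t)$-cut of capacity~$\le a$, so Rule~\ref{rr: every edge in minimal cut} — or a direct argument — lets us ignore them). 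Once these gadgets are pinned down, the equivalence proof is a routine adaptation of the two directions in the proof of Lemma~\ref{lem: np hard}: forward, protect the~$k$ leaf-edges of~$T_s$ corresponding to the independent set and exhibit~$a+1$ almost-disjoint paths; backward, given a solution take the protected leaf edges, argue (as before) we may assume all~$k$ protected edges are $s$-side leaf edges, and if the corresponding vertex set were not independent, build an $(s,t)$-cut of capacity~$\le a$ avoiding~$D$ using the same $A_{V\setminus S}\cup A_S\cup A_E\cup\{\{w_{e^*},t\}\}$-style cut, now translated through the tree gadgets.
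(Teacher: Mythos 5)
Your core idea --- expand high-degree vertices into bounded-degree gadgets whose internal edges get capacity~$a+1$ so that they never lie in an inclusion-minimal $(s,t)$-cut of capacity at most~$a$ (and hence can be ignored via Rule~\ref{rr: every edge in minimal cut}), while keeping~$d$ unchanged --- is exactly the paper's idea. The paper just executes it more uniformly: it reduces from \MCI{} (already \W1-hard in~$d$ by Corollary~\ref{Satz:WmciNPh}) and replaces \emph{every} vertex~$u$ by a path~$P_u$ on~$|N(u)|$ vertices with cost-$1$, capacity-$(a+1)$ edges, attaching the original edges to distinct path vertices; merging along all path edges recovers the original instance, so correctness is immediate. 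However, your write-up has two concrete problems. First, you misidentify the high-degree vertices of the graph~$G'$ from Lemma~\ref{lem: np hard}: besides~$s$ and~$t$, each original vertex~$v$ has degree~$r+2$ there ($r$ is the regularity of the \IS{} instance and is unbounded), so placing trees only at~$s$ and~$t$ does not yield a subcubic graph. This is fixable by expanding those vertices too, but as written the construction fails.

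Second, your ``main obstacle'' is self-inflicted. The statement only requires~$c(e)=1$ and~$\omega(e)\in\Oh(|G|)$, so you may simply assign capacity~$a+1$ to the gadget-internal edges and stop; no unit-capacity ``thick edge'' gadget is needed. Worse, your proposed final step of invoking Lemma~\ref{lem to unit weights} to unify capacities would \emph{destroy} subcubicity: that transformation adds~$\omega(e)-1$ parallel two-edge paths between the same pair of endpoints, giving them degree~$\omega(e)+1$. There is no careful way to do it that preserves degree~$3$ without the expander-style gadget you leave unconstructed. Making the instance simultaneously subcubic, unit-cost, \emph{and} unit-capacity is precisely the harder task the paper addresses separately (Lemma~\ref{thm-cmi-np-h-wrt-d-in-subcubic}), and there it pays by inflating~$d$ to~$dn^2+n(n-1)$, which is why only \NP-hardness, not \W1-hardness in~$d$, survives in that setting. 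Your proof of the present lemma should therefore not attempt that step at all.
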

\begin{proof}
We reduce from \MCI which is~\W1-hard when parameterized by~$d$ due to Corollary~\ref{Satz:WmciNPh}.
Let~$I=(G=(V,E),s,t,d,a)$ be an instance of \MCI.
Next, we construct an equivalent instance~$I'=(G'=(V',E'),s',t',c',\omega',d',a')$ of \WMCI as follows.

For each vertex~$v\in V$ we add a path~$P_u$ consisting of~$|N(u)|$ vertices to~$G'$.
We denote the vertices of~$P_u$ by~$p_u^1, \ldots p_u^{|N(u)|}$.
In the following, we assume an arbitrary but fixed ordering on~$N(u)$.
Thus, the~$i$-th-vertex of~$N(u)$ is associated with vertex~$p_u^i\in P_u$.
Furthermore, if~$v$ is the~$i$-th neighbor of~$u$ we also write~$p_u^v$ instead of~$p_u^i$ to access neighbor~$v$ more conveniently.
We set~$c'(e)=1$ and~$\omega'(e)=a+1$ for each edge~$e\in E(P_u)$.
Furthermore, for each edge~$\{u,v\}\in E(G)$ we add the edge~$\{p_u^v,p_v^u\}$ to~$G'$ with cost and capacity equal to one.
Next, we set~$s':=p_s^1$ and~$t':=p_t^1$.
Finally, we set~$a':=a$ and~$d':=d$.

Since each vertex in~$P_u$ has exactly one neighbor which is not in~$P_u$, the graph~$G'$ is subcubic.
Next, we prove that~$I$ is a yes-instance of \MCI if and only if~$I'$ is a yes-instance of \WMCI.

Let~$v\in V$ and let~$e$ be an edge of~$P_v$. 
By the fact that~$\omega'(e)=a+1$,~$e$ is not contained in any (inclusion-minimal)~$(s',t')$-cut of capacity at most~$a$.  
Thus, by merging the endpoints of~$e$, we obtain an equivalent instance of~\WMCI due to~\Cref{lem: every edge in minimal cut}.

Let~$I^* =(G^*=(V^*, E^*), s^*,t^*, c^*, \omega^*,d',a')$ be the instance of~\WMCI we obtain after merging the endpoints of all edges contained in any path~$P_v$.
Note that by~\Cref{def:edge contract} it follows that~$G^*$ is isomorphic to~$G$ and~$\omega^*(e) = c^*(e) = 1$ for each~$e\in E^*$.
Thus,~$I$ is a yes-instance of~\MCI if and only if~$I^*$ is a yes-instance of~\WMCI. 
\lncsqed\end{proof}
\fi

\iflong
By \Cref{lem-poly-max-deg-2} and \Cref{lem-wcmi-np-h-wrt-d-in-subcubic} we obtain the following.
\fi

\begin{theorem}\label{theo: parametr d plus delta}
\WMCI can be solved in polynomial time on graphs of maximum degree two.
 \WMCI{} is \NP-hard and \W1-hard when parameterized by~$d$ even on subcubic graphs and even if~$c(e) = 1$ and~$\omega(e) \in\Oh(|G|)$ for all~$e\in E$.
\end{theorem}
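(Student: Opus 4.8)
The plan is to prove the theorem in two independent halves, mirroring the statement. The first half — that \WMCI is polynomial-time solvable on graphs of maximum degree two — is handled by the argument already spelled out in \Cref{lem-poly-max-deg-2}: a connected graph of maximum degree two is either a path or a cycle. On a path, every single edge of capacity at most~$a$ forms an~$(s,t)$-cut of capacity at most~$a$, so a solution must protect all of them, and the instance is a yes-instance iff their total cost is at most~$d$. On a cycle, the inclusion-minimal~$(s,t)$-cuts of capacity at most~$a$ are pairs consisting of one edge from each of the two~$(s,t)$-paths whose capacities sum to at most~$a$; hitting all such pairs with an edge set of cost at most~$d$ is exactly a minimum-cost vertex cover in the auxiliary bipartite graph on the two paths, which is solvable in polynomial time via König's theorem. (\ZMCI generalizes \WMCI, so the same bound applies to \WMCI.)

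For the second half I would invoke the chain of reductions already present in the excerpt. By \Cref{cor-weighted-to-unweigthed} together with \Cref{lem: np hard}, \MCI is \NP-hard and \W1-hard parameterized by~$d$ (this is exactly \Cref{Satz:WmciNPh}); the only subtlety there is the parity fix for odd~$k$ so that the \MCI instance stays bipartite, which is noted in the excerpt. Then \Cref{lem-wcmi-np-h-wrt-d-in-subcubic} reduces \MCI back into \WMCI while forcing the target graph to be subcubic: each vertex~$u$ of the \MCI instance is blown up into a path~$P_u$ on~$|N(u)|$ vertices with capacity-$(a{+}1)$ internal edges, and each original edge~$\{u,v\}$ becomes a single unit-cost, unit-capacity edge linking the appropriate vertices of~$P_u$ and~$P_v$. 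Every vertex of~$P_u$ then has at most one neighbor outside~$P_u$, so the graph is subcubic; the path edges are never in any inclusion-minimal cut of capacity at most~$a$, so merging their endpoints via Rule~\ref{rr: every edge in minimal cut} (safe by \Cref{lem: every edge in minimal cut}) recovers a copy of the original \MCI instance, establishing equivalence. This reduction keeps~$d$ fixed, preserves~$c(e)=1$ for all~$e$, and has~$\omega(e)\in\Oh(|G|)$, matching the claimed restrictions.

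**The main obstacle** is not any single deep step but making sure the two ends of the reduction pipeline match the exact parameter and weight restrictions asserted in the theorem: the reduction of \Cref{lem: np hard} produces costs of order~$\Oh(|G|)$ (namely~$k+1$), and one must check that applying \Cref{cor-weighted-to-unweigthed} and then the subcubic reduction of \Cref{lem-wcmi-np-h-wrt-d-in-subcubic} leaves us with an instance where costs are all~$1$ and capacities are~$\Oh(|G|)$ — essentially the weight has been shuffled from the cost side to the capacity side — while~$d$ is still bounded by a function of~$k$ alone. The parity adjustment for odd~$k$ also has to be tracked through the subcubic reduction to be sure bipartiteness is not needed there (it is not — subcubicity is what matters). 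Everything else is bookkeeping: the theorem is an immediate corollary of \Cref{lem-poly-max-deg-2} and \Cref{lem-wcmi-np-h-wrt-d-in-subcubic}, both of which are proved in the excerpt.
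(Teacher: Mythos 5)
Your proposal is correct and follows essentially the same route as the paper: the first half is exactly the argument of \Cref{lem-poly-max-deg-2} (path/cycle case distinction, with the cycle case reduced to weighted bipartite vertex cover), and the second half is exactly the pipeline \Cref{lem: np hard} $\to$ \Cref{cor-weighted-to-unweigthed} $\to$ the subcubic vertex-blow-up of \Cref{lem-wcmi-np-h-wrt-d-in-subcubic}, with the equivalence established by merging the capacity-$(a{+}1)$ path edges via Rule~\ref{rr: every edge in minimal cut}. Your bookkeeping concerns (that $d$ is preserved and that $\omega(e)=a+1\in\Oh(|G|)$ since $a\le\omega(E)$ in nontrivial instances) are resolved exactly as you anticipate.
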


\iflong  
Next, we strengthen the NP-hardness of \WMCI on subcubic graphs to \MCI.
\fi

\iflong
\begin{lemma}\label{thm-cmi-np-h-wrt-d-in-subcubic}\MCI{} is \NP{}-complete even on subcubic graph. 
\end{lemma}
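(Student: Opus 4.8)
The membership of \MCI{} in \NP{} is straightforward: a solution~$D$ satisfies~$|D|\le d\le m$, so it is a polynomial-size certificate, and by Fact~\ref{Lemma: Compute Cut} one can check in~$n^{\Oh(1)}$ time whether every~$(s,t)$-cut contained in~$E\setminus D$ has size more than~$a$. So it remains to prove \NP-hardness on subcubic graphs, and the plan is to reduce from \MCI{} on general graphs, which is \NP-hard by Corollary~\ref{Satz:WmciNPh}. It is convenient to start from an instance~$I=(G,s,t,d,a)$ of the structured form produced in the proof of Lemma~\ref{lem: np hard} (and unit-ified via Corollary~\ref{cor-weighted-to-unweigthed}); in particular one may assume~$G$ is connected and that~$G-v$ stays connected for every non-terminal vertex~$v$, a property that will be used to tame small cuts and can otherwise be arranged by a cheap preliminary reduction. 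The target~$I'=(G',s',t',d',a')$ will be obtained by a \emph{cut-preserving degree reduction}.

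For every vertex~$v$ of degree~$k:=\deg_G(v)$, fix an arbitrary order~$e_1^v,\dots,e_k^v$ of the edges incident with~$v$ and introduce a \emph{gadget}~$\Gamma_v$ on~$\Theta(k)$ vertices, all of degree at most~$3$, carrying~$k$ designated \emph{ports}~$p_v^1,\dots,p_v^k$. An edge~$e=\{u,v\}$ of~$G$, being the~$i$-th edge at~$v$ and the~$j$-th at~$u$, is realized in~$G'$ as a single edge~$\{p_v^i,p_u^j\}$ of cost~$1$; every gadget-internal edge gets cost~$d+1$, so no minimum solution protects a gadget-internal edge (recall all costs may be assumed to be at most~$d+1$). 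We set~$d':=d$, take~$s'$ and~$t'$ to be ports of~$\Gamma_s$ and~$\Gamma_t$, and~$a':=a$ (possibly~$a+\Oh(1)$ if the analysis needs a constant slack). The simplest choice of~$\Gamma_v$ is a \emph{caterpillar} — a spine path~$q_v^1-\cdots-q_v^k$ with a pendant port on each~$q_v^i$ — which is clearly subcubic; however, as explained below, I expect one actually needs a more robust gadget (a constant-degree expander on~$\Theta(k)$ vertices with the ports at well-separated vertices) so that \emph{cutting through the interior} of a gadget is expensive. In any case~$G'$ is subcubic and constructible in polynomial time.

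The correctness proof rests on a correspondence between inclusion-minimal~$(s,t)$-cuts of~$G$ of size at most~$a$ and inclusion-minimal~$(s',t')$-cuts of~$G'$ of size at most~$a'$. In one direction: given a connected bipartition~$(S,T)$ of~$G$ realizing such a cut, placing each~$\Gamma_v$ entirely on the side of~$v$ yields a connected bipartition of~$G'$ whose cut edges are exactly the images of~$E_G(S,T)$ — the size is unchanged — and a solution~$D$ of~$I$ maps verbatim to a same-size set~$D'\subseteq E(G')$ of embedded~$G$-edges that intersects every such image. In the other direction one shows that any inclusion-minimal~$(s',t')$-cut~$A'$ of~$G'$ of size at most~$a'$ keeps every gadget~$\Gamma_v$ on one side, hence descends to an~$(s,t)$-cut of~$G$ of size at most~$a$ avoiding the~$G$-edges of~$D'$; this makes a solution~$D'$ of~$I'$ (which may be assumed to consist of embedded~$G$-edges only) into a solution~$D$ of~$I$, finishing the equivalence.

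The heart of the argument — and the main obstacle — is exactly the claim that no small inclusion-minimal~$(s',t')$-cut \emph{splits} a gadget, i.e., separates some ports of~$\Gamma_v$ from the others inside~$\Gamma_v$. Unlike in the weighted construction of Theorem~\ref{theo: parametr d plus delta}, in the unit-capacity setting a spine edge cannot be made "uncuttable" by giving it large capacity, and a bare caterpillar genuinely admits splitting cuts that can be as cheap as a non-splitting one (cut two spine edges, then route all port edges to the right side); the ordering of the ports and a constant additive slack in~$a'$ control some, but not all, of these cases. The way I would close this gap is to make~$\Gamma_v$ an expander, so that separating a set~$L$ of ports from its complement inside~$\Gamma_v$ already costs~$\Omega(\min(|L|,k-|L|))$ internal edges, and then argue (using that~$G-v$ is connected, so the two resulting parts of~$N_G(v)$ are still joined through~$G-v$) that such a split cut is dominated by the non-splitting cut obtained by moving~$\Gamma_v$ entirely to the cheaper side, making it never a cheaper obstruction than the images of~$G$-cuts already accounted for. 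Getting the gadget, the port placement, the slack~$a'-a$, and the scaling right so that this domination is exact is the one genuinely delicate point; once it is in place, \NP-membership plus the reduction give \NP-completeness of \MCI{} on subcubic graphs.
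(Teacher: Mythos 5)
Your NP-membership argument and the overall ``replace each vertex by a subcubic port gadget'' framing match the paper, and you have correctly located the crux: preventing small $(s',t')$-cuts from splitting a gadget. But the fix you propose for that crux does not work, and you essentially concede this yourself. With unit capacities and gadget-internal edges of cost $d+1$ (hence never protected), the attacker is free to cut through the interior of any gadget. Your domination argument would need the internal cost of separating a set $L$ of ports from the remaining $k-|L|$ ports to be at least $\min(|L|,k-|L|)$, i.e.\ an edge-expansion constant of at least $1$; no subcubic gadget achieves this (every cubic graph on $N$ vertices even has a bisection of size roughly $N/6$, and in general $|E(S,\bar S)|\le 3\min(|S|,|\bar S|)-2|E(S)|$), so for a high-degree vertex a balanced split through the gadget is strictly cheaper than cutting the corresponding real port edges. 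Such a split simulates a partial vertex deletion of $v$ and yields an $(s',t')$-cut with no counterpart among the edge cuts of $G$; since its saving grows with $\deg_G(v)$, an additive $\Oh(1)$ slack in $a'$ cannot absorb it, and the forward direction of the equivalence fails (a solution $D$ of $I$ need not be a solution of $I'$). The auxiliary assumption that $G-v$ is connected for all non-terminals is also not justified and does not repair this.

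The paper closes exactly this gap by inverting your cost assignment rather than by strengthening the gadget. The gadget for $v$ is just a path $P_v$ on $|N(v)|$ port vertices; its internal edges get cost $1$ (and capacity $1$), the real edges get cost $n^2$, and the defender budget is inflated to $d':=dn^2+n(n-1)$. Because all internal edges together cost at most $n(n-1)$, one may assume without loss of generality that every solution protects \emph{all} of them, so no cut avoiding the solution can enter a gadget's interior and gadget splits are impossible by fiat; the remaining budget $dn^2$ buys exactly $d$ real edges, giving a clean bijection with solutions of $I$. The instance is then made unit-cost via Lemma~\ref{lem to unit costs} (\Cref{cor-weighted-to-unweigthed}), which replaces each cost-$n^2$ edge by a path and therefore preserves subcubicity. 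If you want to salvage your write-up, adopt this role reversal: the defender, not an expander, is what keeps the gadgets intact.
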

\begin{proof}
We reduce from \MCI.
Let~$I=(G=(V,E),s,t,d,a)$ be an instance of \MCI.
We first prove the statement for \WMCI where~$\omega(e)=1$ and~$c(e)\in n^{\Oh(1)}$.
We do this intermediate step to emphasize the main idea of the reduction.
Second, we apply~\Cref{cor-weighted-to-unweigthed} to each edge in the instance of \WMCI to obtain an equivalent instance of \MCI.
Note that since~\Cref{cor-weighted-to-unweigthed} replaces an edge by a path, the resulting instance of \MCI is also subcubic.
Hence, it remains to prove the statement for the restricted version of \WMCI.

Next, we construct an equivalent instance~$I'=(G'=(V',E'),s',t',c',\omega',d',a')$ of \WMCI as follows.
For each vertex~$v\in V$ we add a path~$P_u$ consisting of~$N(u)$ vertices to~$G'$.
We denote the vertices of~$P_u$ by~$p_u^1, \ldots p_u^{|N(u)|}$.
In the following, we assume an arbitrary but fixed ordering on~$N(u)$.
Thus, the~$i$-th-vertex of~$N(u)$ is associated with vertex~$p_u^i\in P_u$.
Furthermore, if~$v$ is the~$i$-th neighbor of~$u$ we also write~$p_u^v$ instead of~$p_u^i$ to access neighbor~$v$ more convenient.
We set~$c'(e):=\omega'(e):=1$ for each edge~$e\in E(P_u)$.
Furthermore, for each edge~$\{u,v\}\in E(G)$ we add the edge~$\{p_u^v,p_v^u\}$ to~$G'$ and set its costs to~$n^2$ and its capacity to one.
Next, we set~$s':=p_s^1$ and~$t':=p_t^1$.
Finally, we set~$a':=a$ and~$d':=dn^2+n(n-1)$. 

Since each vertex in~$P_u$ has exactly one neighbor which is not in~$P_u$, the graph~$G'$ is subcubic.
Next, we prove that~$I$ is a yes-instance of \MCI if and only if~$I'$ is a yes-instance of \WMCI.

$(\Rightarrow)$
Let~$D\subseteq E$ be a solution with cost at most~$d$ of~$I$.
In the following, we construct a solution~$D'\subseteq E'$ with cost at most~$d'$ of~$I'$.

For each edge~$\{u,v\}\in D$ we add the corresponding edge~$\{p_u^v,p_v^u\}$ in~$G'$ to~$D'$.
Since each of these edges has cost~$n^2$, and~$|D|\le s$, these edges contribute at most~$dn^2$ to to cost of~$D'$.
Furthermore, we add each edge in~$P_u$ for each~$u\in V$ to~$D'$. Since~$P_u$ has at most~$n-1$ edges and each edge in~$P_u$ has cost one, all these edges contribute at most~$n(n-1)$ to the total costs.
Hence,~$|D'|\le d'$.
Assume towards a contradiction that~$G'$ has an~$(s',t')$-cut~$A'\subseteq E'\setminus D'$ with~$\omega'(A')\le a'$. 
Since~$E(P_u)\subseteq D'$ for each~$u\in V$, the~$(s,t)$-cut~$A'$ contains only edges of the form~$\{p_u^v,p_v^u\}$ between two different paths.
We define the set~$A$ as the set of corresponding edges of~$A'$ in~$G$.
Since~$|A'|\le a$ we obtain~$|A|\le a$.
Since there is no~$(s,t)$-cut of capacity at most~$a$ in~$G\setminus D$ and~$A\cap D=\emptyset$, we conclude that there exists an~$(s,t)$-path~$(s,w_1,\ldots,w_\ell,t)$ in~$G\setminus A$.
Observe that~$(p_s^1, \ldots, p_s^{w_1}, p_{w_1}^s, \ldots, p_{w_1}^{w_2}, p_{w_2}^{w_1}, \ldots , p_t^{w_\ell},p_t^1)$ is an~$(s',t')$-path in~$G' - A'$, a contradiction to the assumption that~$A'$ is an~$(s',t')$-cut in~$G'$.

$(\Leftarrow)$
Let~$D'\subseteq E'$ be a solution with cost at most~$d'$ of~$I'$.
In the following, we construct a solution~$D\subseteq E$ with cost at most~$d$ of~$I$.

Since~$d'=dn^2+n(n-1)$,~$c'(e)=n^2$ for each edge~$e\notin E(P_u)$ and each~$u\in V$ in~$G'$, and~$c(e)=1$ for each edge~$e\in E(P_u)$ for some~$u\in V$ in~$G'$, we can assume without loss of generality that~$E(P_u)\subseteq D'$ for each~$u\in V$.
We start with an empty set~$D$.
For an edge~$\{p_u^v,p_v^u\}\in D'$ between two different paths, we add the edge~$\{u,v\}$ to~$D$.
Assume towards a contradiction that~$G$ has an~$(s,t)$-cut~$A\subseteq E\setminus D$ with~$\omega(A)\le a$.
We define the set~$A'$ as the set of corresponding edges of~$A$ in~$G'$.
Note that since~$\omega(e)=1$ for each edge~$e\in E'$ we have~$|A'|\le a=a'$.
Since there is no~$(s',t')$-cut of capacity at most~$a$ in~$G'\setminus D'$ and~$A'\cap D'=\emptyset$, we conclude that there exists an~$(s',t')$-path~$(p_s^1, \ldots, p_s^{w_1}, p_{w_1}^s, \ldots, p_{w_1}^{w_2}, p_{w_2}^{w_1}, \ldots , p_t^{w_\ell},p_t^1)$ in~$G'$.
Thus,~$(s, w_1, \ldots , w_\ell, t)$ is an~$(s,t)$-path in~$G\setminus A$, a contradiction to the assumption that~$A$ is an~$(s,t)$-cut in~$G$.
\lncsqed\end{proof}
\fi

\iflong  
 \else 
We provide a simple search tree algorithm for~$a+d$.  
If the graph has an~$(s,t)$-cut of capacity at most~$a$ we branch on the at most~$a$ possibilities to protect one of the edges of this~$(s,t)$-cut.
The depth of this search tree is bounded by~$d$ since each choice decreases the defender budget by at least one.
\fi

\begin{theorem}\label{theo:FPTda}\WMCI can be solved in~$a^d\cdot n^{\Oh(1)}$ time.
\end{theorem}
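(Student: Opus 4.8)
The plan is to give a bounded-depth search-tree (branching) algorithm that repeatedly uses Fact~\ref{Lemma: Compute Cut} as its oracle. First I would observe that a set~$D\subseteq E$ fails to be a solution of~$I$ precisely when there exists an~$(s,t)$-cut~$A\subseteq E\setminus D$ with~$\omega(A)\le a$; by Fact~\ref{Lemma: Compute Cut} such a ``cheap'' cut can be found (or shown not to exist) in polynomial time. So the algorithm maintains a current partial protection set~$D$ (initially empty) and a remaining budget~$d'=d-c(D)$, and at each step calls the oracle of Fact~\ref{Lemma: Compute Cut} on~$G$ with the edges of~$D$ forbidden. If the oracle reports that no~$(s,t)$-cut of capacity at most~$a$ avoiding~$D$ exists, then~$D$ is already a solution and we accept. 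Otherwise the oracle returns a concrete cut~$A$ with~$\omega(A)\le a$ and $A\cap D=\emptyset$; any solution extending~$D$ must protect at least one edge of~$A$, so we branch into~$|A|$ subcases, in each adding one edge $e\in A$ to~$D$.

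Next I would bound the size of this tree. Since we may assume $\omega(e)\le a+1$ for every edge (a Basic Observation from the preliminaries), and in fact every edge in an inclusion-minimal cut of capacity $\le a$ has $\omega(e)\ge 1$, a cut $A$ with $\omega(A)\le a$ contains at most~$a$ edges after we first discard any edge not needed — more simply, we may take the oracle's returned cut to be inclusion-minimal, and then $|A|\le \omega(A)\le a$ because each edge has capacity at least~$1$. Thus the branching factor is at most~$a$. Each branch adds an edge of cost $c(e)\ge 1$ to~$D$, so the remaining budget $d'$ strictly decreases by at least~$1$; hence the recursion depth is at most~$d$, and we prune any branch where $c(D)>d$. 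This yields at most~$a^{d}$ leaves, each leaf and internal node doing $n^{\Oh(1)}$ work for the oracle call, for a total running time of $a^{d}\cdot n^{\Oh(1)}$.

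For correctness I would argue both directions by induction on the recursion depth. Soundness: the algorithm only accepts when the oracle certifies that no cheap cut avoids the current~$D$, i.e. when~$D$ is genuinely a solution, and~$D$ always satisfies~$c(D)\le d$ since we prune otherwise. Completeness: if~$I$ has a solution~$D^\star$ with~$c(D^\star)\le d$, then whenever the algorithm is at a node with current set~$D\subseteq D^\star$ and the oracle returns a cut~$A$ avoiding~$D$ with $\omega(A)\le a$, the set~$D^\star$ must intersect~$A$ (otherwise~$A$ would witness that~$D^\star$ is not a solution), so some edge $e\in A\cap D^\star$ exists and the corresponding branch maintains the invariant~$D\cup\{e\}\subseteq D^\star$; since $D^\star$ is finite and each step strictly grows~$D$ within~$D^\star$, the algorithm eventually reaches a node where the oracle certifies a solution.

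The only genuinely delicate point — and the one I would be careful about — is the branching-factor bound: I need the cut returned by Fact~\ref{Lemma: Compute Cut} to have at most~$a$ edges. This is immediate if I take it inclusion-minimal with all capacities at least~$1$, but the fact as stated returns a min-capacity cut under the modified weights, which is automatically inclusion-minimal among cheap cuts; alternatively one can greedily remove redundant edges in polynomial time. Everything else is routine bookkeeping, so no display-math or heavy computation is needed in the write-up.
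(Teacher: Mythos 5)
Your proposal is correct and follows essentially the same route as the paper: a search tree that repeatedly invokes Fact~\ref{Lemma: Compute Cut} to find a cheap $(s,t)$-cut avoiding the current set~$D$, branches on its at most~$a$ edges, and recurses to depth at most~$d$ because each protected edge has cost at least one. The one point you flag as delicate is in fact immediate in \WMCI{} since~$\omega(e)\ge 1$ for every edge, so~$|A|\le\omega(A)\le a$ without any appeal to inclusion-minimality, which is exactly how the paper argues it.
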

\iflong
\begin{proof}
Let~$J:=(G,s,t,c,\omega,d,a)$ be an instance of~\WMCI. We prove the theorem by describing a simple search tree algorithm, that we initially call with~$D:=\emptyset$, where~$D$ represents the choice of the defender: \todom{haben wir sonst je von einem defender gesprochen?}

If~$c(D) >d$, then return \emph{no}. 
Otherwise, use the algorithm behind Lemma~\ref{Lemma: Compute Cut} to compute an~$(s,t)$-cut~$A = \{e_1, \dots, e_{z} \} \subseteq E \setminus D$ for some~$z\le |A|$ with~$\omega(A) \leq a$. 
If no such~$(s,t)$-cut exists, then return~\emph{yes}. 
Otherwise, we branch into the cases where~$D:=D \cup \{e_i\}$ for each~$i\in[1, z]$.

The correctness of the algorithm follows from the fact that for every~$(s,t)$-cut~$A \subseteq E \setminus D$ with~$\omega(A) \leq a$, at least one of the edges of~$A$ must be contained in any solution of~$J$. 
It remains to consider the running time of the algorithm. 
We have~$\omega(e) \geq 1$ for every edge~$e$ in any \WMCI instance.
Hence,~$|A| \leq a$ and therefore, the search tree algorithm branches into at most~$a$ cases. Furthermore, after every branching step,~$c(D)$ increases by at least~$1$, since we add one additional edge to~$D$ and we have~$c(e) \geq 1$ for every edge~$e$. Thus, the depth of the search tree is at most~$d$. Together with the running time from Lemma~\ref{Lemma: Compute Cut}, we obtain a total running time of~$a^d n^{\Oh(1)}$.
\lncsqed\end{proof}
\fi

\iflong
\else
Next, we strengthen the NP-hardness of \WMCI on subcubic graphs to \MCI.
Furthermore, we show that~$a$ is bounded by a function only depending on~$d+\Delta(G)$ implying fixed-parameter tractability for~$d+\Delta(G)$.
\fi

\iflong
\begin{lemma}\label{lem-fpt-delta-and-d}
\MCI can be solved in~$((d/2+1) \cdot \Delta(G))^d \cdot n^{\Oh(1)}$~time\iflong , where~$\Delta(G)$ denotes the maximum degree of the input graph\fi.
\end{lemma}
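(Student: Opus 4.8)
The plan is to run essentially the same branching algorithm as in Theorem~\ref{theo:FPTda}, but to replace the crude bound $|A|\le a$ on the branching factor by one depending only on $d$ and $\Delta:=\Delta(G)$. Recall that this algorithm maintains a set $D\subseteq E$ of protected edges (initially empty); using Fact~\ref{Lemma: Compute Cut} it either certifies that no $(s,t)$-cut $A\subseteq E\setminus D$ with $|A|\le a$ exists -- in which case $D$ is already a solution and we answer \emph{yes} -- or it obtains such a cut $A$ and recurses on $D\cup\{e\}$ for every $e\in A$; once $|D|=d$ we stop and merely test with Fact~\ref{Lemma: Compute Cut} whether $D$ is a solution. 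The recursion is correct because every solution $D^{*}\supseteq D$ of size at most $d$ must contain an edge of $A$ (otherwise $A$ would be an $(s,t)$-cut disjoint from $D^{*}$ of size at most $a$), and its depth is at most $d$. Hence it suffices to show that whenever we branch we may branch on an $(s,t)$-cut $A$ disjoint from $D$ with $|A|\le(d/2+1)\Delta$.

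The key step is the following structural claim: if $D\subseteq E$ with $|D|\le d$ and $(V,D)$ contains no $(s,t)$-path, then $G$ has an $(s,t)$-cut disjoint from $D$ of size at most $(d/2+1)\Delta$. To prove it, let $S$ be the vertex set of the connected component of $s$ in $(V,D)$ and let $T$ be that of $t$; by assumption $S$ and $T$ are disjoint, so $t\notin S$ and $s\notin T$, and by maximality of a connected component no edge of $D$ leaves $S$ or leaves $T$. Therefore both $E_G(S,V\setminus S)$ and $E_G(T,V\setminus T)$ are $(s,t)$-cuts of $G$ that are disjoint from $D$. Let $d_S$ and $d_T$ be the numbers of edges of $D$ lying inside $S$ and inside $T$, respectively. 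Since a connected graph on $k$ vertices has at least $k-1$ edges, we get $|S|\le d_S+1$ and $|T|\le d_T+1$, and since $S$ and $T$ are disjoint we have $d_S+d_T\le|D|\le d$; thus $\min(|S|,|T|)\le\min(d_S,d_T)+1\le d/2+1$. As every vertex of $G$ has degree at most $\Delta$, the smaller of the two cuts above has size at most $\min(|S|,|T|)\cdot\Delta\le(d/2+1)\Delta$, which proves the claim.

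Finally, whenever the algorithm branches we have $|D|<d$ and $(V,D)$ contains no $(s,t)$-path (otherwise no $(s,t)$-cut disjoint from $D$ would exist and we would have already answered \emph{yes}); so by the claim we can, before branching, replace $A$ by a minimum-size $(s,t)$-cut disjoint from $D$, which has size at most $(d/2+1)\Delta$. This bounds the branching factor by $(d/2+1)\Delta$, the recursion depth is at most $d$, and every node is processed in $n^{\Oh(1)}$ time via Fact~\ref{Lemma: Compute Cut}, yielding the running time $((d/2+1)\cdot\Delta(G))^d\cdot n^{\Oh(1)}$. I expect the only nontrivial point to be the structural claim, and within it the observation that the components of $s$ and of $t$ in $(V,D)$ compete for the same edge budget $d$ -- this is exactly what improves the naive bound $(d+1)\Delta$ to $(d/2+1)\Delta$.
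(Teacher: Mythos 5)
Your proposal is correct and rests on exactly the same key insight as the paper's proof: the components of $s$ and $t$ in $(V,D)$ share the budget of $d$ edges, so the smaller one has at most $d/2+1$ vertices and its boundary is an $(s,t)$-cut of size at most $(d/2+1)\cdot\Delta(G)$ avoiding $D$. The only (immaterial) difference is packaging — the paper uses this bound once to conclude that any instance with $a\ge (d/2+1)\cdot\Delta(G)$ is a no-instance and then invokes the $a^d\cdot n^{\Oh(1)}$ algorithm of Theorem~\ref{theo:FPTda} as a black box, whereas you bound the branching set at every node of that same search tree directly.
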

\begin{proof}
Let~$J:=(G,s,t,d,a)$ be an instance of~\MCI. We prove the theorem by showing that~$a \leq d \cdot \Delta$ in non-trivial instances of~\MCI. Together with Theorem~\ref{theo:FPTda}, we then obtain fixed-parameter tractability for~$d+\Delta$.

If~$G$ contains an~$(s,t)$-path with~at most~$d$ edges, then~$J$ is a trivial~\emph{yes}-instance. Thus, we may assume that for every~$D \subseteq E$ with~$|D| \leq d$, there is no~$(s,t)$-path in~$G$ that contains only edges from~$D$. We use this assumption to prove the following claim.
\begin{myclaim}
\label{claim-a-bounded-by-d-and-delta}
If~$a \geq (d/2+1) \cdot \Delta$, then~$J$ is a~\emph{no}-instance.
\end{myclaim}

\begin{claimproof}
Let~$D \subseteq E$ with~$|D|\leq d$. 
We prove that there exists an~$(s,t)$-cut~$A \subseteq E$ of size at most~$a$. 
To this end, consider the graph~$G_D:=(V,D)$ consisting only of the edges in~$D$. 
Since~$|D| \leq d$ we know that there is no~$(s,t)$-path in~$G_D$. Thus,~$s$ and~$t$ are in distinct connected components~$C_s \subseteq V$ and~$C_t \subseteq V$ in~$G_D$. 
Furthermore, observe that in at least one of the induced graphs~$G_D[C_s]$ or~$G_D[C_t]$, there are at most~$d/2$ edges. 
Without loss of generality, assume that this is the case for~$G_D[C_s]$. 
Then,~$|C_D| \leq d/2+1$. 
We define~$A:=\bigcup_{v \in C_D} X(v)$, where~$X(v) \subseteq E \setminus D$ is the set of all edges in~$E \setminus D$ that are incident with~$v$ in~$G$. 
Note that~$A \subseteq E \setminus D$ and that~$|A| \leq |C_D| \cdot \Delta = (d/2+1) \cdot \Delta \leq a$. 
Moreover,~$A$ is an~$(s,t)$-cut in~$G$ since~$t \not \in C_s$.
%
%
%
\end{claimproof}

By Claim~\ref{claim-a-bounded-by-d-and-delta}, we conclude that for every non-trivial instnace of~\MCI, we have~$a \leq d \cdot \Delta$. Together with Theorem~\ref{theo:FPTda}, we obtain that~\MCI can be solved in~$((d/2+1)\cdot \Delta)^d \cdot n^{\Oh(1)}$~time.
\lncsqed\end{proof}
\fi
  
\iflong  
By \Cref{thm-cmi-np-h-wrt-d-in-subcubic} and \Cref{lem-fpt-delta-and-d} we obtain the following.  
\fi

\begin{theorem}\label{theo: parameter degree}
\MCI is \NP{}-complete even on subcubic graphs. 
Furthermore, \MCI can be solved in~$((d/2+1) \cdot \Delta(G))^d \cdot n^{\Oh(1)}$~time.
\end{theorem}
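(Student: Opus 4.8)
The plan is to prove the two parts separately: the \NP-completeness on subcubic graphs by a vertex-splitting reduction, and the running-time bound by a counting argument combined with \Cref{theo:FPTda}.

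\emph{Hardness.} I would reduce from general \MCI, which is \NP-hard even on bipartite graphs by \Cref{Satz:WmciNPh}, making the instance subcubic by splitting every vertex into a path. As an intermediate step I would build an instance of \WMCI with unit capacities and polynomially bounded costs and only then invoke \Cref{cor-weighted-to-unweigthed} to obtain an equivalent \MCI instance. Concretely: replace each vertex~$u$ of~$G$ by a path~$P_u$ on~$|N(u)|$ vertices, with one vertex~$p_u^v$ per neighbour~$v$; assign cost and capacity one to every edge of every~$P_u$; for each original edge~$\{u,v\}$ add the single edge~$\{p_u^v,p_v^u\}$ with capacity one and cost~$n^2$; put~$s':=p_s^1$,~$t':=p_t^1$,~$a':=a$, and~$d':=dn^2+n(n-1)$. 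Every vertex of~$G'$ has one neighbour inside its own path and at most one outside it, so~$G'$ is subcubic, and this is preserved by the edge-subdivisions performed in \Cref{cor-weighted-to-unweigthed}; membership in \NP{} is immediate from Fact~\ref{Lemma: Compute Cut}.

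For correctness one shows that~$G$ has a solution of cost at most~$d$ iff~$G'$ has one of cost at most~$d'$. In the forward direction, protect all path edges of all~$P_u$ (total cost~$2m-n<n(n-1)$) together with the~$\le d$ inter-path edges corresponding to the solution in~$G$ (cost at most~$dn^2$); an $(s',t')$-cut~$A'$ disjoint from this set consists only of inter-path edges, so it maps to a set~$A$ of~$\le a$ edges of~$G$ disjoint from the selected~$d$ edges, and the $(s,t)$-path that survives in~$G-A$ lifts, through the obvious walk along the~$P_u$'s and across the crossing edges, to an $(s',t')$-path in~$G'-A'$, a contradiction. In the backward direction, since every inter-path edge costs~$n^2$ and~$d'<(d+1)n^2$, any solution~$D'$ uses at most~$d$ inter-path edges, and the residual budget~$n(n-1)$ lets us add all path edges to~$D'$ without destroying it; so we may assume all path edges are protected, after which the same bijection between inter-path edges of~$G'$ and edges of~$G$ shows that a surviving cheap $(s,t)$-cut in~$G$ would produce one in~$G'$. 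Finally apply \Cref{cor-weighted-to-unweigthed}.

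\emph{Running time.} Here I would show~$a\le(d/2+1)\cdot\Delta(G)$ in every non-trivial instance and then invoke \Cref{theo:FPTda}. If~$G$ contains an $(s,t)$-path with at most~$d$ edges, the instance is a trivial yes-instance. Otherwise fix any~$D\subseteq E$ with~$|D|\le d$; the graph~$(V,D)$ has no $(s,t)$-path, so~$s$ and~$t$ lie in distinct components~$C_s,C_t$ of~$(V,D)$, and at least one of~$(V,D)[C_s]$,~$(V,D)[C_t]$ has at most~$d/2$ edges, say~$C_s$, whence~$|C_s|\le d/2+1$. Then the set~$A$ of all edges of~$E\setminus D$ incident with~$C_s$ is an $(s,t)$-cut disjoint from~$D$ with~$|A|\le|C_s|\cdot\Delta(G)\le(d/2+1)\cdot\Delta(G)$. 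So if~$a\ge(d/2+1)\cdot\Delta(G)$ no~$D$ with~$|D|\le d$ is a solution, i.e.\ the instance is a no-instance; in the remaining case~$a<(d/2+1)\cdot\Delta(G)$ and \Cref{theo:FPTda} solves the instance in~$a^d\cdot n^{\Oh(1)}=((d/2+1)\cdot\Delta(G))^d\cdot n^{\Oh(1)}$ time. The main obstacle is the hardness direction, namely calibrating~$d'$ so that protecting every path edge is forced and then checking that the $(s,t)$-cut correspondence survives both the vertex-splitting and the unit-weight transformation; the parameterized bound is a short counting argument that should be routine.
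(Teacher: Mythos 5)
Your proposal is correct and follows essentially the same route as the paper: the hardness part uses the identical vertex-splitting gadget (paths $P_u$ with unit-cost/unit-capacity internal edges, crossing edges of cost $n^2$, budget $d'=dn^2+n(n-1)$) followed by the weighted-to-unweighted transformation of \Cref{cor-weighted-to-unweigthed}, and the running-time part is the same counting argument showing $a<(d/2+1)\cdot\Delta(G)$ in non-trivial instances before invoking \Cref{theo:FPTda}. No gaps; both the budget calibration forcing all path edges into the solution and the component-based cut of size at most $(d/2+1)\cdot\Delta(G)$ match the paper's proofs of the corresponding lemmas.
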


\section{Parameterization by the Attacker Budget}
In this section, we show that~\WMCI admits an \FPT-algorithm for the parameter~$a$. 
To this end, we first provide an algorithm with a running time of~$a^{f(\tw(G))}\cdot n$ for some computable function~$f$, where~$\tw(G)$ denotes the treewidth of the graph.
Afterwards, we show that for every instance of~\WMCI we can obtain an equivalent instance~$I'$ of \WMCI in polynomial time, where every edge is contained in an inclusion-minimal~$(s,t)$-cut of size at most~$a$ in~$I'$.
Due to previous results~\cite{GJS17,MOR13}, the graph of~$I'$ then has treewidth at most~$g(a)$ for some computable function~$g$.
In combination with the algorithm for~$a$ and~$\tw(G)$, we thus obtain the stated \FPT-algorithm for the parameter~$a$.

The algorithm with a running time of~$a^{f(\tw(G))}\cdot n$ relies on dynamic programming over a tree decomposition.
Essentially, what the attacker can achieve in the current subgraph is to disconnect specific parts of the bag and thus obtain a cheap partition.
Roughly speaking, the algorithm computes the minimum cost for an edge set~$D$ such that each choice of the attacker to obtain any partition disjoint from~$D$ is expensive. 
Hence, before we describe the algorithm, we first introduce some notations for partitions. 

\todom{note somewhere that algos work even if~$d$ is not encoded in unary}

\todom{motivation für partitions/ informelle idee des DPs schon hier}
Let~$X$ be a set.
We denote with~$B(X)$ the collection of all partitions of~$X$. 
Let~$P\in B(X)$ be a partition of~$X$ and let~$v \in X$.
Then, we define with~$P - v := \{R \setminus \{v\}\mid R\in P\} \setminus \{\emptyset\}$ the partition of~$X\setminus\{v\}$ after removing~$v$ from~$P$.
Analogously, for every~$w\not \in X$ we define~$P+w := \{P'\in B(X\cup\{w\})\mid P'-w = P\}$.
Note that~$B(X\setminus \{v\}) = \{P-v\mid P \in B(X)\}$ and~$B(X\cup \{w\}) = \{P+w\mid P\in B(X)\}$.
Moreover, we denote with~$P(v)$ the unique set of~$P$ containing~$v$ for a partition~$P$ of~$X$ and an element~$v\in X$.

Let~$(\mT:=(\mathcal{V}, \mathcal{A}, r),\beta)$ be a tree decomposition of a graph~$G$.
\iflong Recall that for\else For \fi a node~$x\in \mathcal{V}$, we define with~$V_x$ the union of all bags~$\beta(y)$, where~$y$ is reachable from~$x$ in~$\mT$,~$G_x := G[V_x]$, and~$E_x := E_G(V_x)$.

Let~$P$ be a partition of~$\beta(x)$, then we call an edge set~$A\subseteq E_x$ a~\emph{partition-cut for~$P$ in~$G_x$} if~$v$ and~$w$ are in different connected components in~$G_x - A$ for every pair of distinct vertices~$\{v,w\}$ of~$\beta(x)$ with~$P(v) \neq P(w)$.
Note that all edges between distinct sets of~$P$ are contained in every partition-cut for~$P$ in~$G_x$.

\begin{theorem}\label{thm:tw und a}
Let~$\tw(G)$ denote the treewidth of~$G$.
Then, \ZMCI can be solved in~${a^{\tw(G)}}^{\Oh(\tw(G))} \cdot n +m$ time.
\end{theorem}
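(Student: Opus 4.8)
The plan is to perform dynamic programming over a nice tree decomposition of $G$, where the state at each node records, for a \emph{minimum-cost} defender set $D$ restricted to the edges seen so far, how the attacker can most cheaply partition the current bag. Concretely, I would root the nice tree decomposition and, for each node $x$ with bag $\beta(x)$, maintain a table indexed by pairs $(P, f)$ where $P \in B(\beta(x))$ is a partition of the bag and $f : B(\beta(x)) \to [0, a+1]$ is a function assigning to each partition $Q$ a ``capped attacker cost.'' The intended semantics: the entry stores the minimum cost $c(D)$ over all $D \subseteq E_x$ such that (i) $D$ contains no edge between distinct blocks of $P$ — i.e.\ $D$ is consistent with $P$ being realizable as a union of connected components after removing a $D$-avoiding partition-cut — and (ii) for every partition $Q$ of $\beta(x)$, the minimum $\omega$-weight of a partition-cut for $Q$ in $G_x$ that is disjoint from $D$, capped at $a+1$, equals $f(Q)$. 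Capping each coordinate at $a+1$ is what keeps the table size bounded by $a^{\tw(G)^{\Oh(\tw(G))}}$, since $|B(\beta(x))|$ is a function of $\tw(G)$ and each of the $|B(\beta(x))|$ coordinates takes one of $a+2$ values.

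Next I would work out the four update rules. For a \textbf{leaf node} the bag is empty (or a single vertex), $E_x = \emptyset$, and the table is initialized trivially. For an \textbf{introduce node} introducing vertex $v$, note $E_x = E_y$ (the new vertex brings no new edges in the standard convention, or we handle introduce-edge separately), so partitions of $\beta(x)$ arise by inserting $v$ as a singleton or into an existing block via $P + v$; the attacker-cost function transfers by the relation $f_x(Q) = f_y(Q - v)$ since any partition-cut for $Q$ in $G_x$ is a partition-cut for $Q-v$ in $G_y$ and vice versa when $v$ has no incident edges yet. An \textbf{introduce-edge node} for $e = \{u,v\}$ with $u,v \in \beta(x)$ is where the real work happens: for a target partition $Q$, if $u$ and $v$ lie in different blocks of $Q$ then $e$ must be in every partition-cut for $Q$, so we add $\omega(e)$ to the cost (capping at $a+1$) and we must forbid $e$ from $D$, i.e.\ only states whose $P$ places $u,v$ in the same block (or whose $P$ is compatible) survive, with a case split on whether $e \in D$; if $u,v$ lie in the same block of $Q$ then $e$ need not be cut and contributes nothing. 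A \textbf{forget node} forgetting $v$ takes $P_x = P_y - v$ and sets $f_x(Q) = \min_{Q' \in Q + v} f_y(Q')$ — the attacker may route around $v$ however is cheapest — and we take the minimum over all $y$-states projecting to the same $x$-state. A \textbf{join node} with children $y, z$ sharing bag $\beta(x)$ combines a $y$-state $(P', f')$ and a $z$-state $(P'', f'')$: the merged partition $P$ is the transitive closure of $P' \cup P''$ (connectivity from either side), the cost is $c(D') + c(D'')$ minus the double-counted cost of $D' \cap D''$ on $E_y \cap E_z = E_{\beta(x)}$, and the combined attacker cost for $Q$ is obtained by splitting $Q$ refined appropriately across the two sides and taking $\min$ over the (finitely many) ways to do so, summing and capping. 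At the root, whose bag we may assume is $\{s,t\}$ (add $s,t$ to every bag — this only adds $2$ to the width), the answer is yes iff there is a reachable state with cost $\le d$ whose attacker-cost function assigns value $> a$ — i.e.\ value $a+1$ after capping — to the unique partition separating $s$ from $t$.

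For \textbf{correctness} I would prove the standard two-directional invariant by bottom-up induction: every valid $(D, \text{attacker behaviour})$ pair in $G_x$ is reflected by some table entry with cost $\le c(D)$, and conversely every table entry is witnessed by an actual $D \subseteq E_x$ achieving those capped attacker costs. The key structural facts are: (1) Fact~\ref{Lemma: Compute Cut} and more basically the observation that a minimum $(s,t)$-cut of capacity $\le a$ avoiding $D$ exists iff the attacker wins, so it suffices to track, per bag-partition, the cheapest avoiding partition-cut; (2) an $(s,t)$-cut of $G$ avoiding $D$, when restricted to $E_x$, is exactly a partition-cut for the partition of $\beta(x)$ induced by the connected components of $G_x$ minus that cut — so gluing partition-cuts across the decomposition reconstructs global cuts, which is why the join/forget rules are sound; and (3) that capping at $a+1$ loses nothing because the attacker only cares whether the total is $\le a$, and weights are nonnegative so a partial sum exceeding $a$ can never come back down. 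The \textbf{running time}: there are $\Oh(n)$ nodes in a nice tree decomposition (after an initial $\Oh(\tw(G)^2 n)$-time computation, or $m$ for the edge set bookkeeping), each table has at most $B(\tw(G)+1) \cdot (a+2)^{B(\tw(G)+1)} = a^{\tw(G)^{\Oh(\tw(G))}}$ entries, and the join node — the most expensive — combines pairs of entries in time polynomial in the table size, giving the claimed $a^{\tw(G)^{\Oh(\tw(G))}} \cdot n + m$ bound.

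The \textbf{main obstacle} I anticipate is the join node: getting the bookkeeping exactly right so that the cost of $D$ is neither double-counted nor undercounted on the shared edge set $E_{\beta(x)}$, and proving that the ``combined attacker cost'' formula — which must range over all compatible ways of refining the target partition $Q$ into a $y$-part and a $z$-part whose union's closure is $Q$ — genuinely computes the minimum weight of a $D$-avoiding partition-cut for $Q$ in $G_x = G_y \cup_{\beta(x)} G_z$. This requires an exchange argument showing any partition-cut for $Q$ in $G_x$ decomposes into a partition-cut on each side for the induced partitions, and that the induced partitions on the two sides are precisely a compatible pair. Everything else is routine once the state definition and this join argument are pinned down; I would also remark, as the todo note suggests, that the algorithm never needs $d$ in unary since it only compares $c(D)$ against $d$.
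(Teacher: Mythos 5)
Your proposal is correct and follows essentially the same route as the paper: a dynamic program over a nice tree decomposition with $s$ and $t$ added to every bag, whose states assign to each partition of the bag a value in $[0,a+1]$ describing the cheapest $D$-avoiding partition-cut, with the answer read off at the root from the partition separating $s$ from $t$. The only difference is a standard dualization of the table semantics --- you record the \emph{achieved} capped attacker profile of $D$ (handling bag edges via introduce-edge nodes), whereas the paper indexes entries by a \emph{required} lower-bound function $f_x$ together with $D\cap E(\beta(x))$ and stores the minimum cost of a $D$ meeting those requirements --- and your worry at the join node is in fact unnecessary, since a partition-cut for $P$ in $G_x$ restricts to a partition-cut for the \emph{same} $P$ on each side, so no minimization over compatible refinements is needed.
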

\begin{proof}
Let~$I=(G=(V,E),s,t,c,\omega,d,a)$ be an instance of~\ZMCI.
In the following, we assume that there is no edge~$\{s,t\}\in E$ since if~$c(\{s,t\}) \leq d$, then~$\{\{s,t\}\}$ is a valid solution with cost at most~$d$ and, thus,~$I$ is a trivial yes-instance of~\ZMCI.
Otherwise, this edge is contained in every~$(s,t)$-cut and, thus, we can simply remove the edge from the graph and reduce~$a$ by~$\omega(\{s,t\})$.

We describe a dynamic programming algorithm on a tree decomposition.
First, we compute a nice tree decomposition~$(\mT=(\mathcal{V}, \mathcal{A}, r), \beta')$ of~$G - \{s,t\}$ with~$|\mathcal{V}| \leq 4n$ such that the bag of the root and the bag of each leaf is the empty set in~$\tw^{\Oh(\tw^3)}\cdot n +m$ time~\cite{K94,B96}. \todom{cite anderes papier/blaues buch für laufzeit von nice tree decomposition finden}
Next, we set~$\beta(x) := \beta'(x) \cup \{s,t\}$ for each~$x\in \mathcal{V}$. 
Note that~$(\mT, \beta)$ is a tree decomposition of width at most~$\tw+2$ for~$G$.
Recall that for a node~$x\in \mathcal{V}$, the vertex set~$V_x$ is the union of all bags~$\beta(y)$, where~$y$ is reachable from~$x$ in~$\mT$,~$G_x := G[V_x]$, and~$E_x := E_G(V_x)$.

The dynamic programming table~$T$ has entries of type~$T[x,f_x, D_x]$ with~$x\in \mathcal{V}$,~$f_x: B(\beta(x)) \to [0, a+1]$, and~$D_x \subseteq E(\beta(x))$. 
 Each entry stores the minimal cost of an edge set~$D \subseteq E_x$ with~$D_x := D \cap E(\beta(x))$ such that for every~$P \in B(\beta(x))$ the capacity of every partition-cut~$A \subseteq E_x\setminus D$ of~$P$ in~$G_x$ is at least~$f_x(P)$.
 
For each entry of~$T$, we will sketch the proof of the correctness of its recurrence. 
The formal correctness proof is then direct and thus omitted.
 
We start to fill the table~$T$ by setting for each leaf node~$\ell$ of~$\mT$:
 $$T[\ell, f_\ell, \emptyset]:= \begin{cases}0 & \text{if~}f_\ell(\{\{s\},\{t\}\}) = f_\ell(\{\{s,t\}\}) = 0, \\
 \infty & \text{otherwise.} \end{cases}.$$

 Recall that~$\beta(\ell) = \{s,t\}$ and that we assumed that there is no edge between~$s$ and~$t$ in~$G$. 
Hence,~$G_\ell$ contains no edges and, thus, the empty set is a partition-cut for both~$\{\{s\},\{t\}\}$ and~$\{\{s,t\}\}$, and has capacity zero. 
\todom{evtl erklären warum~$\infty$ } 

To compute the remaining entries~$T[x,f_x,D_x]$, we distinguish between the three types of non-leaf nodes.

\textbf{Forget node}: Let~$x$ be a forget node with child node~$y$ and let~$v$ be the unique vertex in~$\beta(y) \setminus \beta(x)$. 
Then we compute the table entries for~$x$ by:
 $$T[x, f_x, D_x] := \min_{E_v \subseteq E(v, \beta(x))} T[y, f_y, D_x\cup E_v]$$ where~$f_y(P) := f_x(P - v)$ for each~$P\in B(\beta(y))$.
 
The idea behind the definition of~$f_y(P)$ is that every partition cut for~$P$ in~$G_y$ must be as expensive as the partition cut of the unique partition of~$\beta(x)$ that agrees with~$P$ on~$\beta(x)$.
By the fact that~$G_x = G_y$, it follows that for each partition~$P\in B(\beta(x))$, an edge set~$A\subseteq E_x$ is a partition-cut for~$P$ in~$G_x$ if and only if~$A$ is also a partition-cut for some~$P'\in P+v$ in~$G_x$.
Since we are looking for the minimal costs of an edge set~$D \subseteq E_x$ such that every partition-cut disjoint from~$D$ for~$P -v$ in~$G_x$ has capacity at least~$f_x(P-v)$, it is thus necessary and sufficient that every partition-cut for~$P$ in~$G_y$ has capacity at least~$f_x(P-v)$.

\textbf{Introduce node}: Let~$x$ be an introduce node with child node~$y$ and let~$v$ be the unique vertex in~$\beta(x) \setminus \beta(y)$. 
Then we compute the table entries for~$x$ by:
 $$T[x, f_x, D_x] := T[y, f_y, D_x\cap E(\beta(y))] + c(D_x \setminus E(\beta(y)))$$ where~$f_y(P) := \max(\{0\} \cup \{f_x(P') - \omega(A_{P'})\mid P'\in (P+v), D_x  \cap A_{P'} = \emptyset\})$ for each~$P\in B(\beta(y))$ and~$A_{P'} := E(v, \beta(y) \setminus P'(v))$.

The idea behind the definition of~$f_y(P)$ is that, since every partition in~$P + v$ agrees with~$P$ in~$\beta(y)$, every partition cut for~$P$ in~$G_y$ must be sufficiently large to ensure that every partition cut for any partition in~$P + v$ is as least as expensive as desired.
Since we are looking for the minimum cost of an edge set~$D\subseteq E_x$ which intersects with~$E(\beta(x))$ in exactly the set~$D_x$, the cost of~$D$ is exactly~$c(D \cap E(\beta(y))) + c(D_x \setminus E(\beta(y)))$.
Let~$P'\in B(\beta(x))$.
Note that~$A_{P'}$ is a subset of every partition-cut for~$P'$ in~$G_x$.
Hence, if~$D_x  \cap A_{P'} = \emptyset$, then~$f_y(P'-v)$ has to be at least~$f_x(P') - \omega(A_{P'})$.
Otherwise, if~$D_x  \cap A_{P'} \neq \emptyset$, then there is no partition-cut for~$P'$ in~$G_x$ disjoint from~$D$.

\textbf{Join node}: Let~$x$ be a join node with child nodes~$y$ and~$z$.
Then we compute the table entries for~$x$ by:
 $$T[x, f_x, D_x] := \min_{f_y : B(\beta(y)) \to [0, a+1]} T[y, f_y, D_x] + T[z, f_z, D_x] - c(D_x)$$ where the mapping~$f_z$ is given by~$$f_z(P) := \max\left(0, \min\left(a+1, f_x(P) - f_y(P) + \omega(E(\beta(x)) \setminus E(P))\right)\right)$$ with $E(P) := \cup_{R\in P}E(R)$ for each~$P\in B(\beta(z))$.

The idea behind the definition of~$f_z(P)$ is that the no partition cut for~$P$ in~$G_z$ is more expensive than the sum of any combination of partition cuts for~$P$ in~$G_y$ and~$G_z$ minus the capacity of the cut-edges in the current bag.
Recall that we are looking for the minimum cost of an edge set~$D\subseteq E_x$ such that for each partition~$P\in B(\beta(x))$, every partition-cut for~$P$ in~$G_x$ disjoint from~$D$ has capacity at least~$f_x(P)$.
Since~$E_y \cap E_z = E(\beta(x))$ it follows that the cost of~$D$ is~$c(S_y) + c(S_z) - c(D_x)$, where~$S_y := E_y \cap D$ and~$S_z := E_z \cap D$. 
Moreover, note that for every partition~$P\in B(\beta(x))$, every partition-cut~$A_\alpha \subseteq E_\alpha$ for~$P$ in~$G_\alpha$ has to contain all edges of~$E(\beta(x)) \setminus E(P)$, where~$\alpha\in\{x,y,z\}$.
Thus, we have to guarantee that~$f_y(P) + f_z(P) - \omega(E(\beta(x)) \setminus E(P)) \geq f_x(P)$,~$f_y(P) > f_x(P)$, or~$f_z(P) > f_x(P)$.
 
Then, there is a solution~$D$ of cost at most~$d$ of~$I$ if and only if~$T[r, f_r, \emptyset] \leq d$, where~$r$ is the root of~$\mT$,~$f_r(\{\{s,t\}\}) = 0$ and~$f_r(\{\{s\},\{t\}\}) = a+1$.
Moreover, the corresponding set~$D$ can be found via traceback.
\iflong
It remains to show the running time.

\newcommand{\newTW}{k}

For every node~$x$ of~$\mT$, there are~$(a+2)^{|B(\beta(x))|}\cdot 2^{|\beta(x)|^2}$   entries. 
Since~$(\mT, \beta)$ has at most~$4n$ bags, each bag contains at most~$\newTW := \tw + 3$ vertices, and~$|B(X)| \leq |X|^{|X|}$, the dynamic programming table contains at most~$4n\cdot {(a+2)^{\newTW^{\newTW}}} \cdot 2^{\newTW^2}$ entries.
Now, we bound the running times of the four types of bags.

\begin{itemize}
\item An entry for a leaf node can be computed in~$\Oh(1)$ time.

\item For a forget node, we can compute the function~$f_y$ in~${{\newTW^{\newTW}}}\cdot \newTW^{\Oh(1)}$ time and iterate over all possible choices for~$E_v$ in~$2^k$ time. Thus, an entry in~${{\newTW^{\newTW}}}\cdot 2^{\newTW} \cdot \newTW^{\Oh(1)}$ time.

\item For an introduce node, we can compute the function~$f_y$ in~${{\newTW^{\newTW}}}\cdot \newTW^{\Oh(1)}$ time and thus the entry in the same running time.

\item For a join node, we have~${(a+2)^{\newTW}}^{\newTW}$ possibilities for~$f_y$ and for each of them, we can compute~$f_z$ in~${{\newTW^{\newTW}}}\cdot \newTW^{\Oh(1)}$ time. 
Hence, for a join node, we can compute an entry in~${(a+2)^{\newTW^{\newTW}}}\cdot \newTW^{\newTW} \cdot \newTW^{\Oh(1)}$ time.
\end{itemize}

The join nodes have the worst running time for any entry.
Thus, we can compute all entries of~$T$ in~${(a+2)^{2(\tw+3)^{\tw+3}}} \cdot {(\tw+3)}^{\tw+3} \cdot 2^{(\tw+3)^2}\cdot \tw^{\Oh(1)} \cdot n$ time and obtain the stated running time. 
\else
The analysis of the running time is deferred to the full version.
\fi
\lncsqed\end{proof}

Next, we show that we can use~\Cref{thm:tw und a} to obtain an~\FPT-algorithm for~\WMCI when parameterized by~$a$.
To this end, we first obtain the following corollary which follows from a result of Gutin et al.~\cite[Lemma~12]{GJS17}.

\begin{corollary}\label{cor: bounded tw}
Let~$G=(V,E)$ be a graph, let~$s$ and~$t$ be distinct vertices of~$G$, and let~$a$ be an integer. 
If every edge~$e\in E$ is contained in an inclusion-minimal~$(s,t)$-cut of size at most~$a$, then~$\tw(G) \leq g(a)$ for some computable function~$g$.
\end{corollary}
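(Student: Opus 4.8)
The plan is to deduce this from the treewidth reduction theorem for small $s$-$t$ separators~\cite{MOR13} (the variant~\cite[Lemma~12]{GJS17} being tailored to edge cuts). First I would pass from edge cuts to vertex separators via a gadget: let $H$ be obtained from $G$ by replacing every vertex $v\notin\{s,t\}$ with a clique $K_v$ on $a+1$ vertices, keeping $s$ and $t$ as singleton terminals, and replacing every edge $e=\{u,w\}$ of $G$ with a single vertex $m_e$ made adjacent to all of $K_u$ and all of $K_w$ (reading $K_s=\{s\}$, $K_t=\{t\}$). Paths in $H$ project to walks in $G$ and walks in $G$ lift to paths in $H$, so for $m$-vertices $\{m_{e_1},\dots,m_{e_j}\}$ we have that this set is an $(s,t)$-separator of $H$ if and only if $\{e_1,\dots,e_j\}$ is an $(s,t)$-edge-cut of $G$.

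Next I would show that every \emph{inclusion-minimal} $(s,t)$-separator $S$ of $H$ with $|S|\le a$ satisfies $S\subseteq\{m_e\mid e\in E(G)\}$. Indeed, since $|K_v|=a+1>|S|$, the clique $K_v$ always contains a vertex $x'\notin S$, and $x'$ is adjacent to every neighbour in $H$ of any $x\in K_v$; hence any $(s,t)$-path through some $x\in S\cap K_v$ can be rerouted through $x'$, contradicting minimality. Combined with the path correspondence this gives that $\{m_{e_1},\dots,m_{e_j}\}$ is an inclusion-minimal $(s,t)$-separator of $H$ if and only if $\{e_1,\dots,e_j\}$ is an inclusion-minimal $(s,t)$-edge-cut of $G$. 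Therefore the hypothesis of the corollary is equivalent to saying that the set $C$ of vertices of $H$ that lie on some minimal $(s,t)$-separator of size at most $a$ equals exactly $\{m_e\mid e\in E(G)\}$.

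Now I would apply the treewidth reduction theorem to $(H,s,t,a)$. It produces a graph $H^*$ with vertex set $C\cup\{s,t\}$ and $\tw(H^*)\le h(a)$ for some computable $h$, obtained from $H$ by, for every connected component $D$ of $H-(C\cup\{s,t\})$, turning $N_H(D)$ into a clique and deleting $D$. Here the components of $H-(C\cup\{s,t\})$ are precisely the cliques $K_v$ with $v\notin\{s,t\}$, and $N_H(K_v)=\{m_e\mid v\in e\}$; thus $H^*$ lives on $\{m_e\mid e\in E(G)\}\cup\{s,t\}$ with $m_e\sim m_{e'}$ whenever $e$ and $e'$ share a non-terminal vertex and $m_e\sim x$ for $x\in\{s,t\}\cap e$. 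From a width-$w$ tree decomposition of $H^*$ I would build one for $G$ by replacing every bag $B_x$ with the set of $G$-endpoints of the edges $e$ with $m_e\in B_x$, together with $B_x\cap\{s,t\}$; this has width at most $2w+1$, and its validity uses that for a fixed vertex $v$ of $G$ the bags hosting the $m_e$ with $v\in e$ form a connected subtree, because those $m_e$ are pairwise adjacent in $H^*$ when $v$ is non-terminal, and all adjacent to the terminal vertex when $v\in\{s,t\}$. Hence $\tw(G)\le 2h(a)+1=:g(a)$, with $g$ computable.

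The main obstacle is twofold: proving faithfulness of the gadget, i.e.\ that no small inclusion-minimal separator of $H$ enters a clique $K_v$ (the rerouting argument, which is exactly where the choice $|K_v|=a+1$ is needed), and transferring the treewidth bound back from $H^*$ — which is essentially the line graph of $G$ and is \emph{not} a subgraph of $G$ — to $G$ itself via the tree-decomposition surgery above. Importing~\cite[Lemma~12]{GJS17} directly as the edge-cut analogue of the treewidth reduction theorem subsumes both steps and makes the corollary immediate.
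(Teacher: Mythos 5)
Your proposal is correct, and its closing remark coincides with what the paper actually does: the paper gives no proof of this corollary beyond stating that it follows from a result of Gutin et al.~(their Lemma~12), which is precisely the edge-cut analogue of the treewidth reduction theorem that you identify at the end. The bulk of your argument is therefore a genuinely different, self-contained route: you reduce the edge-cut setting to the vertex-separator setting of Marx, O'Sullivan, and Razgon by blowing each non-terminal vertex up into a clique of size $a+1$ and representing each edge by a subdivision vertex $m_e$, prove faithfulness (small inclusion-minimal separators of $H$ avoid the cliques, by the rerouting argument -- this is sound, since a vertex of $S\cap K_v$ can always be replaced on any path by a clique-mate outside $S$ with the same external neighbourhood), and then transfer the treewidth bound from the torso $H^*$ back to $G$ by doubling bags; the connectivity of the resulting decomposition correctly exploits that the $m_e$ incident with a fixed non-terminal $v$ form a clique in the torso, so their bag-subtrees pairwise intersect and have connected union. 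Two small points of care: you need the version of the treewidth reduction theorem that bounds the treewidth of the \emph{torso} of $C\cup\{s,t\}$ (which is indeed what Marx et al.\ establish, though their headline statement is phrased in terms of an auxiliary graph $G^*$ preserving separators), and isolated vertices of $G$ disappear in your construction and must be reinserted into arbitrary bags, which is harmless. What the citation buys the paper is immediacy; what your derivation buys is independence from the edge-cut formulation of the lemma, at the cost of the gadget analysis and a factor-two loss in the width bound.
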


Hence, to obtain an~\FPT-algorithm for~\WMCI with the parameter~$a$, we only have to find an equivalent instance in polynomial time where each edge is contained in some inclusion-minimal~$(s,t)$-cut of size at most~$a$.
Since each edge in an instance of~\WMCI has capacity at least one, by applying Rule~\ref{rr: every edge in minimal cut} exhaustively we obtain an equivalent instance of~\WMCI where each edge is contained in some inclusion-minimal~$(s,t)$-cut of \emph{size} at most~$a$.
Hence, we obtain the following by combining~\Cref{lem: every edge in minimal cut},~\Cref{cor: bounded tw}, and~\Cref{thm:tw und a}.

\begin{theorem}\label{theo: FPT a}
\WMCI is~\FPT when parameterized by~$a$.
\end{theorem}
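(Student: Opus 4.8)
The plan is to assemble three ingredients that are already in place: the merge-based reduction rule, the treewidth bound for instances whose every edge lies in a small inclusion-minimal cut, and the treewidth-parameterized algorithm of \Cref{thm:tw und a}. First I would take an arbitrary instance $I=(G,s,t,c,\omega,d,a)$ of \WMCI{} and apply Rule~\ref{rr: every edge in minimal cut} exhaustively; by \Cref{lem: every edge in minimal cut} this is safe and runs in $n^{\Oh(1)}$ time, and in the resulting instance $I'=(G',s',t',c',\omega',d,a)$ no further application of the rule is possible, i.e. every edge of $G'$ is contained in some inclusion-minimal $(s,t)$-cut of capacity at most $a$. Since every edge of a \WMCI{} instance has capacity at least one, and the merge operation of \Cref{def:edge contract} sets the new capacities to sums of positive integers and hence keeps all capacities at least one, such a cut consists of at most $a$ edges; thus every edge of $G'$ lies in an inclusion-minimal $(s,t)$-cut of \emph{size} at most $a$.

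Second, I would invoke \Cref{cor: bounded tw} on $G'$ to conclude that $\tw(G') \le g(a)$ for the computable function $g$ provided by that corollary. If desired, one may additionally re-cap every capacity at $a+1$ as observed in the preliminaries, which keeps the numbers polynomially bounded and does not change the collection of inclusion-minimal $(s,t)$-cuts of capacity at most $a$, hence does not affect the argument.

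Finally, since \WMCI{} is precisely the special case of \ZMCI{} with strictly positive capacities, \Cref{thm:tw und a} solves $I'$ in ${a^{\tw(G')}}^{\Oh(\tw(G'))}\cdot n' + m' \le {a^{g(a)}}^{\Oh(g(a))}\cdot n + m$ time. Combined with the polynomial time spent on exhaustively applying Rule~\ref{rr: every edge in minimal cut}, this gives a total running time of $h(a)\cdot n^{\Oh(1)}$ for a computable function $h$, which is the claimed \FPT-algorithm for the parameter $a$; correctness follows because each invoked step preserves yes/no instances.

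Since all the heavy lifting lives in \Cref{cor: bounded tw} (imported from Gutin et al.) and \Cref{thm:tw und a} (the tree-decomposition dynamic program), there is no genuine obstacle in this proof; the only point that needs a sentence of care is the passage from ``capacity at most $a$'' to ``size at most $a$'', which is exactly where one uses that capacities are at least one and that this property survives the merge operation, so that \Cref{cor: bounded tw} — stated in terms of cut size — is applicable to the reduced instance.
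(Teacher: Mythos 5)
Your proposal is correct and follows essentially the same route as the paper: exhaustively apply Rule~\ref{rr: every edge in minimal cut} (safe and polynomial by \Cref{lem: every edge in minimal cut}), use that all capacities are at least one to pass from capacity at most~$a$ to size at most~$a$, invoke \Cref{cor: bounded tw} to bound the treewidth by~$g(a)$, and finish with the dynamic program of \Cref{thm:tw und a}. Your extra remark that the merge operation preserves positive capacities is a small point the paper leaves implicit, but it is not a different argument.
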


\iflong
\todom{explain why not possible for \ZMCI}
Note that this is not possible for~\ZMCI due to~\Cref{thm: w hard d and a}.

\begin{theorem}\label{thm: w hard d and a}
\ZMCI{} is \W1-hard when parameterized by~$d+a$ even if~$\omega(e) \in \{0,1\}$ for all~$e\in E$.
\end{theorem}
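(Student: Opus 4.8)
The plan is to reduce from \textsc{Biclique} -- given a bipartite graph~$H=(X\cup Y,F)$ and an integer~$k$, decide whether~$H$ contains~$K_{k,k}$ as a subgraph -- which is \W1-hard with respect to~$k$. From such an instance I would build in polynomial time an instance~$I'=(G,s,t,c,\omega,d,a)$ of~\ZMCI with~$\omega(e)\in\{0,1\}$ for every edge~$e$, with~$a:=k^2-1$, and with~$d\in\Oh(k^2)$, so that~$d+a\in\Oh(k^2)$ depends only on~$k$. Costs, in contrast to capacities, are allowed to be large; in particular an edge of cost~$d+1$ is effectively unprotectable and hence always available to the attacker.

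The construction rests on the following reformulation of~\ZMCI. Since putting a capacity-$0$ edge into an~$(s,t)$-cut never increases its capacity, a protected set~$D$ is a solution if and only if, in the graph obtained from~$G$ by deleting the unprotected capacity-$0$ edges and contracting every edge of~$D$ (a protected capacity-$1$ edge is uncuttable, so it too behaves like a contracted edge), the minimum~$(s,t)$-cut among the remaining capacity-$1$ edges has size at least~$a+1$, or~$s$ and~$t$ have been identified. Thus the defender can delete nothing: protecting a capacity-$0$ edge merely \emph{contracts} it, possibly adding a shortcut that routes around an otherwise valid small cut. In~$G$, the capacity-$1$ edges form a fixed skeleton that, when ``fully activated'', consists of exactly~$a+1=k^2$ pairwise edge-disjoint~$s$--$t$ paths -- one ``lane'' for each pair~$(i,j)\in[1,k]\times[1,k]$ -- and every skeleton edge has cost~$d+1$ and is therefore unprotectable. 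Lane~$(i,j)$ carries a genuine~$s$--$t$ path only if a specific capacity-$0$ ``row-$i$ selector'' edge and a specific capacity-$0$ ``column-$j$ selector'' edge are both protected, and moreover the link joining the two halves of lane~$(i,j)$ is present exactly when the~$X$-vertex selected in row~$i$ and the~$Y$-vertex selected in column~$j$ are adjacent in~$H$.

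For each~$i\in[1,k]$ there is a ``row bundle'' of capacity-$0$ selector edges of cost~$1$, one per vertex of~$X$, and symmetrically a ``column bundle'' for each~$j\in[1,k]$. These selectors are the only edges of cost at most~$d$, so a solution protects only selectors, and~$d$ is chosen so that a solution can afford to realize exactly one~$X$-selection per row and one~$Y$-selection per column and nothing more that helps. A row or column bundle with no protected selector leaves its~$k$ lanes forever incomplete, hence keeps the minimum capacity-$1$ cut at~$a$ or below, so a solution must pick exactly one vertex per row and per column. By the reformulation above, such a selection yields a solution precisely when all~$k^2$ lanes get activated, that is, when every selected~$X$-vertex is adjacent in~$H$ to every selected~$Y$-vertex -- i.e.\ when the selected vertices span~$K_{k,k}$ in~$H$. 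The two directions then follow: given a~$K_{k,k}$ in~$H$, protecting the selectors of its vertices activates all~$k^2$ lanes and, by Menger's theorem, raises the minimum capacity-$1$ cut to~$k^2>a$; conversely, any solution of cost~$\le d$ induces a selection that, were it not a~$K_{k,k}$, would leave some lane inactive and thereby expose an~$(s,t)$-cut of at most~$a$ capacity-$1$ edges.

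I expect the main obstacle to be the routing and the budget bookkeeping: one must realize the~$k^2$ lanes so that they are genuinely pairwise edge-disjoint while a single row selector still simultaneously feeds the~$k$ lanes of its row -- a grid- or tiling-style layout in which row-wires and column-wires cross in~$k^2$ cell gadgets, padded enough that no small capacity-$1$ cut other than one killing an incomplete lane exists -- and one must prevent protecting spurious capacity-$0$ edges from creating an unintended shortcut that activates a lane without the corresponding~$H$-edge, as well as rule out any ``degenerate'' way of spending the budget~$d$. These are precisely the points that fix the choice of capacities and costs; once they are settled, the equivalence of~$I$ and~$I'$ and the bound~$d+a\in\Oh(k^2)$ are immediate.
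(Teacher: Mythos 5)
Your high-level plan coincides with the paper's: reduce from \BIC, set $a:=k^2-1$, and build a graph in which a ``correct'' selection of cheap capacity-$0$ edges forces every $(s,t)$-cut avoiding $D$ to contain $k^2$ distinct capacity-$1$ edges of cost $d+1$, while an ``incorrect'' selection leaves a witness cut of capacity at most $a$. Your reformulation (protecting an edge makes it behave like a contracted edge; an unprotected capacity-$0$ edge behaves like a deleted one) is also sound and is implicitly what the paper exploits. However, what you have written is a plan rather than a proof: the cell gadgets, the routing that keeps the $k^2$ lanes edge-disjoint while one row selector feeds all $k$ cells of its row, and the exclusion of unintended cuts and shortcuts are precisely where all of the work lies, and you explicitly defer them. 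The backward direction in particular requires exhibiting, for every cheap $D$ that does not encode a biclique, a concrete $(s,t)$-cut of capacity at most $a$ avoiding $D$; that argument cannot be carried out without the concrete gadget. Two further issues in the sketch as given: first, selecting ``one $X$-vertex per row'' for $k$ rows does not prevent two rows from choosing the same vertex, in which case all $k^2$ lanes can be activated by a $(k',k)$-biclique with $k'<k$; you would need to start from the multicolored (partitioned) version of \BIC or add a distinctness mechanism. Second, the statement that the link in cell $(i,j)$ ``is present exactly when the selected vertices are adjacent'' cannot be taken literally, since the graph is fixed before the defender moves; each cell would have to encode the entire adjacency relation between $X$ and $Y$, which is exactly the padding you acknowledge but do not construct.

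For comparison, the paper's realization avoids the grid entirely. It subdivides each edge $\{x,y\}$ of the \BIC{} instance by a vertex $w_{xy}$, with $\omega(\{x,w_{xy}\})=1$ and $c(\{x,w_{xy}\})=d+1$ (unprotectable, cuttable) and $\omega(\{w_{xy},y\})=0$ and $c(\{w_{xy},y\})=2k+1$ (protectable, free to cut), and attaches $s$ to every $x\in X$ and every $y\in Y$ to $t$ by capacity-$0$, cost-$1$ edges. A solution protects $\{s,x\}$ and $\{y,t\}$ for the $2k$ biclique vertices and $\{w_{xy},y\}$ for the $k^2$ biclique pairs; on each path $(s,x,w_{xy},y,t)$ the only unprotected edge is then the distinct capacity-$1$ edge $\{x,w_{xy}\}$, so every avoiding cut has capacity at least $k^2=a+1$. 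The selection is of vertex \emph{sets} directly, so the distinctness problem never arises, and the budget accounting ($d=(2k+1)k^2+2k$) together with an explicit family of witness cuts carries the backward direction. If you want to salvage your grid-style layout, you must supply the cell gadget, prove lane disjointness, and switch to multicolored \BIC; as it stands the proposal has a genuine gap.
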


\begin{proof}
We describe a parameterized reduction from \BIC which is known to be \W1-hard when parameterized by~$k$~\cite{CFK+15}.

\prob{\BIC}{A bipartite graph~$G=(X \cup Y,E)$  with partite sets~$X$ and~$Y$ and an integer~$k$.}{Does~$G$ contain a $(k,k)$-biclique?}

Let~$I=(G=(X \cup Y,E),k)$ be an instance of~\BIC. 
Now, we describe how to construct an instance~$I'=(G'=(V',E'),s,t,c,\omega,d,a)$ of~\ZMCI in polynomial time such that~$I$ is a yes-instance of~\BIC if and only if~$I'$ is a yes-instance of~\ZMCI.
  
The graph~$G'$ contains the graph~$G$ as a copy together with two new vertices~$s$ and~$t$ and edges~$F:=\{\{s,x\}\mid x\in X\} \cup \{\{y,t\}\mid y\in Y\}$.
Furthermore, each edge~$\{x,y\}\in E$ is subdivided by a new vertex~$w_{xy}$ in~$G'$. 
Hence, the graph~$G'$ contains the edges~$\{x,w_{xy}\}$ and~$\{w_{xy},y\}$ instead of the edge~$\{x,y\}$.
We define~$E_X:=\{\{x,w_{xy}\}\mid x\in X\}$ and~$E_Y:=\{\{w_{xy},y\}\mid y\in Y\}$.
We set~$d:=(2k+1)k^2+2k=2k^3+k^2+2k$ and for each edge~$\{x,y\}\in E$ we set~$c(\{x,w_{xy}\}):=d+1$,~$\omega(\{x,w_{xy}\}):=1$,~$c(\{w_{xy},y\}):=2k+1$, and~$\omega(\{w_{xy},y\}):=0$.
Furthermore, for each edge~$e\in F$ we set~$c(e):=1$, and~$\omega(e):=0$.
Finally, we set~$a := k^2-1$ which completes the construction of~$I'$.

$(\Rightarrow)$ 
Suppose that~$I$ is a yes-instance of \BIC. 
Then there exist sets~$S_X \subseteq X$ and~$S_Y\subseteq Y$ of size~$k$ each, such that~$\{x,y\}\in E$ for all~$x\in S_X$ and~$y\in S_Y$.
We set~$D:=\{\{w_{xy},y\}\mid x\in S_X, y\in S_Y\}\cup \{\{s,x\}\mid x\in S_X\}\cup \{\{y,t\}\mid y\in S_Y\}$.
Observe that~$|D| =(2k+1)k^2+2k = d$. 
Next, we show that~$D$ is a solution of~$I'$.

Consider for each~$x\in S_X$ and for each~$y\in S_Y$ the~$(s,t)$-path~$(s,x,w_{xy},y,t)$. 
Since~$\{s,x\}\in D$,~$\{w_{xy},y\}\in D$, and~$\{y,t\}\in D$, each~$(s,t)$-cut~$A$ has to contain the edge~$\{x,w_{xy}\}$. 
By the fact that~$\omega(\{x,w_{xy}\})=1$ for each~$x\in S_X$ and each~$y\in S_Y$ we observe that~$\omega(A) \geq |S_X\times S_Y| = k^2 = a+1$.
Hence,~$I'$ is a yes-instance of~\ZMCI.


$(\Leftarrow)$
Let~$D$ be a solution with cost at most~$d$ of~$I'$.
Observe that for each edge~$e\in E_X$ we have~$d(e)=d+1$. 
Thus,~$E_X\cap D=\emptyset$. 
Furthermore, observe that for each edge~$e\in E_Y$ we have~$d(e)=2k+1$. 
Hence, for the set~$E_D:=D\cap E_Y$ we conclude that~$|E_D|\le k^2$. 
Next, we define the vertex sets~$S_X := \{x\mid \{s,x\}\in D\}$, the set of endpoints of edges in~$D$ incident with~$s$, and~$S_Y := \{y\mid \{y,t\}\in D\}$, the set of endpoints of edges in~$D$ incident with~$t$. 
In the following, we describe an~$(s,t)$-cut~$A$ for~$G'$ that avoids~$D$. 
We partition~$A$ into two sets~$A_0$ and~$A_1$, where~$A_0:=(E_Y\cup F)\setminus D$ and~$A_1:=\{\{x,w_{xy}\}\mid \{w_{xy},y\}\in E_D\}$. 
Next, we show that~$A$ is an~$(s,t)$-cut for~$G'$: 

Observe that every~$(s,t)$-path contains at least one subpath~$(x,w_{xy},y)$ for a vertex~$x\in X$ and a vertex~$y\in Y$ as an induced subgraph. If~$\{w_{xy},y\}\in E_D$ then~$\{x,w_{xy}\}\in A_1$, and otherwise if~$\{w_{xy},y\}\notin D_S$ then~$\{w_{xy},y\}\in A_0$. 
Furthermore, observe that the~$(s,t)$-cut~$A$ has capacity~$\omega(A_0) + \omega(A_1) = \omega(A_1) = |E_D|\le k^2$. 
Since~$D$ is a solution of~$I'$, we conclude that~$A$ has capacity at least~$a+1 = k^2$ and, thus,~$|E_D| = \omega(A) = k^2$.
Thus,~$|D\cap F|\le 2k$. 

Next, assume towards a contradiction that the set~$E_D$ contains an edge~$\{w_{xy},y\}$ such that~$x\notin S_X$ or~$y\notin S_Y$. 
Without loss of generality assume that~$y\notin S_Y$. 
We set~$A^* := A \setminus \{\{x, w_{xy}\}\}$ and show that~$A^*$ is an~$(s,t)$-cut in~$G'$.
Since~$\{y,t\}\in A^*$ and for each~$x' \in N_G(y)\setminus \{x\}$ either~$\{x', w_{x'y}\}\in A^*$ or~$\{w_{x'y}, y\}\in A^*$, we obtain that~$A^*$ is an~$(s,t)$-cut of capacity~$\omega(A) - \omega(\{x, w_{xy}\}) = k^2-1 = a$.
A contradiction. 
Hence, for each edge~$\{w_{xy},y\}\in D$ we have~$\{s,x\}\in D$ and~$\{y,t\}\in D$. 
Since~$|E_D|=k^2$, we conclude that~$|S_X|=k=|S_Y|$.
Consequently,~$(S_X,S_Y)$ is a~$(k,k)$-biclique in~$G$ and, thus,~$I$ is a yes-instance of~\BIC.
\lncsqed\end{proof}

Together with~\Cref{cor-weighted-to-unweigthed} we obtain the following.

\begin{corollary}
\ZMCI{} is \W1-hard when parameterized by~$d+a$ even if~$c(e) = 1$ and~$\omega(e) \in \{0,1\}$ for all~$e\in E$.
\end{corollary}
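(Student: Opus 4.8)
The plan is to obtain the corollary as a routine composition of Theorem~\ref{thm: w hard d and a} with the cost-splitting gadget of Lemma~\ref{lem to unit costs}, in exactly the same spirit as Corollary~\ref{cor-weighted-to-unweigthed} was derived from Lemmas~\ref{lem to unit costs} and~\ref{lem to unit weights}. Recall that the instance~$I'$ built in the proof of Theorem~\ref{thm: w hard d and a} from a \BIC-instance with parameter~$k$ already satisfies~$\omega(e)\in\{0,1\}$ for every edge~$e$, has~$d=2k^3+k^2+2k$ and~$a=k^2-1$, and uses only the edge costs~$1$,~$2k+1$, and~$d+1$; in particular~$d+a\in k^{\Oh(1)}$. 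Thus it suffices to transform~$I'$ in polynomial time into an equivalent \ZMCI-instance with unit costs, without altering~$d$,~$a$, or the property that all capacities lie in~$\{0,1\}$.

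To do this, I would apply Lemma~\ref{lem to unit costs} exhaustively: whenever an edge~$e^*$ has~$c(e^*)>1$, replace it by a path of~$c(e^*)$ edges of cost~$1$ and capacity~$\omega(e^*)$, obtaining an equivalent instance. Since~$\omega(e^*)\in\{0,1\}$, every newly introduced edge again has capacity in~$\{0,1\}$, so this invariant is preserved; and Lemma~\ref{lem to unit costs} changes neither~$d$ nor~$a$, so the parameter~$d+a$ is unchanged. After handling all edges of cost larger than one we arrive at an equivalent \ZMCI-instance~$I''$ with~$c(e)=1$ and~$\omega(e)\in\{0,1\}$ for all edges.

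Finally, I would check that the overall reduction is polynomial and hence a genuine parameterized reduction: each application of Lemma~\ref{lem to unit costs} to an edge~$e^*$ adds at most~$c(e^*)-1\le d\in k^{\Oh(1)}$ new vertices and edges, and there are~$\Oh(|E(G')|)$ edges in~$I'$, so~$I''$ has size polynomial in~$k$, and~$I'$ itself is computed in polynomial time from the \BIC-instance. Composing with Theorem~\ref{thm: w hard d and a} then yields a parameterized reduction from \BIC, establishing W[1]-hardness of \ZMCI with respect to~$d+a$ in the claimed restricted setting. There is no genuine difficulty in this argument; the only things to verify are that the cost-splitting gadget keeps capacities in~$\{0,1\}$ (immediate, since the path edges inherit~$\omega(e^*)$) and that the size blow-up stays polynomial (immediate, since all costs in~$I'$ are bounded by~$d+1=k^{\Oh(1)}$).
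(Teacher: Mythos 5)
Your proposal is correct and is essentially the paper's own derivation: the paper obtains this corollary by combining Theorem~\ref{thm: w hard d and a} with the weighted-to-unweighted transformation, which in this zero-capacity setting amounts to exactly the exhaustive application of the cost-splitting gadget of Lemma~\ref{lem to unit costs} that you describe. Your explicit checks that the path edges inherit capacities in~$\{0,1\}$, that~$d$ and~$a$ are unchanged, and that the blow-up is polynomial because all costs are bounded by~$d+1\in k^{\Oh(1)}$ are precisely the (routine) points the paper leaves implicit.
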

\fi

\section{Parameterization by Vertex Cover Number}

We investigate the parameterization by the vertex cover number~$\vc(G)$.
Observing that for \MCI the number of protected edges~$d$ is at most~$2\vc(G)$ in nontrivial instances, eventually leads to the following FPT result.

\begin{theorem}\label{thm-mci-fpt-vc}
\MCI can be solved in~$2^{\Oh(\vc(G)^2)} \cdot n^{\Oh(1)}$~time.
\end{theorem}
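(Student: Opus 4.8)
Let me outline how I would prove Theorem~\ref{thm-mci-fpt-vc}. The plan is to exploit a small vertex cover $C$ of $G$ with $|C| = \vc(G) =: k$, together with the crucial observation (already announced before the statement) that in a nontrivial instance the defender budget satisfies $d \le 2k$. The first step is to argue this bound: if there were an $(s,t)$-path of length at most $2k+1$ edges, protecting all its edges would immediately be a solution of size $\le d$ whenever $d \ge 2k+1$, so nontrivial instances have $d \le 2k$; and more carefully, any inclusion-minimal $(s,t)$-cut that the attacker could use is governed by how the path-structure interacts with $C$. Combined with Theorem~\ref{theo:FPTda}, which solves \WMCI{} in $a^d \cdot n^{\Oh(1)}$ time, it therefore suffices to also bound $a$ by a function of $k$ in the relevant instances — or, failing that, to branch more cleverly.

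\textbf{Reducing the attacker budget.} The natural route is: after computing $C$, observe that $s$ and $t$ may or may not lie in $C$; add them to $C$ if necessary so $|C| \le k+2$. Every $(s,t)$-path alternates between $C$ and the independent set $I = V \setminus C$, and each vertex of $I$ has all its neighbors in $C$. Hence any inclusion-minimal $(s,t)$-cut, and in particular the cheap cuts the attacker wants, is essentially determined by a choice of how to separate the $\binom{|C|}{2}$ pairs inside $C$ — there are only $2^{\Oh(k^2)}$ relevant "cut patterns" on the edges within $C$ and on the bundles of parallel-ish paths through degree-one-into-$C$ vertices of $I$. So I would (a) enumerate, in $2^{\Oh(k^2)}$ time, all candidate partitions of $C$ into the $s$-side and $t$-side (and finer blocks), (b) for each such partition compute the cheapest attacker cut consistent with it using max-flow/min-cut on the quotient (Fact~\ref{Lemma: Compute Cut}), and (c) run the search-tree algorithm of Theorem~\ref{theo:FPTda} but branch only over the $\Oh(k^2)$ edge-groups (edges inside $C$, and the at most $|C|$ bundles of $C$--$I$ edges incident to each cover vertex) rather than over individual edges. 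Since $d \le 2k$, the search tree has depth $\le 2k$ and branching factor $2^{\Oh(k^2)}$ (or $\Oh(k^2)$ if we branch on groups), giving total time $2^{\Oh(k^2)} \cdot n^{\Oh(1)}$.

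\textbf{Making the grouping rigorous.} The cleaner formulation, which I would actually write up, is this: apply Reduction Rule~\ref{rr: every edge in minimal cut} exhaustively (Lemma~\ref{lem: every edge in minimal cut}) so that every edge lies in some inclusion-minimal $(s,t)$-cut of size $\le a$; then merge, via Definition~\ref{def:edge contract}, each maximal set of vertices of $I$ that have identical neighborhoods in $C$ — after merging, the number of vertices of $I$ is $2^{\Oh(k)}$, so $|V(G)| = 2^{\Oh(k)}$ and $|E(G)| = 2^{\Oh(k)}$, and hence in a nontrivial instance $a \le 2^{\Oh(k)}$ as well. With both $a \le 2^{\Oh(k)}$ and $d \le 2k$, Theorem~\ref{theo:FPTda} runs in $\left(2^{\Oh(k)}\right)^{2k} \cdot n^{\Oh(1)} = 2^{\Oh(k^2)} \cdot n^{\Oh(1)}$ time, and the reductions take $n^{\Oh(1)}$ time, which is exactly the claimed bound.

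\textbf{Main obstacle.} The delicate point is the claim $d \le 2k$ in nontrivial instances: it is not literally true that a short $(s,t)$-path exists, only that if no cheap solution exists then the attacker always has a cut, and one must verify that protecting more than $2k$ edges is never forced — equivalently, that whenever the defender needs $> 2k$ protected edges the instance is a \emph{no}-instance. I expect that argument (essentially: a minimal $(s,t)$-cut in $G$ restricted to the bipartite-like structure around $C$ has size controllable by $k$, so the attacker cannot be blocked with few protections unless $a$ is already tiny, in which case $d$ small suffices) to be the part requiring the most care; everything downstream is bookkeeping on top of Theorems~\ref{theo:FPTda} and the merging machinery.
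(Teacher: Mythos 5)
Your first step is right but your worry about it is misplaced: the bound ``$d\geq 2\vc(G)$ implies yes-instance'' is immediate, because a \emph{shortest} $(s,t)$-path has at most $2\vc(G)$ edges (no two consecutive path vertices lie in the independent set, and each cover vertex occurs once and covers at most two path edges), and protecting all edges of any $(s,t)$-path is always a solution since every $(s,t)$-cut must meet every $(s,t)$-path. There is nothing delicate there. The genuine gap is in your second step. Your ``cleaner formulation'' --- merge all twin vertices of $I=V\setminus C$ with identical neighborhoods in $C$ to get $|V|=2^{\Oh(k)}$ and hence $a\le 2^{\Oh(k)}$ --- is not a safe reduction. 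Twins in $I$ are not adjacent, so Definition~\ref{def:edge contract} does not even apply to them, and more importantly the \emph{number} of twins in a class $I_X$ is exactly what determines how expensive it is for the attacker to cut the bundle of internally disjoint length-two paths through $I_X$: collapsing $t$ twins to one vertex changes a cut of size $\Theta(t)$ into a cut of size $\Oh(|X|)$ (or turns the instance into a weighted one with a different cost structure), so the merged instance is not equivalent. Without that shrinking, your route through Theorem~\ref{theo:FPTda} only yields $a^d\cdot n^{\Oh(1)}$ with $a$ possibly as large as $n$ (e.g.\ $s$ and $t$ both adjacent to all of $I$ gives $\vc=2$ but minimum cut $n-2$), i.e.\ $n^{\Oh(k)}$ time, which is XP rather than FPT.

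The paper closes this gap differently: it never bounds $a$ at all. The key lemma (its Claim~2) is an exchange argument on the \emph{defender's} side: there is a minimum solution $D$ with $|D\cap E_X|\le |X|$ for every neighborhood class $X\subseteq C$, and one may assume these edges are incident to a fixed set $I'_X$ of at most $|X|$ representatives of $I_X$ --- the remaining twins stay in the graph (they still constrain the attacker) but are irrelevant for $D$. This restricts the candidate edge set for $D$ to $E(C)\cup\bigcup_{X\subseteq C}E(X,I'_X)$, of size $\vc^2(1+2^{\vc-1})=2^{\Oh(\vc)}$, and since $d<2\vc$ a direct enumeration of all subsets of size at most $d$, each verified by the polynomial-time min-cut check of Fact~\ref{Lemma: Compute Cut}, gives $2^{\Oh(\vc^2)}\cdot n^{\Oh(1)}$. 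Your ``branch on $\Oh(k^2)$ edge groups'' sketch gestures at this symmetry but never proves the exchange step, which is the actual content of the theorem; supplying that argument (and dropping the twin-merging) would repair your proof.
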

\iflong
\begin{proof}
Let~$J:=(G,s,t,d,a)$ be an instance of~\MCI, and let~$\vc$ be the size of a minimum vertex cover of~$G$. The algorithm that we describe here is based on two observations which we formalize in two claims. The first claim states that the defender budget~$d$ is upper bounded by~$2 \cdot \vc$.

\begin{myclaim} \label{Claim: vc bound for d}
If~$d \geq 2 \cdot \vc$, then~$J$ is a yes-instance.
\end{myclaim}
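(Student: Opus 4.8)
The plan is to exhibit a solution $D$ with $|D|\le 2\vc\le d$, namely the edge set of a single simple $(s,t)$-path. The reason protecting a whole $(s,t)$-path $P$ works is that any $(s,t)$-cut $A$ disjoint from $D:=E(P)$ leaves $P$ intact in $G-A$, so $A$ is not an $(s,t)$-cut at all; hence there is \emph{no} $(s,t)$-cut disjoint from $D$, and the required condition ``every $(s,t)$-cut $A\subseteq E\setminus D$ has size more than $a$'' holds vacuously. So it only remains to find an $(s,t)$-path on at most $d$ edges, and since $G$ is connected (which we assume throughout) such a path exists as soon as some $(s,t)$-path has length at most $d\ge 2\vc$.

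The key step is therefore the length bound: every simple $(s,t)$-path in $G$ has at most $2\vc$ edges. First I would fix a minimum vertex cover $C$ with $|C|=\vc$ and note that $I:=V\setminus C$ is an independent set, so along any simple path $P=(v_0,\dots,v_k)$ no two consecutive vertices lie in $I$. Thus the vertices of $P$ lying in $I$ form $q$ singleton blocks (where $q$ is the number of $I$-vertices on $P$), and these blocks are pairwise separated by nonempty blocks of $C$-vertices, so $P$ contains at least $q-1$ vertices of $C$; writing $p$ for the number of $C$-vertices on $P$ this gives $p\ge q-1$, hence $k+1=p+q\le 2p+1$, i.e.\ $p\ge k/2$. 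Since $p\le|C|=\vc$, we conclude $k\le 2\vc$.

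Putting this together: pick any simple $(s,t)$-path $P$ in $G$; by the bound it has at most $2\vc\le d$ edges, so $D:=E(P)$ is a valid choice for the defender, and by the first paragraph every $(s,t)$-cut disjoint from $D$ has size more than $a$ vacuously, so $D$ is a solution and $J$ is a yes-instance. The only delicate point is getting the constant in the path-length bound down to exactly $2\vc$ rather than $2\vc+1$; this is exactly what the block-counting inequality $p\ge q-1$ buys, and it is the one place where a careless argument would lose a unit of budget. Everything else is immediate.
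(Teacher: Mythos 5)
Your proposal is correct and follows essentially the same route as the paper: protect the edges of a single $(s,t)$-path and bound its length by $2\vc$ via the observation that no two consecutive path vertices can both avoid the vertex cover. Your block-counting derivation of the bound $k\le 2\vc$ is just a more explicit version of the paper's pairing argument, and your remark that the solution works vacuously (no $(s,t)$-cut can avoid the edges of an intact $(s,t)$-path) is the same implicit justification the paper uses.
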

\iflong
\begin{claimproof}
Let~$S$ be a minimum vertex cover in~$G$ and let~$P$ be a shortest~$(s,t)$-path in~$G$. Then, for every pair~$v,w$ of consecutive vertices on~$P$ at least one of~$v$ and~$w$ is contained in~$S$. Consequently, there are at most~$2\cdot \vc(G)$ edges on~$P$. If~$d \geq 2 \cdot \vc(G)$, then the set of edges on~$P$ is a solution of~$J$ of size at most~$d$. Thus,~$J$ is a yes-instance.
\end{claimproof}
\fi

Let~$S$ be a minimum vertex cover in~$G$, and let~$I:=V \setminus S$ be the remaining independent set. With the next claim we state that only a bounded number of vertices in~$I$ is needed to find a minimal solution of~$J$. To this end we introduce some notation: Given a subset~$X \subseteq S$, we let~$I_X:=\{u \in I \mid N_G(u)=X\} \subseteq I$ denote the \emph{neighborhood class} of~$X$. Moreover, we let~$E_X$ denote the set of edges between~$X$ and~$I_X$.

\begin{myclaim} \label{Claim: Small Neighborhood Classes}
There exists a minimum solution~$D$ of~$J$ such that~$|D \cap E_X| \leq |X|$ for every~$X \subseteq D$.
\end{myclaim}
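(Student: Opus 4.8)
There exists a minimum solution $D$ of $J$ such that $|D \cap E_X| \leq |X|$ for every $X \subseteq S$.

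The plan is to start from an arbitrary minimum solution $D$ and argue that if some neighborhood class $X \subseteq S$ violates the bound, i.e.\ $|D \cap E_X| > |X|$, then we can modify $D$ within $E_X$ to reduce $|D \cap E_X|$ without increasing the total size $|D|$ and without destroying the property of being a solution. Iterating this over all violating classes would then yield the desired minimum solution. The central structural observation is that the vertices in a neighborhood class $I_X$ are \emph{twins}: each $u \in I_X$ has exactly the neighborhood $X$, so the edges incident to $u$ are precisely $\{\{u,x\} : x \in X\}$. Hence the effect of a vertex $u \in I_X$ on any $(s,t)$-cut is entirely determined by how $D$ and the cut $A$ interact with the $|X|$ edges around $u$.

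The key step is to understand which edges of $E_X$ actually need to be in $D$. First I would argue that without loss of generality $D$ does not fully ``protect'' any vertex $u \in I_X$, meaning $D$ does not contain all $|X|$ edges incident to $u$: if it did, then since $u$ has no other edges, $u$ is simply irrelevant and those edges are contained in no inclusion-minimal $(s,t)$-cut (Rule~\ref{rr: every edge in minimal cut} would merge them away, or directly: removing a protection on a degree-$|X|$ vertex that already has one unprotected incident edge cannot help an $(s,t)$-cut become cheap). So for each $u \in I_X$, at most $|X|-1$ of its incident edges lie in $D$; and in fact, I would aim to show that in an optimal solution it suffices to protect, for each $u$, \emph{no} edge incident to $u$ unless protecting edges at $u$ is forced by some cut constraint through $u$. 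The cleaner route: show that the only role edges of $E_X$ play in blocking cheap cuts is to force the attacker, along any $(s,t)$-path $s - \cdots - x - u - x' - \cdots - t$ using $u \in I_X$, to cut an edge of nonzero cost; but since here all costs and capacities are $1$ (we are in \MCI), the attacker cutting the single edge $\{u,x\}$ or $\{u,x'\}$ costs the same as cutting any other edge. Thus protecting more than $|X|$ edges inside $E_X$ is never more useful than protecting a well-chosen set of at most $|X|$ of them — intuitively one per vertex of $X$, chosen to redirect the ``pressure'' onto the edges between $s,t$ and $X$.

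Concretely, the exchange argument I would carry out: suppose $|D \cap E_X| \geq |X|+1$. Then by pigeonhole two of these protected edges, say $\{u_1, x\}$ and $\{u_2, x\}$, share the endpoint $x \in X$ (since there are only $|X|$ choices of $S$-endpoint). Because $u_1, u_2$ are twins and degree-$|X|$, I claim unprotecting one of them — replace $D$ by $D' := D \setminus \{\{u_1,x\}\}$ — keeps $D'$ a solution: any cheap $(s,t)$-cut $A$ disjoint from $D'$ that uses the edge $\{u_1,x\}$ can be transformed into an equally cheap cut disjoint from $D$ by replacing $\{u_1,x\}$ with a suitable edge incident to $u_1$ or by rerouting through $u_2$, using that $u_1$ and $u_2$ have identical neighborhoods and all weights are unit; a contradiction with $D$ being a solution. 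This strictly decreases $|D \cap E_X|$ while not increasing $|D|$, so after finitely many steps we reach a minimum solution satisfying $|D \cap E_X| \leq |X|$ for this $X$; since the operation only ever removes edges from $E_X$, processing the (at most $2^{|S|}$ many) classes one after another preserves the bounds already achieved.

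The main obstacle I anticipate is making the rerouting/exchange step fully rigorous: one must verify that removing the protection on $\{u_1,x\}$ cannot create a \emph{new} cheap $(s,t)$-cut — equivalently, that every inclusion-minimal $(s,t)$-cut of size $\le a$ avoiding $D'$ but not $D$ must use the edge $\{u_1,x\}$, and that such a cut can be ``repaired'' into one of the same size avoiding $D$. The twin structure is exactly what should make this work (swap the roles of $u_1$ and $u_2$, or move the cut-edge from $\{u_1,x\}$ to $\{u_1, x''\}$ for some $x''$ with $\{u_1,x''\} \notin D$, which exists by the earlier ``no fully protected vertex'' reduction), but one has to handle the bookkeeping of which edges around $u_1$ are in $D$ versus $A$ carefully, and confirm the capacity count is unchanged since every edge has capacity $1$.
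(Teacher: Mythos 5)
Your overall strategy (take a violating class $X$, modify $D$ inside $E_X$ only, iterate over classes) matches the paper, but the two concrete steps you propose are both unsound, and the second one is refuted by a small example. First, the claimed ``WLOG no vertex of $I_X$ is fully protected'' is backwards: a fully protected twin $u$ is \emph{not} irrelevant --- since every cut avoiding $D$ leaves $u$ adjacent to all of $X$, it forces all of $X$ into one connected component, which is precisely the useful effect of spending edges in $E_X$. Second, the pigeonhole exchange ``remove one of two protected edges sharing an $X$-endpoint'' can destroy solution-hood. Take $V=\{s,t,x_1,x_2,u_1,u_2\}$ with edges $\{s,x_1\},\{x_2,t\}$ and all four edges between $\{u_1,u_2\}$ and $X=\{x_1,x_2\}$; here $S=\{x_1,x_2\}$ is a vertex cover and $I_X=\{u_1,u_2\}$. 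Let $a=2$ and $D=\{\{s,x_1\},\{x_2,t\},\{u_1,x_1\},\{u_1,x_2\},\{u_2,x_1\}\}$, so $|D\cap E_X|=3>|X|$ and $D$ is a solution (the only unprotected edge $\{u_2,x_2\}$ is not a cut). Your pigeonhole pairs $\{u_1,x_1\}$ with $\{u_2,x_1\}$; deleting $\{u_1,x_1\}$ yields $D'$ for which $A=\{\{u_1,x_1\},\{u_2,x_2\}\}$ is an $(s,t)$-cut of size $2\le a$ avoiding $D'$. The twin-swap repair you sketch fails exactly because $A$ uses one edge at \emph{each} twin in complementary ways, so neither twin can take over the other's role, and the alternative edge $\{u_1,x_2\}$ at $u_1$ lies in $D$.

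The paper's proof avoids this by doing the opposite of your reduction: it replaces the whole set $D\cap E_X$ by the full star $E(u,X)$ of a \emph{single} vertex $u\in I_X$, i.e., it concentrates exactly $|X|$ protected edges on one fully protected twin. The key lemma is then that a minimum cut $A$ among those avoiding the new set $D'$ satisfies $A\cap E_X=\emptyset$: any path through another twin $u'$ can be rerouted through the protected $u$, so it is already blocked by an edge of $A$ outside $E(u',X)$, and minimality of $A$ lets one drop all edges of $E(u',X)$ from $A$; hence $A$ also avoids $D$ and has size $>a$. If you want to salvage an exchange-style argument, it has to move toward this configuration (protect all of $E(u,X)$ for one $u$ and nothing else in $E_X$), not away from it.
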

\iflong
\begin{claimproof}
Let~$D$ be a solution of~$J$. If~$|D \cap E_X| \leq |X|$ for every~$X \subseteq S$, nothing more needs to be shown. 
Thus, consider some~$X \subseteq S$ such that~$|D \cap E_X| > |X|$, and let~$u \in I_X$. 
We then define~$D':=(D \setminus E_X) \cup E(u,X)$. It then holds that~$|D' \cap E_X| = |X|$. 
Moreover, observe that~$|D'|<|D|$ and~$D'\setminus E_X = D \setminus E_X$.

We next show that~$D'$ is a solution of~$J$. 
Let~$A \subseteq E \setminus D'$ be an~$(s,t)$-cut in~$G$ that is minimum among all~$(s,t)$-cuts that avoid~$D'$. We first prove that~$A \cap E_X = \emptyset$. Obviously,~$E(u,X) \cap A = \emptyset$ since~$E(u,X) \subseteq D'$. Consider~$u' \in I_X \setminus \{u\}$. Then, since~$N(u')=X$, for every~$(s,t)$-path~$P'$ containing~$u$, there are two consecutive edges~$\{x_1,u'\}$ and~$\{u',x_2\}$ with~$x_1,x_2 \in X$ on~$P'$. Since~$N(u')=N(u)$, replacing~$u'$ with~$u$ defines another~$(s,t)$-path~$P$ in~$G'$. Then, since~$\{x_1,u\}$ and~$\{u,x_2\}$ are not contained in~$A$, there exists another edge on~$P$ that is an element of~$A$. Consequently, on every~$(s,t)$-path~$P'$ containing~$u'$, there exists an edge in~$A$ that is not an element of~$E(u',X)$. Then, the fact that~$A$ is a minimum~$(s,t)$-cut among all~$(s,t)$-cuts that avoid~$D'$ implies~$E(u',X) \cap A = \emptyset$. Therefore,~$A \cap E_X= \emptyset$.


Then, since~$A \cap E_X = \emptyset$ and~$D' \setminus E_X = D \setminus E_X$ we have~$A \subseteq E \setminus D$. 
Consequently,~$|A|>a$ since~$D$ is a solution of~$J$.

Since~$D' \setminus E_X = D \setminus E_X$, the modification of~$D$ described above can be applied on all neighborhood classes~$I_X$ independently. Therefore, there exists a minimum solution of~$J$ that has the described property.
\end{claimproof}
\fi

Let~$X \subseteq S$ and~$I_X:=\{v_1, \dots, v_{|I_X|}\}$. We define~$I_X'$ by~$I_X':= I_X$ if~$|I_X| \leq |X|$ and~$I_X':=\{v_1, \dots, v_{|X|}\}$, otherwise. Due to Claim~\ref{Claim: Small Neighborhood Classes}, there exists a minimum solution such that at most~$|X|$ vertices in~$I_X$ are endpoints of edges in~$S$. Without loss of generality we may assume that all of these endpoints are from~$I_X'$. Thus, we can assume that there is a minimum solution~$D \subseteq E(S) \cup \bigcup_{X \subseteq S} E(X,I'_X)$. We use this assumption for the algorithm that we describe as follows.

\begin{enumerate}
\item If~$d \geq 2 \cdot \vc(G)$, then return~\emph{yes}.
\item Otherwise, we compute a minimum vertex cover~$S$. Iterate over every possible edge-set~$D \subseteq E(S \cup \bigcup_{X \subseteq S} E(X,I'_X)$ with~$|D| \leq d$, and check with the algorithm behind Lemma~\ref{Lemma: Compute Cut} that every~$(s,t)$-cut~$A \subseteq E \setminus S$ in~$G$ has size bigger than~$a$. If this is the case, then return~\emph{yes}.
\item If for none of the choices of~$D$ the answer~\emph{yes} was returned in Step 2, then return~\emph{no}.
\end{enumerate}

The correctness of the algorithm is implied by Claims~\ref{Claim: vc bound for d} and~\ref{Claim: Small Neighborhood Classes}. It remains to analyze the running time. Obviously, Step~1 and Step~3 can be performed in linear time. Consider Step~2. A minimum vertex cover can be computed in~$\Oh(1.28^\vc + n \cdot \vc)$~time~\cite{CKX10}. Next, observe that
\begin{align*}
|E(S \cup \bigcup_{X \subseteq S} I'_X)| & \leq \vc^2 + \sum_{i=0}^{\vc} \binom{\vc}{i} i^2\\
& \leq \vc^2 (1+2^{\vc-1}).
\end{align*} 

Since~$d < 2 \cdot \vc$, there are less than~$(\vc^2 (1+2^{\vc-1}))^{2\vc}$ possible subsets~$D \subseteq E(S \cup \bigcup_{X \subseteq S} I'_X)$ with~$|D|\leq d$. Together with the running time from Lemma~\ref{Lemma: Compute Cut}, Step~2 can be performed in~$(\vc^2 (1+2^{\vc-1}))^{2\vc} \cdot n^{\Oh(1)}$ time. Altogether, the algorithm runs within the claimed running time.
\lncsqed\end{proof}
\fi

Theorem~\ref{thm:tw und a} implies that \WMCI can be solved in pseudopolynomial time on graphs with a constant treewidth and therefore on graphs with a constant vertex cover number. 
With the next two theorems we show that significant improvements of this result are presumably impossible.  
\begin{theorem}\label{theo:wmci paraNPhard vc}
\WMCI{} is weakly \NP-hard \iflong on graphs   with a vertex cover of size two \else  even if~$\vc(G)$ is two\fi.
\end{theorem}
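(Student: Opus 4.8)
The plan is to give a polynomial-time reduction from \KNAP. In its decision version, \KNAP asks, given items $1,\dots,n$ with integer weights $w_1,\dots,w_n$ and integer profits $\pi_1,\dots,\pi_n$, a weight bound $W$, and a profit target $V$, whether some $S\subseteq[n]$ satisfies $\sum_{i\in S}w_i\le W$ and $\sum_{i\in S}\pi_i\ge V$; this is well known to be (weakly) \NP-complete. After discarding every item with $w_i>W$ (such an item is useless), and replacing the instance by a fixed trivial yes-instance whenever some remaining item additionally has $\pi_i\ge V$ (that item alone is then a solution), we may assume $w_i\le W$ and $\pi_i<V$ for all $i$, and also $n\ge 1$.

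From such a \KNAP instance I would build the following \WMCI instance. Take two vertices $s,t$ and, for each item $i$, a fresh vertex $x_i$ joined to $s$ by an edge $e_i^1$ and to $t$ by an edge $e_i^2$. Then $\{s,t\}$ is a vertex cover of size two, and $G'$ is precisely (apart from $s$ and $t$) a disjoint union of $n$ length-two $(s,t)$-paths. Set $\omega(e_i^1):=1$, $\omega(e_i^2):=\pi_i+1$, $c(e_i^1):=w_i$, $c(e_i^2):=d+1$, and finally $d:=W$ and $a:=V+n-1$. Thanks to the preprocessing this instance already respects the normalizations $c(e)\le d+1$ and $\omega(e)\le a+1$.

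The correctness argument rests on the observation that every $(s,t)$-path in $G'$ is one of the paths $(s,x_i,t)$, so an edge set is an $(s,t)$-cut if and only if it contains an edge of each of the $n$ length-two paths. Since $c(e_i^2)=d+1>d$, no edge $e_i^2$ can belong to any solution~$D$; hence every solution has the form $D=\{e_i^1:i\in S\}$ for some $S\subseteq[n]$ with $\sum_{i\in S}w_i\le d$. For such a $D$, any $(s,t)$-cut avoiding $D$ must take $e_i^2$ on every path with $i\in S$ and at least one edge on every path with $i\notin S$; since $\omega(e_i^1)=1\le\pi_i+1=\omega(e_i^2)$, the minimum possible capacity of such a cut is $\sum_{i\in S}(\pi_i+1)+(n-|S|)=n+\sum_{i\in S}\pi_i$. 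Thus $D$ is a solution if and only if $n+\sum_{i\in S}\pi_i\ge a+1=V+n$, i.e.\ if and only if $\sum_{i\in S}\pi_i\ge V$. Combining the two conditions, $G'$ admits a solution of cost at most $d$ exactly when the \KNAP instance is a yes-instance. Membership in \NP{} is immediate (guess $D$ and verify in polynomial time via Fact~\ref{Lemma: Compute Cut}), and since all numbers produced have bit-length polynomial in the input, the reduction establishes weak \NP-hardness.

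The one point that needs care is the universally quantified condition in the definition of \WMCI: if the defender could protect \emph{both} edges of some length-two path, then there would be no $(s,t)$-cut avoiding $D$ at all, so ``$\omega(A)>a$ for every $(s,t)$-cut $A\subseteq E\setminus D$'' would hold vacuously and the reduction would collapse. Choosing $c(e_i^2)=d+1$ is exactly what rules this out (and, more generally, forces at most one protected edge per path, which must be $e_i^1$). The remaining steps — observing that protecting $e_i^2$ alone never decreases the minimum cut, and the bookkeeping of the two budgets $d$ and $a$ — are routine.
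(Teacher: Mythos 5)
Your reduction is correct and is essentially identical to the paper's own proof: the same bundle of length-two $(s,t)$-paths, the same assignment $\omega(\{s,x_i\})=1$, $\omega(\{x_i,t\})=\pi_i+1$, $c(\{s,x_i\})=w_i$, $c(\{x_i,t\})=d+1$, and the same budgets $d=W$, $a=V+n-1$, with the same accounting of the minimum cut avoiding $D$. The only addition is your explicit preprocessing to respect the normalizations $c(e)\le d+1$ and $\omega(e)\le a+1$, which the paper leaves implicit.
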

\iflong
\begin{proof}
We describe a polynomial time reduction from~\KNAP which is known to be weakly \NP-hard~\cite{GJ79}.

\prob{\KNAP}{A set~$U$, a size function~$f: U \to \mathds{N}$, a value function~$g: U \to \mathds{N}$, and two budgets~$B, C \in \mathds{N}$.}{Is there a set of items~$S \subseteq U$ such that~$f(S) := \sum_{u\in S} f(u) \leq B$ and~$g(S) := \sum_{u\in S} g(u) \geq C$?}

Let~$I:=(U, f,g,B,C)$ be an instance of~\KNAP.
We describe how to construct an equivalent instance~$I':=(G=(V,E), s,t,c,\omega, d,a)$ of~\WMCI where~$G$ has a vertex cover of size two in polynomial time.

We set~$V := U \cup \{s,t\}$ and~$E := \{\{s,u\},\{u,t\}\mid u \in U\}$.
Note that~$\{s,t\}$ is a vertex cover of size two in~$G$.
Next, we set~$d := B$ and~$a := |U| + C -1$.
Finally, for every~$u\in U$ we set~$c(\{s,u\}) := f(u)$, $\omega(\{s,u\}) := 1$, $c(\{u,t\}) := d+1$, and~$\omega(\{u,t\}) := g(u)+1$.

Next, we show that~$I$ is a yes-instance of~\KNAP if and only if~$I'$ is a yes-instance of~\WMCI.

$(\Rightarrow)$
Suppose that~$I$ is a yes-instance of~\KNAP.
Then, there is a set~$S_U \subseteq U$ such that~$f(S_U) \leq B = d$ and~$g(S_U) \geq C$.
We set~$D:=\{\{s, u\}\mid u \in S_U\}$.
By construction, we obtain that~$c(D) = f(S_U) \leq d$.
Let~$A \subseteq E\setminus D$ be an~$(s,t)$-cut. We show that $A$ has capacity larger than~$a$.

Since~$\{s,u\}\in D$ for all~$u\in S_U$ it holds that~$T := \{\{u,t\}\mid u\in S_U\}\subseteq A$.
Note that~$\omega(T) = \sum_{u\in S_U} (g(u) + 1) = g(S_U) + |S_U|$.
Moreover, because of the path~$(s,u,t)$ for every~$u\in U \setminus S_U$ we obtain that~$\{s,u\}\in A$ or~$\{u,t\}\in A$.
Since both of these edges have capacity at least one, we obtain~$\omega(A) \geq \omega(T) + |U\setminus S_U| = g(S_U) + |U| = a+1$.
Consequently,~$I'$ is a yes-instance of~\WMCI.

$(\Leftarrow)$
Suppose that~$I'$ is a yes-instance of~\WMCI.
Then, there is a solution~$D\subseteq E$ with~$c(D) \leq d$.
By the fact that~$c(\{u,t\}) = d+1$ for all~$u\in U$, it follows that~$D \subseteq \{\{s,u\}\mid u\in U\}$.

Let~$S_U := \{u\in U\mid \{s,u\}\in D\}$. 
By construction, $f(S_U) = c(D) \leq d = B$.
We show that~$g(S_U) \geq C$.
Let~$A \subseteq E\setminus D$ be an~$(s,t)$-cut of minimum capacity.
Recall that~$\omega(A) \geq a+1 = |U| + C$.
Since~$A$ is an~$(s,t)$-cut in~$G$ and disjoint to~$D$, we know that~$\{u,t\}\in A$ for all~$u\in S_U$. 
Moreover, since~$\omega(\{s,u\}) = 1 \leq \omega(\{u,t\})$ for all~$u\in U$, we can assume without loss of generality, that~$\{s,u\}\in A$ for all~$u\in U \setminus S_U$.
Hence,~$a+1 = |U| + C \leq \omega(A) = |U\setminus S_U| + \sum_{u\in S_U}(g(u) +1) = g(S_U) + |U|$. Thus,~$C \leq g(S_U)$.
Consequently,~$I$ is a yes-instance of~\KNAP.
\lncsqed\end{proof}
\fi

\begin{theorem}\label{thm:w hard vc poly weight}
\WMCI{} is \W1-hard when parameterized  by \iflong the vertex cover number\fi~$\vc(G)$ even if~$c(e) + \omega(e) \in n^{\Oh(1)}$ and the graph is a biclique.
\end{theorem}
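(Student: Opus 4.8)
The plan is to give a parameterized reduction from \textsc{Unary Bin Packing} (items with positive integer sizes $s_1,\dots,s_n$ given in unary, a capacity $B$, and a number $k$ of bins; decide whether the items fit into $k$ bins of capacity $B$), which is \W1-hard parameterized by $k$~\cite{CFK+15}. By standard preprocessing I may assume $\sum_i s_i=kB$ (so a packing must fill every bin to load exactly $B$) and $n\ge k$ (otherwise the instance is trivially \emph{no}).

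I build the biclique $G=K_{k+1,n+1}$ with small side $\{s,x_1,\dots,x_k\}$ and large side $\{t,b_1,\dots,b_n\}$, so that $\vc(G)=k+1$; informally $x_j$ is ``bin $j$'' and $b_i$ is ``item $i$''. Setting $\Lambda:=kn+1$, I define capacities $\omega(\{s,t\}):=1$, $\omega(\{x_j,t\}):=\Lambda B$, $\omega(\{s,b_i\}):=\Lambda s_i$, $\omega(\{x_j,b_i\}):=1$, costs $c(\{x_j,b_i\}):=s_i$ and $c(e):=d+1$ for every other edge (so only the item--bin edges are protectable), and $d:=kB$, $a:=k\Lambda B$. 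All numbers are polynomial in the unary input and the parameter grows only to $k+1$, so this is an \FPT-reduction to a biclique instance with $c(e)+\omega(e)\in n^{\Oh(1)}$.

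I would first record the structural fact that, since $G$ is connected, every $(s,t)$-cut contained in $E\setminus D$ contains one of the form $E_G(S,T)$ for a bipartition $V=S\uplus T$ with $s\in S$, $t\in T$ and $E_G(S,T)\cap D=\emptyset$ (the last condition merely forces each protected edge $\{x_j,b_i\}$ to keep $x_j$ and $b_i$ on the same side), and that then
\[
 \omega(E_G(S,T))=1+\Lambda B\cdot|\{j:x_j\in S\}|+\Lambda\textstyle\sum_{i\,:\,b_i\in T}s_i+r,\qquad 0\le r\le kn,
\]
with $r$ counting the cut item--bin edges. For the easy direction, from a packing $\pi$ I protect $\{x_{\pi(i)},b_i\}$ for all $i$, so $c(D)=\sum_i s_i=d$; in an admissible bipartition each $b_i$ follows $x_{\pi(i)}$, whence $\sum_{i:b_i\in T}s_i=B\cdot|\{j:x_j\in T\}|$ and the displayed value is $\ge 1+k\Lambda B>a$, so $D$ is a solution. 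For the converse, given a solution $D$ (protecting only item--bin edges), form the ``bin graph'' on $\{x_1,\dots,x_k\}$ joining two bins that share a protected item; for each component $Y$ of it, consider the cut placing $Y$ and all items protected into $Y$ on $t$'s side and everything else (including unprotected items) on $s$'s side. This cut avoids $D$, so its capacity $1+(k-|Y|)\Lambda B+\Lambda\,\mathrm{load}(Y)+r$ (with $\mathrm{load}(Y)$ the total size of items protected into $Y$ and $r\le kn$) exceeds $a=k\Lambda B$; since $\Lambda>kn$ this forces $\mathrm{load}(Y)\ge|Y|B$ for every $Y$. Summing, $\sum_{i\ \text{protected}}s_i=\sum_Y\mathrm{load}(Y)\ge\sum_Y|Y|B=kB=\sum_i s_i$, so every item is protected and every component bound is tight; writing $c(D)=\sum_i m_i s_i$ with $m_i$ the number of protected edges at $b_i$, the budget $d=kB$ then forces $m_i=1$ for all $i$. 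Hence the bin graph is edgeless, each component is a singleton, and tightness yields that bin $j$ receives items of total size exactly $B$ — a feasible packing.

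The main obstacle is this converse direction, and in particular the calibration of $\Lambda$: a cut can contain $\Theta(kn)$ of the weight-one item--bin edges, and $\Lambda$ must be large enough that this cannot blur the ``load signal'', which lives on the $\Lambda$-scale, so that $\mathrm{load}(Y)<|Y|B$ genuinely produces a cut of capacity at most $a$; the counting argument then has to rule out unprotected items, doubly protected items, and unbalanced bins simultaneously. A more routine but still necessary step is to use the minimal-cut characterization to exclude exotic cheaper cuts (isolating a single vertex, or $G-A$ having more than two components); one checks in particular that isolating $s$ and isolating $t$ each have capacity exactly $a+1$.
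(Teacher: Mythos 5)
Your proposal is correct and follows essentially the same route as the paper: a reduction from \textsc{Bin Packing} with unary/polynomially bounded item sizes parameterized by the number of bins~$k$, using the same biclique on $(\{s\}\cup\text{bins},\{t\}\cup\text{items})$, the same $\Lambda$-scaled capacities on the $s$--item and $t$--bin edges with unit capacities on the protectable item--bin edges, and the same style of converse via an explicit cut witnessing an underloaded bin. The only differences are cosmetic calibrations (the paper uses unit costs with budget $|U|$ rather than costs $s_i$ with budget $kB$, and a slightly different $\lambda$ and $a$) and your bin-graph component bookkeeping in place of the paper's direct per-item and per-bin cut arguments; both are sound.
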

\begin{proof}
We describe a parameterized reduction from~\BIN which is \W1-hard when parameterized by~$k$ even if the size of each item is polynomial in the input size~\cite{JKMS13}.

\prob{\BIN}{A set~$U$ of items, a size-function~$f : U \to \mathds{N}$, and integers~$B$ and~$k$.}{Is there a~$k$-partition~$(U_1, \dots, U_k)$ of~$U$ with~$\sum_{u\in U_i} f(u) = B$ for all~$i\in [1,k]$?}

Let~$I:=(U,f,B,k)$ be an instance of~\BIN where the size of each item is polynomial in the input size.
We can assume without loss of generality that~$\sum_{u\in U} f(u) = B k$, as, otherwise,~$I$ is a trivial no-instance of~\BIN.  
We construct an equivalent instance~$I':=(G=(V,E), s,t,c,\omega, d,a)$ of~\WMCI where~$G$ has a vertex cover of size~$k+1$.
The graph~$G$ is a biclique with bipartition~$(\{s\} \cup \mathcal{B}, \{t\} \cup U)$ where~$\mathcal{B} := \{b_1, \dots, b_k\}$.
We set~$d := |U|$, and
\begin{align*}
c(e) :=
\begin{cases}
1 & \text{if }e\in \{\{u,b\} \mid u\in U, b \in \mathcal{B}\}\text{, and}\\
d+1 & \text{otherwise.}
\end{cases}
\end{align*}
Let~$\lambda := 2B\cdot |U|$, we set
\begin{align*}
\omega(e):=
\begin{cases} \lambda \cdot f(u) &\text{if }e=\{s,u\}\text{ with } u \in U,\\
\lambda \cdot B &\text{if }e=\{t,b\}\text{ with } b \in \mathcal{B}\text{, and}\\
1 &\text{otherwise.}
\end{cases}
\end{align*}
Finally, we set~$a := |U|\cdot k + \lambda (B k - 1)$.
This completes the construction of~$I'$. Figure~\ref{Figure: Bin Packing Reduction} shows an example of the construction.
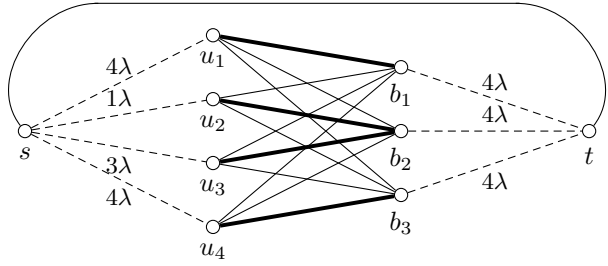
\begin{figure}[t]
\begin{center}
\begin{tikzpicture}[yscale=0.85]
\tikzstyle{knoten}=[circle,fill=white,draw=black,minimum size=5pt,inner sep=0pt]
\tikzstyle{bez}=[inner sep=0pt]

\node[knoten,label=below:$s$] (s)  at (-3,0) {};

\node[knoten,label=below:$u_1$] (u1)  at (-0.5,1.5) {};
\node[knoten,label=below:$u_2$] (u2)  at (-0.5,0.5) {};
\node[knoten,label=below:$u_3$] (u3)  at (-0.5,-0.5) {};
\node[knoten,label=below:$u_4$] (u4)  at (-0.5,-1.5) {};

\node[knoten,label=below:$b_1$] (B1)  at (2,1) {};
\node[knoten,label=below:$b_2$] (B2)  at (2,0) {};
\node[knoten,label=below:$b_3$] (B3)  at (2,-1) {};

\node[knoten,label=below:$t$] (t)  at (4.5,0) {};

\draw[-, densely dashed]  (s) to node[midway, above]{\small{$4 \lambda$}} (u1);
\draw[-, densely dashed]  (s) to node[above]{\small{$1 \lambda$}} (u2);
\draw[-, densely dashed]  (s) to node[below]{\small{$3 \lambda$}} (u3);
\draw[-, densely dashed]  (s) to node[midway,below]{\small{$4\lambda$}}(u4);

\draw[-, line width=1.5pt]  (u1) to (B1);
\draw[-]  (u1) to (B2);
\draw[-]  (u1) to (B3);
\draw[-]  (u2) to (B1);
\draw[-, line width=1.5pt]   (u2) to (B2);
\draw[-]  (u2) to (B3);
\draw[-]  (u3) to (B1);
\draw[-, line width=1.5pt]  (u3) to (B2);
\draw[-]  (u3) to (B3);
\draw[-]  (u4) to (B1);
\draw[-]  (u4) to (B2);
\draw[-, line width=1.5pt]  (u4) to (B3);

\draw[-, densely dashed]  (B1) to node[above]{\small{$4 \lambda$}} (t);
\draw[-, densely dashed]  (B2) to node[above]{\small{$4 \lambda$}} (t);
\draw[-, densely dashed]  (B3) to node[below]{\small{$4 \lambda$}} (t);

\node[bez] (s1)  at (-2,2) {};
\node[bez] (t1)  at (3.5,2) {};
\node[bez] (s2)  at (-2.1,2) {};
\node[bez] (t2)  at (3.6,2) {};

\draw[-, bend left=62]  (s) to (s1);
\draw[-]  (s2) to (t2);
\draw[-, bend right=62]  (t) to (t1);

\end{tikzpicture}
\caption{An example of the construction from the proof of Theorem~\ref{thm:w hard vc poly weight} for a \textsc{Bin Packing} instance with~$f(u_1)=f(u_4) = 4$,~$f(u_2)=1$,~$f(u_3)=3$, $B=4$, and~$k=3$. The thick edges represent a minimum solution~$D$. The edge-labels represent all edge capacities that are bigger than one. Observe that every~$(s,t)$-cut avoiding~$D$ contains dashed edges that have a capacity sum of at least~$12\lambda$. \label{Figure: Bin Packing Reduction}}
\end{center}
\end{figure}
Note that~$\{s\} \cup \mathcal{B}$ is a vertex cover of~$G$ of size~$k+1$.
It remains to show that~$I$ is a yes-instance of~\BIN if and only if~$I'$ is a yes-instance of~\WMCI.

$(\Rightarrow)$ 
Suppose that~$I$ is a yes-instance of~\BIN. 
Then, there is a~$k$-partition~$(U_1, \dots, U_k)$ of~$U$, such that~$\sum_{u\in U_i} f(u) = B$ for all~$i\in [1,k]$.
We set~$D:= \{\{u, b_i\}\mid i\in [1, k], u \in U_i\}$. Note that~$c(D) = d$. We next show that~$D$ is a solution.

Let~$A \subseteq E \setminus D$ be an~$(s,t)$-cut in~$G$ and let~$i\in [1,k]$.
Since for each~$u\in U_i$,~$D$ contains the edge~$\{u, b_i\}$, the~$(s,t)$-path~$P_u := (s, u, b_i, t)$ can only be cut if~$\{s,u\}\in A$ or~$\{b_i, t\}\in A$.
Consequently,~$\{\{s,u\}\mid u\in U_i\}  \subseteq A$ or~$\{b_i, t\}\in A$.
Recall that~$\sum_{u\in U_i}f(u) = B$.
Hence,~$\sum_{u\in U_i} \omega(\{s,u\}) = \sum_{u\in U_i}\lambda f(u) = \lambda B = \omega(\{b_i, t\})$.
Since~$P_u$ and~$P_w$ are edge-disjoint if~$u$ and~$w$ are in distinct parts of the~$k$-partition, we obtain that~$\omega(A) \geq k \lambda B> a$ and, thus,~$I'$ is a yes-instance of~\WMCI.

$(\Leftarrow)$
Suppose that~$I'$ is a yes-instance of~\WMCI.
Then, there is a solution~$D\subseteq E$ with~$c(D) \leq d$.
By construction,~$D \subseteq E(U, \mathcal{B})$, since all other edges have cost~$d+1$.

Note that for each~$u\in U$, there is some~$b\in \mathcal{B}$, such that~$\{u, b\}\in D$, as, otherwise~$A :=\{\{s,t\}\}\cup \{\{s, u'\} \mid u' \in U\setminus\{u\}\} \cup \{\{u, b'\}\mid b' \in \mathcal{B}\}\}$ is an~$(s,t)$-cut in~$G$ with capacity~$\lambda (B k - f(u)) + k +1< a$.
Since~$|D| \leq d$, we obtain that for each~$u\in U$, there is exactly one~$b\in \mathcal{B}$, such that~$\{u,b\}\in D$.

We set~$U_i := \{u \in U \mid \{u, b_i\}\in D\}$ for all~$i\in [1,k]$.
By the above, we obtain that~$(U_1, \dots, U_k)$ is a~$k$-partition of~$U$.
We show that~$\sum_{u\in U_i}f(u) = B$ for all~$i\in [1,k]$.

Assume towards a contradiction that~$\sum_{u\in U_i}f(u) \neq B$ for some~$i\in [1,k]$. 
This is the case if and only if there is some~$j\in [1,k]$ with~$\sum_{u\in U_j}f(u) < B$.
We set~$A := \{\{s,t\}\} \cup \{\{s, u\}\mid u\in U_j\} \cup \{\{b, t\}\mid b\in \mathcal{B} \setminus \{b_j\}\} \cup (E(U, \mathcal{B})\setminus D)$.
Note that~$\omega(A) = 1 + \lambda (\sum_{u\in U_j} f(u)) + \lambda B (k-1) + |U| \cdot (k-1)\le \lambda (B-1)  +\lambda B(k-1)+|U| \cdot k = \lambda (B k -1) + |U| \cdot k = a$, since~$\sum_{u\in U_j} f(u) < B$.
It remains to show that~$A$ is an~$(s,t)$-cut in~$G$.
Observe that~$N_{G-A}(t)=b_j$.
Since~$N_{G-A}(b_j)=\{t\}\cup U_j$ and~$N_{G-A}(u)=\{b_j\}$ for each~$u\in U_j$, we conclude that~$A$ is indeed an~$(s,t)$-cut in~$G$.
This contradicts the fact that there is no~$(s,t)$-cut disjoint to~$D$ in~$G$ of capacity at most~$a$.
As a consequence,~$\sum_{u\in U_i}f(u) = B$ for all~$i\in [1,k]$ and, thus,~$I$ is a yes-instance of~\BIN.
\lncsqed\end{proof}

We use \Cref{thm:w hard vc poly weight} to extend the \W1-hardness to~$\pw(G)+\fvs(G)$ and thus~$\vc(G)$ in the running time stated in \Cref{thm-mci-fpt-vc} can presumably not be replaced by~$\pw(G)+\fvs(G)$.
Hence, \MCI is also \W1-hard parameterized by~$\tw(G)$.

\begin{theorem}\label{theo: mci w hard pw}
\MCI is \W1-hard when parameterized by~$\pw(G) + \fvs(G)$\iflong , where~$\pw(G)$ denotes the pathwidth of the input graph and~$\fvs(G)$ denotes the size of the smallest feedback vertex set of the input graph\fi.
\end{theorem}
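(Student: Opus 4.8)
The plan is to obtain \W1-hardness for $\pw(G)+\fvs(G)$ by reducing from \WMCI{} parameterized by $\vc(G)$ restricted to bicliques with polynomially bounded weights, which is \W1-hard by \Cref{thm:w hard vc poly weight}, and then applying \Cref{cor-weighted-to-unweigthed} to pass from \WMCI{} to \MCI. The key observation is that the biclique instance produced in the proof of \Cref{thm:w hard vc poly weight} has a very specific structure: its vertex cover is $\{s\}\cup\mathcal{B}$ of size $k+1$, and the "expensive'' side consists only of $t$ together with the item-vertices $U$, which form an independent set all of whose neighbors lie in the tiny set $\{s\}\cup\mathcal{B}$. First I would recall that replacing a weighted edge by a path (as in \Cref{lem to unit costs}, \Cref{lem to unit weights}) only subdivides edges and adds length-two parallel paths between two fixed endpoints; crucially, such gadgets attached to a vertex $v$ of the vertex cover do not create large feedback vertex sets or large pathwidth once $v$ is removed.

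The main work is a structural analysis of the graph $G'$ output by \Cref{cor-weighted-to-unweigthed} applied to the biclique instance. I would argue that $\{s\}\cup\mathcal{B}$ (a set of size $k+1$) is \emph{both} a small feedback vertex set and the centre of a path decomposition of bounded width for $G'$. For the feedback vertex set: after deleting $\{s\}\cup\mathcal{B}$, every remaining vertex is either an item-vertex $u\in U$, the sink $t$, or an internal subdivision vertex of one of the edge-replacement gadgets; since every original edge incident to an item-vertex or to $t$ had exactly one endpoint in $\{s\}\cup\mathcal{B}\cup\{t\}$ and the gadgets are paths (or length-two parallel bundles between two endpoints, at least one of which is in $\{s\}\cup\mathcal{B}$), the graph $G'-(\{s\}\cup\mathcal{B})$ is a disjoint union of paths and small trees, hence acyclic — so $\fvs(G')\le k+1$ after possibly adding $t$, giving $\fvs(G')\le k+2$. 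For the pathwidth: I would build an explicit path decomposition whose every bag contains the whole set $\{s\}\cup\mathcal{B}\cup\{t\}$ plus $O(1)$ further vertices, sliding along the item-vertices one at a time and expanding each item-vertex's incident gadgets in turn, so $\pw(G')=O(k)$.

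The second ingredient is equivalence preservation, which is already handed to us: \Cref{cor-weighted-to-unweigthed} guarantees the \MCI{} instance is equivalent to the \WMCI{} instance, and \Cref{thm:w hard vc poly weight} guarantees the latter is equivalent to the \BIN{} instance. Since the reduction of \Cref{thm:w hard vc poly weight} has $c(e)+\omega(e)\in n^{\Oh(1)}$, the blow-up of \Cref{cor-weighted-to-unweigthed} is polynomial, so the whole composition is a polynomial-time parameterized reduction and the new parameter $\pw(G')+\fvs(G')=\Oh(k)$ is bounded by a function of the original parameter $k$ of \BIN{}. This yields \W1-hardness of \MCI{} parameterized by $\pw(G)+\fvs(G)$, and since $\tw(G)\le\pw(G)$ it also gives \W1-hardness for treewidth.

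The main obstacle I anticipate is the bookkeeping in the pathwidth bound: the edge-replacement of \Cref{lem to unit weights} adds, for a weight-$w$ edge, about $w-1$ internal vertices each on its own length-two path, and the edge-replacement of \Cref{lem to unit costs} subdivides an edge into a long path; one must be careful to process these path-gadgets so that at any moment only a constant number of their vertices are "active'' in a bag, and to handle the $\Theta(|U|)$ many parallel length-two gadgets hanging off a single vertex $b_i\in\mathcal{B}$ (coming from the cost-$d+1$ edges $\{t,b_i\}$, $\{s,u\}$) without letting the bag grow. A clean way to do this is to note that each such length-two gadget shares only its two endpoints with the rest of $G'$, both of which are permanently in every bag, so its middle vertex can be introduced and forgotten in an isolated sub-path of the decomposition; summing over all gadgets keeps the width $\Oh(k)$. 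Once this is set up, the rest of the argument is routine.
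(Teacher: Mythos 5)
Your proposal is correct and follows essentially the same route as the paper's proof: reduce from \Cref{thm:w hard vc poly weight} via \Cref{cor-weighted-to-unweigthed}, observe that each edge-replacement gadget meets the rest of the graph only in its two original endpoints, and conclude that the vertex cover~$\{s\}\cup\mathcal{B}$ of the biclique simultaneously yields a feedback vertex set and a path decomposition of width~$\Oh(k)$ for the transformed graph. The only (immaterial) differences are that you keep~$t$ permanently in every bag and in the feedback vertex set, whereas the paper adds only the smaller side~$X$ of the bipartition to each bag and obtains~$\pw(G')\le\vc(G)+3$ and~$\fvs(G')\le\vc(G)$.
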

\iflong
\begin{proof}
We reduce from \WMCI{} which, due to~\Cref{thm:w hard vc poly weight}, is \W1-hard when parameterized by the vertex cover number~$\vc(G)$ even if~$c(e) + \omega(e) \in n^{\Oh(1)}$ and the graph is a biclique.

Let~$I=(G=(V,E),s,t,c,\omega,d,a)$ be an instance of~\WMCI where~$c(e) + \omega(e) \in n^{\Oh(1)}$ and the graph is a biclique.
Moreover, let~$I'=(G'=(V',E'),s,t,d,a)$ be the equivalent instance of~\MCI we obtain  obtain in polynomial time by applying the construction leading to~\Cref{cor-weighted-to-unweigthed}.
We show that both the size of the smallest feedback vertex set and the pathwidth of~$G'$ are upper-bounded by a function only depending on~$\vc(G)$.

Let~$(X,Y)$ be the bipartition of~$G$ and let~$X$ be the smaller part.
Thus,~$\vc(G) = |X|$.
Moreover, let~$x_1, \dots, x_{|X|}$ be the elements of~$X$ and let~$y_1, \dots, y_{|Y|}$ be the elements of~$Y$.
Recall that we obtain~$I'$ by replacing every edge~$e=\{u,v\} \in E$ by a subgraph~$G_e=(V_e, E_e)$ which consists of vertex disjoint~$(u,v)$-paths (besides~$u$ and~$v$).

Note that~$G_e$ has a path decomposition~$\mathcal{B}_e$ of width at most three where each bag contains both endpoints of~$e$. 
For every~$y\in Y$ we set~$\mathcal{B}_y := \mathcal{B}_{\{x_1, y\}} \cdot \ldots \cdot  \mathcal{B}_{\{x_{|X|}, y\}}$.
Note that~$\mathcal{B}_y$ is a path decomposition of width at most three for~$G_y = (\bigcup_{x\in X} V_{\{x,y\}}, \bigcup_{x\in X} E_{\{x,y\}})$.

Finally, let~$\mathcal{B} := \mathcal{B}_{y_1} \cdot \dots \cdot \mathcal{B}_{y_{|Y|}}$ and let~$\mathcal{B}'$ be the sequence of bags we obtain from~$\mathcal{B}$ by adding all vertices of~$X$ to each of the bags of~$\mathcal{B}$.
By construction,~$\mathcal{B}'$ is a path decomposition of width at most~$|X| + 3$ for~$G'$.
Hence,~$\pw(G') \leq |X| + 3 = \vc(G) + 3$.

It remains to show that~$\fvs(G') \leq \vc(G)$.
Note that~$G_{\{x,y\}} - \{x\}$ is acyclic for each~$x\in X$ and~$y\in Y$.
Hence,~$G' - X$ is acyclic since~$Y$ is an independent set in~$G$ and for each pair of distinct edges~$e_1, e_2\in E$ it holds that~$V_{e_1} \cap V_{e_2}  = e_1 \cap e_2$.
Consequently,~$\fvs(G') \leq \vc(G)$ and, thus, \MCI is \W1-hard when parameterized by~$\pw(G') + \fvs(G')$.
\lncsqed\end{proof}
\fi

\section{On Problem Kernelization}

\iflong
\subsection{A Polynomial Kernel for~$\vc+a$}
\else 
\fi

On the positive side, we show that~\WMCI admits a polynomial kernel when parameterized by~$\vc+a$. The main tool for this kernelization is the merge of vertices according to Definition~\ref{def:edge contract}. 

Let~$J:=(G=(V,E),s,t,c,\omega,d,a)$ be an instance of~\WMCI. 
We first provide two simple reduction rules that remove degree-one vertices.

\begin{redrule} \label{Rule: st deg one}
If~$s$ has exactly one neighbor~$w$ and~$\omega(\{s,w\}) \leq a$, then delete~$s$, set~$s:=w$, and decrease~$d$ by~$c(\{s,w\})$. Analogously, if~$t$ has exactly on neighbor~$v$ and~$\omega(\{t,v\}) \leq a$, then delete~$t$, set~$t:=v$, and decrease~$d$ by~$c(\{t,v\})$.
\end{redrule}

\iflong
The safeness of Rule~\ref{Rule: st deg one} follows by the observation that, if~$s$ (or~$t$, respectively) is incident with a unique edge~$e$ with~$\omega(e) \leq a$, this edge must be part of every solution, since~$M:=\{e\}$ is an~$(s,t)$-cut of capacity at most~$a$.
\fi
\begin{redrule} \label{Rule: Delete deg one}
If there exists a degree-one vertex~$v \not \in \{s,t\}$, then delete~$v$.
\end{redrule}

\iflong
It is easy to see that Rule~\ref{Rule: Delete deg one} is safe. Since~$v \neq s$ and~$v \neq t$, the single edge incident with~$v$ is not contained in any inclusion-minimal~$(s,t)$-cut and therefore not part of any minimal solution. 
\fi
The next reduction rule is the main idea behind the problem kernelization.

\begin{redrule} \label{Rule: merge vertices with big cut}
If there are vertices~$u,v \in V$ such that a minimum~$(u,v)$-cut has capacity at least~$a+1$, then merge~$u$ and~$v$.
\end{redrule}

\iflong
\begin{lemma}
Rule~\ref{Rule: merge vertices with big cut} is safe.
\end{lemma}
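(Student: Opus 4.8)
The plan is to reduce the safeness of Rule~\ref{Rule: merge vertices with big cut} to that of Rule~\ref{rr: every edge in minimal cut}, which is already known to be safe by Lemma~\ref{lem: every edge in minimal cut}. The one new ingredient is the following observation: since a minimum $(u,v)$-cut has capacity at least $a+1$, no $(s,t)$-cut $A$ with $\omega(A)\le a$ can separate $u$ from $v$ — otherwise $A$ would itself be a $(u,v)$-cut of capacity at most $a$. In particular, for every inclusion-minimal $(s,t)$-cut $A$ of capacity at most $a$, the vertices $u$ and $v$ lie in the same connected component of $G-A$.

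First I would treat the case $\{u,v\}\in E$ and set $e^*:=\{u,v\}$. I claim that $e^*$ lies in no inclusion-minimal $(s,t)$-cut of capacity at most $a$, so that merging $u$ and $v$ is exactly the application of Rule~\ref{rr: every edge in minimal cut} to $e^*$ and safeness follows directly from Lemma~\ref{lem: every edge in minimal cut}. Suppose $A$ were such a cut with $e^*\in A$. By the observation, $G-A$ contains a $(u,v)$-path $Q$, which avoids $e^*$ since $e^*\in A$. By inclusion-minimality of $A$ there is an $(s,t)$-path $P$ in $G-(A\setminus\{e^*\})$ using $e^*$ (it must use $e^*$, as otherwise $P$ would avoid $A$ entirely); splitting $P$ at $e^*$ into an $(s,u)$-subpath and a $(v,t)$-subpath — both avoiding $e^*$ and hence lying in $G-A$ — and inserting $Q$ between them yields an $(s,t)$-walk in $G-A$, contradicting that $A$ is an $(s,t)$-cut.

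If $\{u,v\}\notin E$, I would first add to $G$ the edge $e^*:=\{u,v\}$ with $c(e^*):=d+1$ and $\omega(e^*):=a+1$. Using the observation again, this produces an equivalent instance: the new edge cannot lie in any $(s,t)$-cut of capacity at most $a$ (its capacity alone exceeds $a$), every inclusion-minimal small $(s,t)$-cut of $G$ keeps $u$ and $v$ connected and thus remains an $(s,t)$-cut after the edge is added, and $c(e^*)>d$ so the new edge is never used by a budget-$d$ solution; hence the budget-$d$ solutions of the two instances coincide. Now $e^*\in E$ has the property established above, and the merge of $u$ and $v$ in the padded instance — which coincides with the merge of $u$ and $v$ in $I$, the resulting self-loop being discarded — is once more an instance of Rule~\ref{rr: every edge in minimal cut}; Lemma~\ref{lem: every edge in minimal cut} then finishes the argument. (The degenerate subcase $\{u,v\}=\{s,t\}$ is trivial: both $I$ and its merge are yes-instances, since no $(s,t)$-cut of capacity at most $a$ exists.)

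The only genuinely new step is the rerouting argument showing that an edge whose endpoints are hard to separate sits in no inclusion-minimal $(s,t)$-cut of capacity at most $a$; once this is in place the whole statement collapses onto Lemma~\ref{lem: every edge in minimal cut}. The remaining care goes into the non-adjacent case, where one must verify that padding with a heavy, high-capacity dummy edge alters neither the family of relevant cuts nor the set of cheap solutions.
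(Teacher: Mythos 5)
Your proof is correct and follows essentially the same route as the paper: pad the instance with a dummy edge $\{u,v\}$ of cost $d+1$ and capacity $a+1$, establish equivalence via the observation that no $(s,t)$-cut of capacity at most $a$ can separate $u$ from $v$, and then invoke Lemma~\ref{lem: every edge in minimal cut}. Your separate rerouting argument for the case $\{u,v\}\in E$ is a small refinement that sidesteps the parallel-edge issue the paper's uniform padding glosses over, but it is not a different approach.
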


\begin{proof}
Recall that due to Lemma~\ref{lem: every edge in minimal cut} we can safely merge edges that are not contained in any inclusion-minimal~$(s,t)$-cut of capacity at most~$a$. We show the safeness of Rule~\ref{Rule: merge vertices with big cut} by applying Lemma~\ref{lem: every edge in minimal cut} on the vertex pair~$\{u,v\}$. Note that~$u$ and~$v$ are not necessarily adjacent in~$G$. Thus, we first transform~$J$ into an instance~$J'$ by adding an edge~$\{u,v\}$ with cost~$d+1$ and capacity~$a+1$. Let~$c'$ and~$\omega'$ be the cost functions and capacity functions of~$J'$. \todom[inline]{satz komisch}
We next show that~$J$ is a yes-instance if and only if~$J'$ is a yes-instance.

$(\Rightarrow)$ Let~$S$ be a solution of~$J$ with~$c(S) \leq d$. Since adding an edge might only increase the size of a cut,~$S$ is a solution of~$J'$. 

$(\Leftarrow)$ Let~$S'$ be a solution of~$J'$ with~$c'(S) \leq d$. Then, $\{u,v\} \not \in S'$ since~$c'(\{u,v\})=d+1$. We show that~$S'$ is a solution of~$J$. Let~$M \subseteq E \setminus S'$ be an inclusion-minimal~$(s,t)$-cut in~$G$. 
We consider the corresponding partition~$(A,B)$ of~$V$. If~$u \in A$ and~$v \in B$ or vice versa, then~$M$ is an~$(u,v)$-cut in~$G$ and therefore~$\omega(M) \geq a+1$ by the condition of Rule~\ref{Rule: merge vertices with big cut}. 
Otherwise, if~$u$ and~$v$ belong to the same partite set, then~$M$ is an~$(s,t)$-cut in~$G'$. Since~$S$ is a solution of~$J'$ we conclude~$\omega(M) \geq a+1$.

Thus, the instances~$J$ and~$J'$ are equivalent. Note that~$\omega'(\{u,v\})=a+1$ implies that~$\{u,v\}$ is not contained in any inclusion-minimal~$(s,t)$-cut of capacity at most~$a$. Then, Lemma~\ref{lem: every edge in minimal cut} implies that~$u$ and~$v$ can safely be merged, which proves the safeness of Rule~\ref{Rule: merge vertices with big cut}.
\lncsqed\end{proof}
\else
The safeness of the Rules~\ref{Rule: st deg one}--\ref{Rule: merge vertices with big cut}\todom{anpassen nach verschieben} is deferred to the full version.
\fi
We now assume that~$J$ is reduced regarding Rules~\ref{Rule: st deg one}--\ref{Rule: merge vertices with big cut}.\todom{anpassen nach verschieben}
\iflong 
Before we show that the number of edges in~$G$ is at most~$2\cdot \vc(G) \cdot a$, we observe that there is no degree-one vertex in~$G$: Since~$J$ is reduced regarding Rule~\ref{Rule: Delete deg one}, every vertex in~$V\setminus \{s,t\}$ has degree at least two. Furthermore, since~$J$ is reduced regarding Rule~\ref{Rule: st deg one} the vertices~$s$ and~$t$ are not incident with a unique edge of capacity at most~$a$. \todom[inline]{auch seltsamer satz}
Finally, since~$J$ is reduced regarding Rule~\ref{Rule: merge vertices with big cut}, the vertices~$s$ and~$t$ are not incident with a unique edge~$\{s,u\}$ (or~$\{t,v\}$, respectively) of capacity at least~$a+1$ since the vertices~$u$ and~$v$ (or~$t$ and~$v$) would have been merged by Rule~\ref{Rule: merge vertices with big cut}.

To show that the number of edges in~$J$ is at most~$2 \cdot \vc(G) \cdot a$, we introduce \emph{cut trees} which are special binary trees. Throughout this section, given an inner vertex~$x$ of a binary tree, we let~$x_\ell$ denote its left child and~$x_r$ denote its right child.

\begin{definition} \label{Definition: Cut-Tree}
Let~$G=(V,E)$ be a graph with a capacity function~$\omega:E \rightarrow \mathds{N}$, let $S \subseteq V$ be a vertex cover of~$G$. Let~$T=(\mathcal{V},\mathcal{E})$ be a binary tree with root vertex~$r \in \mathcal{V}$ and~$\psi:\mathcal{V} \rightarrow 2^V$. Then,~$(T,\psi)$ is a~\emph{cut tree of~$G$ with respect to~$S$}~if
\begin{enumerate}
\item $\psi (r)=V$,
\item for every vertex~$x \in \mathcal{V}$ with~$|\psi(x) \cap S| \geq 2$, there exist vertices~$u,v \in \psi(x) \cap S$ and a minimum~$(u,v)$-cut~$M$ in~$G[\psi(x)]$ with partitions~$(A,B)$ such that~$\psi(x_\ell)=A$ and~$\psi(x_r)=B$, and
\item every vertex~$x \in \mathcal{V}$ with~$|\psi(x) \cap S| = 1$ is a leaf.
\end{enumerate}
\end{definition}


Recall that we consider a reduced instance~$J$ with input graph~$G$. In the following, let~$S$ be a minimum vertex cover of~$G$. We consider a cut tree~$(T,\psi)$ of~$G$ with respect to~$S$. Observe that there is no inner vertex of~$T$ that has exactly one child, and that~$\{\psi(x) \mid x\text{ is a leaf of }T\}$ is a partition of~$V$, where each set of the partition contains exactly one vertex from~$S$. Thus, if~$S$ is a minimum vertex cover, then T consists of at most~$\vc(G)$ inner vertices and~$\vc(G)$ leaves. Furthermore, note that for each inner vertex~$x$, the tuple~$(\psi(x_\ell),\psi(x_r))$ is a partition of~$\psi(x)$. 

To give a bound on the number of edges of~$G$, we associate an edge-set~$E_x$ with every~$x \in \mathcal{V}$. If~$x$ is an inner vertex in~$T$, then we define~$E_x:=E_G(\psi(x_\ell),\psi(x_r))$. Otherwise, if~$x$ is a leaf, then we define~$E_x := E_G(\psi(x))$. Observe that for every inner vertex~$x$ the edge-set~$E_x$ is a minimum~$(u,v)$-cut in~$G$ for a pair of vertices~$u,v \in \psi(x)$. The size bound of the number of edges mainly relies on the following lemma.

\begin{lemma} \label{Lemma: cut tree union is E}
 Let~$(T=(\mathcal{V},\mathcal{E}),\psi)$ be a cut tree of~$G=(V,E)$. Then,
$E= \bigcup_{x \in \mathcal{V}} E_x$.
\end{lemma}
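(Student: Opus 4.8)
The plan is to prove the two inclusions separately. The inclusion $\bigcup_{x \in \mathcal{V}} E_x \subseteq E$ is immediate: for an inner vertex $x$ the set $E_x = E_G(\psi(x_\ell), \psi(x_r))$ consists of edges of $G$, and for a leaf $x$ the set $E_x = E_G(\psi(x))$ does as well. So the real work is the reverse inclusion $E \subseteq \bigcup_{x \in \mathcal{V}} E_x$, which I would prove by tracking a fixed edge down the tree from the root.

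Concretely, fix an edge $e = \{p,q\} \in E$. Since $S$ is a vertex cover of $G$, at least one endpoint of $e$, say $p$, lies in $S$. I would then show, by induction on the height of the subtree rooted at $x$, the claim: if $x \in \mathcal{V}$ satisfies $\{p,q\} \subseteq \psi(x)$, then $e \in E_y$ for some $y$ in the subtree rooted at $x$. Applying this at the root (where $\psi(r) = V \supseteq \{p,q\}$ by condition~(1)) yields the inclusion. For the inductive step, suppose $\{p,q\} \subseteq \psi(x)$. If $x$ is a leaf, then by condition~(2) (vertices with $|\psi(x)\cap S| \geq 2$ have two children) we must have $|\psi(x) \cap S| \leq 1$; since $p \in \psi(x)\cap S$ this forces $\psi(x) \cap S = \{p\}$, and because $e = \{p,q\} \subseteq \psi(x)$ we get $e \in E_G(\psi(x)) = E_x$. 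If $x$ is an inner vertex, then by condition~(2) the pair $(\psi(x_\ell), \psi(x_r))$ is a partition of $\psi(x)$, so each of $p$ and $q$ lies in exactly one part; if they lie in different parts then $e \in E_G(\psi(x_\ell), \psi(x_r)) = E_x$, and otherwise both lie in one part, to which I would apply the induction hypothesis.

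The argument is mostly bookkeeping, and the step requiring care is the maintenance of the invariant $\{p,q\} \subseteq \psi(\cdot)$ along the descent: it is what lets the descent continue at an inner vertex (each endpoint of $e$ lands in exactly one child-part, so $e$ is either captured by $E_x$ or passed to a child still carrying both endpoints) and what pins down the terminal leaf (its unique cover-vertex is $p$, so $e$ lies inside that leaf's bag and is captured by $E_x = E_G(\psi(x))$). Termination is free since each descent step moves to a strictly deeper vertex of the finite tree $T$. Combining the two inclusions gives $E = \bigcup_{x\in\mathcal{V}} E_x$; one could further observe that each edge lies in exactly one $E_x$, since along any root path the parts of distinct subtrees have disjoint vertex sets, though this refinement is not needed for the statement.
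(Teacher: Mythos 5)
Your proof is correct and follows essentially the same route as the paper: the easy inclusion is immediate, and the reverse inclusion hinges on the fact that $(\psi(x_\ell),\psi(x_r))$ partitions $\psi(x)$ at every inner vertex, so an edge whose endpoints coexist in a bag is either captured at the first node separating them or survives into a common leaf. The paper phrases this bottom-up via the least common ancestor of the two leaves containing the endpoints, whereas you descend from the root, but the two presentations are equivalent and your handling of the leaf case (using $p\in S$ to pin down the terminal bag) is sound.
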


\begin{proof}
It clearly holds that~$\bigcup_{x \in \mathcal{V}} E_x \subseteq E$ since each~$E_x \subseteq E$. It remains to prove~$E \subseteq \bigcup_{x \in \mathcal{V}} E_x$. 

Let~$e=\{u,v\} \in E$. If~$e \in E_x$ for some leaf vertex~$x$, nothing more needs to be shown. Otherwise, consider the leaf vertices~$x$ and~$y$ with~$u \in \psi(x)$ and~$v \in \psi(y)$ and let~$z$ be the first common ancestor of~$x$ and~$y$. Then,~$u \in \psi(z_\ell)$ and~$v \in \psi(z_r)$ or vice versa. Consequently, $e \in E_G(\psi(z_\ell),\psi(z_r)) = E_z$.
\lncsqed\end{proof}

We now prove the main result of this subsection.
\else 
Now we show the following.
\fi

\begin{theorem} \label{Theorem: Kernel a+vc}\iflong
There is an algorithm that, given an instance of~\WMCI computes an equivalent instance in polynomial time, such that the graph consists of at most~$2  \vc(G) \cdot a$~edges.
\else \WMCI admits a polynomial problem kernel with~$2 \vc(G)\cdot a$ edges when parameterized by~$\vc(G) + a$. 
\fi
\end{theorem}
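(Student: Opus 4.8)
The plan is to take the kernelization algorithm to be the exhaustive application of Rules~\ref{Rule: st deg one}, \ref{Rule: Delete deg one} and~\ref{Rule: merge vertices with big cut}, and then to bound the size of the reduced instance. For the running time I would observe that each of the three rules strictly decreases $|V|$ --- Rules~\ref{Rule: st deg one} and~\ref{Rule: Delete deg one} delete a vertex and Rule~\ref{Rule: merge vertices with big cut} replaces two vertices by one --- so at most $n$ applications occur, and the precondition of Rule~\ref{Rule: merge vertices with big cut} can be tested by computing a minimum $(u,v)$-cut for each of the $\binom{n}{2}$ vertex pairs via a max-flow computation; correctness follows from the safeness of the rules. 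Hence the algorithm runs in polynomial time and outputs an equivalent instance, and from now on $J=(G,s,t,c,\omega,d,a)$ denotes this reduced instance; recall that then $G$ has no isolated vertex and no degree-one vertex outside $\{s,t\}$, and neither $s$ nor $t$ is incident with a unique edge.

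Next I would fix a minimum vertex cover $S$ of $G$ with $|S|=\vc(G)$ and a cut tree $(T=(\mathcal{V},\mathcal{E}),\psi)$ of $G$ with respect to $S$ (one exists, since a bag containing at least two cover vertices can always be split along a minimum $(u,v)$-cut of the induced subgraph). As noted before the statement, $T$ has at most $\vc(G)$ inner nodes and $\vc(G)$ leaves. The key structural fact is that the edge sets $E_x$, $x\in\mathcal{V}$, are pairwise disjoint: for two nodes of which neither is an ancestor of the other the bags are disjoint, and for an ancestor $y$ of a node $x$ the set $E_y$ consists only of edges crossing the two children-bags of $y$ while $E_x$ is contained in one of those bags. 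Together with \Cref{Lemma: cut tree union is E} this yields $|E|=\sum_{x\in\mathcal{V}}|E_x|$, so it remains to control the inner-node and leaf-node contributions.

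For an inner node $x$, by \Cref{Definition: Cut-Tree} the set $E_x$ is a minimum $(u,v)$-cut of $G[\psi(x)]$ for some $u,v\in\psi(x)\cap S$. Intersecting a minimum $(u,v)$-cut of $G$ with $E(G[\psi(x)])$ gives a $(u,v)$-cut of $G[\psi(x)]$ of no larger capacity --- a $(u,v)$-path surviving in the subgraph would also survive in $G$ --- so the minimum $(u,v)$-cut capacity of $G[\psi(x)]$ is at most that of $G$, which, since $J$ is reduced with respect to Rule~\ref{Rule: merge vertices with big cut}, is at most $a$; as every edge has capacity at least one, $|E_x|\le\omega(E_x)\le a$. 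Summing over the at most $\vc(G)$ inner nodes, the inner-node edges number at most $\vc(G)\cdot a$. For the leaves I would show that their edges can be injectively charged to inner-node edges. Let $x$ be a leaf with $\psi(x)\cap S=\{w\}$; since $\psi(x)\setminus\{w\}$ is independent, $E_x$ is exactly the set of edges from $w$ to its neighbours $z$ inside $\psi(x)$. Each such $z$ lies in $V\setminus S$ and, $G$ being reduced, has degree at least two, hence a further neighbour $w_z\in S\setminus\{w\}$, and $w_z$ lies in a leaf $y\ne x$; then the edge $\{z,w_z\}$ crosses the two children-bags of the least common ancestor $q$ of $x$ and $y$ (an inner node), so $\{z,w_z\}\in E_q$. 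The map $\{w,z\}\mapsto\{z,w_z\}$ is injective (the independent endpoint $z$, and with it its leaf, is recovered from $\{z,w_z\}$) and maps into the set of all inner-node edges; hence the leaves contribute at most as many edges as the inner nodes, i.e.\ at most $\vc(G)\cdot a$. Adding the two bounds gives $|E|\le 2\vc(G)\cdot a$, as claimed; combined with $\omega(e)\le a+1$, the absence of isolated or (away from $s,t$) degree-one vertices, and a further step bounding $d$ and the costs, this yields an instance of size polynomial in $\vc(G)+a$.

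The step I expect to require the most care is the leaf bound. Bounding a single leaf's star in isolation only seems to give a bound of roughly $\vc(G)\cdot a$ per leaf (grouping its $I$-vertices by their second cover neighbour and using Rule~\ref{Rule: merge vertices with big cut} for that pair), which is too weak; the point is to realise that the leaf edges must be paid for collectively, by charging them to the inner-node cuts via the least-common-ancestor argument above, so that the total leaf contribution --- not each individual one --- is bounded by $\vc(G)\cdot a$.
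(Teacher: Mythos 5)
Your proposal is correct and follows essentially the same route as the paper's proof: exhaustive application of Rules~\ref{Rule: st deg one}--\ref{Rule: merge vertices with big cut}, the cut-tree decomposition with Lemma~\ref{Lemma: cut tree union is E}, the bound $|E_x|\le a$ for inner nodes via Rule~\ref{Rule: merge vertices with big cut}, and the injective charging of each leaf edge $\{w,z\}$ to an edge $\{z,w_z\}$ with $w_z\in S\setminus\{w\}$ a second cover-neighbour of $z$, which lands in $E_q$ for the least common ancestor $q$. Your added observations (pairwise disjointness of the sets $E_x$, and the explicit argument that a minimum $(u,v)$-cut of $G[\psi(x)]$ is no larger than one of $G$) are correct refinements of steps the paper treats more loosely, but do not change the argument.
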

\iflong
\begin{proof}
The algorithm is simply described as follows: Apply the Rules~\ref{Rule: st deg one}--\ref{Rule: merge vertices with big cut} exhaustively. Obviously, a single application of one rule can be done in polynomial time. Then, since after every application of one of the rules the number of vertices is decreased by one, Rules~\ref{Rule: st deg one}--\ref{Rule: merge vertices with big cut} can be applied exhaustively in polynomial~time.

Let~$J$ be an instance of~\WMCI that is reduced regarding Rules~\ref{Rule: st deg one}--\ref{Rule: merge vertices with big cut}. We next use Lemma~\ref{Lemma: cut tree union is E} to prove that the input graph~$G$ consists of at most~$2 \cdot \vc \cdot a$ edges. Recall that for every pair~$(u,v)$ of vertices in~$G$, there exists a~$(u,v)$-cut of size at most~$a$ since~$J$ is reduced regarding Rule~\ref{Rule: merge vertices with big cut}

Let~$(T=(\mathcal{V},\mathcal{E}),\psi)$ be a cut tree of~$G$ with respect to a minimum vertex cover~$S$. Let~$I := V \setminus S$ be the remaining independent set. Furthermore, let~$\mathcal{L} \subseteq \mathcal{V}$ be the set of leaves of~$T$ and let~$\mathcal{I} \subseteq \mathcal{V}$ be the set of inner vertices of~$T$. Lemma~\ref{Lemma: cut tree union is E} then implies
\begin{align*}
|E| \leq |\bigcup_{x \in \mathcal{I}} E_x| + |\bigcup_{x \in \mathcal{L}} E_x|.
\end{align*}
Since every~$(u,v)$-cut in~$G$ has size at most~$a$ and~$\omega(e) \geq 1$ for every edge~$e$ we conclude that~$|E_x| \leq a$ for every~$x \in \mathcal{I}$. Thus, since~$T$ has at most~$\vc(G)$ inner vertices, we have~$|\bigcup_{x \in \mathcal{I}} E_x| \leq \vc(G) \cdot a$.

We next define an injective mapping~$p: \bigcup_{x \in \mathcal{L}} E_x \rightarrow \bigcup_{x \in \mathcal{I}} E_x$. Observe that the existence of such a mapping implies~$|\bigcup_{x \in \mathcal{L}} E_x| \leq |\bigcup_{x \in \mathcal{I}} E_x|$ and thus~$|E| \leq 2 \cdot \vc(G) \cdot a$.

Let~$\{u,v\} \in E_x$ for some leaf vertex~$x$. Without loss of generality assume that~$v \in S$ and~$u \in I$. Since~$J$ is reduced regarding Rules~\ref{Rule: st deg one}--\ref{Rule: merge vertices with big cut}, there are no degree-one vertices in~$G$ and thus,~$u$ has a neighbor~$w \in S\setminus \{v\}$. We then define~$p(\{u,v\}):=\{u,w\}$. Note that~$p(\{u,v\}) \in E_G(S,I)$, and that both edges $\{u,v\}$ and~$p(\{u,v\})$ are incident with the same vertex~$u \in I$.

We first show that~$p$ is well-defined. That is, that~$p(\{u,v\}) \in \bigcup_{x \in \mathcal{I}} E_x$ for every~$\{u,v\} \in \bigcup_{x \in \mathcal{L}} E_x$. Since~$|\psi(x) \cap S|=1$, there exists another leaf vertex~$y$ with~$w \in \psi(y)$. Let~$z$ be the first common ancestor of~$x$ and~$y$. Then,~$\{u,w\} \in E_z$. Since~$z$ is an inner vertex, we conclude that~$p$ is well-defined.

Next, we show that~$p$ is injective. Let~$e:=\{u,v\}$ and~$e':=\{u',v'\}$ be edges in~$\bigcup_{x \in \mathcal{L}} E_x$. Let~$p(e)=p(e')$. We show that~$e=e'$. Without loss of generality assume that~$v,v' \in S$ and~$u,u' \in I$. Then, all four edges~$e$, $e'$, $p(e)$, and~$p(e')$ are incident with the same vertex of~$I$ and thus~$u=u'$. Then, since~$\{\psi(x) \mid x \in \mathcal{L}\}$ is a partition of~$V$ we conclude that~$e$ and~$e'$ are element of the same set~$E_x$ for some~$x \in \mathcal{L}$. Then,~$|\psi(x) \cap S|=1$ implies~$v=v'$ and thus~$e=e'$. Therefore,~$p$ is injective, which then implies~$|E| \leq 2 \cdot \vc(G) \cdot a$.
\lncsqed\end{proof}

Technically, the instance from Theorem~\ref{Theorem: Kernel a+vc} is not a kernel since the encoding of~$d$ and the values of~$c(e)$ might not be bounded by some polynomial in~$a$ and~$\vc$. We use the following lemma to show that Theorem~\ref{Theorem: Kernel a+vc} implies a polynomial kernel for~\WMCI.

\begin{lemma}[\cite{EKMR17}] \label{Lemma: shrink score}
There is an algorithm that, given a vector~$w \in \mathds{Q}^r$ and some~$W \in \mathds{Q}$ computes in polynomial time a vector~$\overline{w}=(w_1, \dots, w_r) \in \mathds{Z}^r$ where~$\max_{i\in \{1, \dots, r\}} |w_i| \in 2^{\Oh(r^3)}$ and an integer~$\overline{W} \in \mathds{Z}$ with total encoding length~$\Oh(r^4)$ such that~$w \cdot x \leq W$ if and only if~$\overline{w} \cdot x \leq \overline{W}$ for every~$x \in \{0,1\}^r$.
\end{lemma}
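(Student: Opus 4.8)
The plan is to obtain this as a direct corollary of the Frank--Tardos theorem on compressing linear inequalities, whose own proof rests on a constructive form of simultaneous Diophantine approximation computed via the LLL lattice-basis-reduction algorithm; this is exactly how the result is established in~\cite{EKMR17}, and I would replay that argument.

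First I would homogenize the inequality so that only a single vector has to be processed. Set $v := (w_1,\dots,w_r,-W) \in \mathds{Q}^{r+1}$, and for $x \in \{0,1\}^r$ put $x' := (x_1,\dots,x_r,1) \in \{0,1\}^{r+1}$; then $v \cdot x' = w \cdot x - W$, so $w \cdot x \le W$ holds if and only if $v \cdot x' \le 0$, i.e. if and only if $\mathrm{sign}(v \cdot x') \in \{-1,0\}$, and moreover $\|x'\|_1 = \|x\|_1 + 1 \le r+1$. Now I would invoke the Frank--Tardos theorem: for every $v \in \mathds{Q}^{n}$ and every positive integer $N$ one can compute in polynomial time a vector $\overline v \in \mathds{Z}^{n}$ with $\|\overline v\|_\infty \le 2^{4n^3} N^{n(n+2)}$ (a bound independent of the magnitudes of the entries of $v$) such that $\mathrm{sign}(v \cdot b) = \mathrm{sign}(\overline v \cdot b)$ for all $b \in \mathds{Z}^{n}$ with $\|b\|_1 \le N-1$. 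Instantiating with $n := r+1$ and $N := r+2$ yields $\overline v = (\overline w_1,\dots,\overline w_r,-\overline W) \in \mathds{Z}^{r+1}$ with $\|\overline v\|_\infty \le 2^{4(r+1)^3}(r+2)^{(r+1)(r+3)} = 2^{\Oh(r^3)}$, which preserves the sign of $v \cdot b$ for all $b$ with $\|b\|_1 \le r+1$, and in particular for every $x'$ coming from some $x \in \{0,1\}^r$. Hence $w \cdot x \le W \iff v \cdot x' \le 0 \iff \overline v \cdot x' \le 0 \iff \overline w \cdot x \le \overline W$ for all $x \in \{0,1\}^r$, so $\overline w$ and $\overline W$ are as required. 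Since each of the $r+1$ integer entries of $\overline v$ has absolute value $2^{\Oh(r^3)}$ and thus bit-length $\Oh(r^3)$, the total encoding length of $(\overline w,\overline W)$ is $\Oh(r^4)$.

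The technical core, and the step I expect to be the main obstacle, is the Frank--Tardos theorem itself. The naive attempt --- round $v$ to a nearby integer vector with a small common denominator using an LLL-based simultaneous Diophantine approximation --- only controls the sign of $v \cdot b$ when $|v \cdot b|$ is not too small: it can flip the sign on directions $b$ where $v \cdot b$ is nonzero but tiny, and it need not preserve $v \cdot b = 0$ exactly, which is precisely what we must guarantee (the equality case $w \cdot x = W$ must map to $\overline w \cdot x = \overline W$). Frank and Tardos circumvent this with a layered construction: one computes a Diophantine approximation $(q,p)$ of $v$, passes a suitably rescaled residual of $qv-p$ to the next layer, recurses, and finally assembles $\overline v$ as a weighted combination of the per-layer integer approximations with geometrically growing weights; after $\Oh(n)$ layers the residual is forced to vanish on all directions relevant to signs of bounded-$\ell_1$-norm integer combinations, so the sign is preserved \emph{exactly}, including at zero, and multiplying the per-layer magnitude bounds produces the $2^{\Oh(n^3)}$ estimate (with the $N^{n(n+2)}$ factor accounting for the permitted $\ell_1$-norm of the test vectors). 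The remaining points --- that each LLL call runs in polynomial time on the intermediate, still polynomially sized, numbers and that the final composition keeps every entry within the stated bound --- I would check but do not anticipate difficulty there.
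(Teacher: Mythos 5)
Your proof is correct and is essentially the argument from the cited reference~\cite{EKMR17}: the paper itself imports this lemma without proof, and the standard derivation is exactly your homogenization $v=(w,-W)$, $x'=(x,1)$ followed by the Frank--Tardos theorem with $N=r+2$, giving $\|\overline v\|_\infty\le 2^{4(r+1)^3}(r+2)^{(r+1)(r+3)}=2^{\Oh(r^3)}$ and hence total encoding length $\Oh(r^4)$. Your remarks on why naive rounding fails and why the layered Frank--Tardos construction preserves signs exactly (including the zero case) are also accurate.
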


\begin{corollary}
\WMCI admits a polynomial problem kernel when parameterized by~$\vc + a$.
\end{corollary}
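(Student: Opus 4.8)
The plan is to obtain the kernel by combining \Cref{Theorem: Kernel a+vc} with \Cref{Lemma: shrink score}. First I would run the algorithm behind \Cref{Theorem: Kernel a+vc}, which in polynomial time turns the input into an equivalent instance $J'=(G'=(V',E'),s,t,c,\omega,d,a)$ with $|E'|\le 2\vc(G)\cdot a$; since $G'$ may be assumed connected, also $|V'|\le 2\vc(G)\cdot a+1$. So the graph of $J'$ already has size polynomial in $\vc(G)+a$. Moreover, as noted in the preliminaries we may assume $\omega(e)\le a+1$ for every $e\in E'$, so every capacity has encoding length $\Oh(\log a)$, and the threshold $a$ is itself part of the parameter; hence the capacity function and $a$ need no further treatment, and the only potentially large numbers in $J'$ are the edge costs $c(e)$ and the defender budget $d$.

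To shrink $c$ and $d$ I would invoke \Cref{Lemma: shrink score}. The key observation is that, for any edge set $D\subseteq E'$, respecting the budget is exactly the predicate $w\cdot x\le W$ with $w:=(c(e))_{e\in E'}$, $W:=d$, and $x:=\mathbf{1}_D\in\{0,1\}^{r}$, where $r:=|E'|\le 2\vc(G)\cdot a$. \Cref{Lemma: shrink score} then yields, in polynomial time, an integer cost vector $\overline{c}$ and an integer budget $\overline{d}$ of total encoding length $\Oh(r^{4})$ such that $c(D)\le d$ if and only if $\overline{c}(D)\le\overline{d}$ for every $D\subseteq E'$ (and $\max_e|\overline{c}(e)|\in 2^{\Oh(r^{3})}$). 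Replacing $c$ by $\overline{c}$ and $d$ by $\overline{d}$ changes neither $G'$, nor $\omega$, nor $a$, and keeps the family of budget-feasible edge sets unchanged; since by definition a solution is precisely a budget-feasible edge set $D$ such that every $(s,t)$-cut contained in $E'\setminus D$ has capacity more than $a$, the new instance is equivalent to $J'$. Its graph has $\Oh(\vc(G)\cdot a)$ edges and vertices, its capacities are bounded by $a+1$, and its cost function together with the budget have encoding length $\Oh((\vc(G)\cdot a)^{4})$; hence the total size of the new instance is polynomial in $\vc(G)+a$, which is the claimed polynomial kernel.

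The only delicate point is that \Cref{Lemma: shrink score} may return costs $\overline{c}(e)$ that are zero or negative, whereas \WMCI{} is formally defined with $c\colon E\to\mathds{N}$. I expect this to be pure bookkeeping rather than a real obstacle: one can either state the kernel for the integer-cost variant of \WMCI{} (which is interreducible with \WMCI{} by the usual arguments) or adapt the application of \Cref{Lemma: shrink score} so that the costs stay nonnegative, and no deeper difficulty is anticipated here — the whole corollary is essentially ``plug \Cref{Lemma: shrink score} into \Cref{Theorem: Kernel a+vc}''.
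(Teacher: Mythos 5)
Your proposal is correct and follows essentially the same route as the paper: apply the reduction behind \Cref{Theorem: Kernel a+vc} to bound the number of edges by $2\vc(G)\cdot a$, then invoke \Cref{Lemma: shrink score} on the cost vector $c$ and budget $d$ to compress their encoding to $\Oh(|E'|^4)$ bits, while the capacities are already bounded by $a+1$. Your remark about \Cref{Lemma: shrink score} possibly producing non-positive costs is a fair bookkeeping point that the paper silently glosses over, but it does not affect the substance of the argument.
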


\begin{proof}
Let~$J:=(G=(V,E),s,t,c,\omega,d,a)$ be the reduced instance from Theorem~\ref{Theorem: Kernel a+vc}. Observe that both, the number of vertices~$n$ and the number of edges~$m$ of~$G$ are polynomially bounded in~$\vc+a$. We define~$r:=m$ and~$w$ to be the~$r$-dimensional vector where the entries are the values~$c(e)$ for each~$e \in E$. Furthermore, let~$W:=d$. Applying the algorithm behind Lemma~\ref{Lemma: shrink score} computes a vector~$\overline{w}$ with the property stated in the lemma and an integer~$\overline{W}$ that has encoding length~$\Oh(m^4)$.

Substituting all values~$c(e)$ with the corresponding entry in~$\overline{w}$ and substituting~$d$ by~$\overline{W}$ then converts~$J$ into an equivalent instance which has a size that is polynomially bounded in~$\vc+a$.
\lncsqed\end{proof}
\fi

\iflong
The algorithm behind Theorem~\ref{Theorem: Kernel a+vc} also implies a polynomial kernel for the unweighted problem~\MCI: We transform the unweighted instance into a weighted instance where all capacities and costs are one. Afterwards, we apply the algorithm from Theorem~\ref{Theorem: Kernel a+vc} to compute a reduced instance~$J'$. In~$J'$ all costs are one, and the capacities are at most~$a+1$. We then use Corollary~\ref{cor-weighted-to-unweigthed} to transform the reduced instance~$J'$ into an instance~$J$ of~\MCI. 
Due to the structure of~$J'$, the number of new vertices introduced in~$J$ is at most~$m \cdot (a+1)$, where~$m$ denotes the number of edges in~$J'$. Since~$m \le 2\vc(G)\cdot a$, we obtain the following corollary.
\fi
\begin{corollary}
\MCI admits a polynomial \iflong problem\fi kernel with~$4\vc(G) \cdot a^2$~edges.
\end{corollary}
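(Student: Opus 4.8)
The plan is to reduce \MCI to \WMCI, kernelise with Theorem~\ref{Theorem: Kernel a+vc}, and then translate the result back via Corollary~\ref{cor-weighted-to-unweigthed}. First, given an instance $(G,s,t,d,a)$ of \MCI, I would regard it as an instance $J$ of \WMCI by setting $c(e):=\omega(e):=1$ for every edge $e$; by the definition of \MCI this is an equivalent instance of \WMCI.

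Second, I would run the polynomial-time algorithm behind Theorem~\ref{Theorem: Kernel a+vc} on $J$ to obtain an equivalent \WMCI instance $J'=(G'=(V',E'),s',t',c',\omega',d')$ with $|E'|\le 2\vc(G')\cdot a$. The key bookkeeping step is to track the effect of Rules~\ref{Rule: st deg one}--\ref{Rule: merge vertices with big cut} on the cost and capacity functions: each of these rules only deletes vertices or edges, decreases $d$, or merges two vertices in the sense of Definition~\ref{def:edge contract}. A merge sets every new edge-cost to a minimum of old costs, so all costs stay equal to $1$; it sets every new capacity to a sum of old capacities, but combined with the standing assumption $\omega(e)\le a+1$ (which is enforceable in polynomial time) every edge of $J'$ still has cost $1$ and capacity at most $a+1$. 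Moreover, deleting vertices and merging never increases the vertex cover number, so $\vc(G')\le\vc(G)$ and hence $|E'|\le 2\vc(G)\cdot a$. Applying Rule~\ref{rr: every edge in minimal cut} (safe by Lemma~\ref{lem: every edge in minimal cut}) exhaustively afterwards, I may further assume that every edge of $G'$ lies in an inclusion-minimal $(s,t)$-cut of capacity at most $a$, and this only merges vertices, so the bounds on $|E'|$ and on $\vc(G')$ are preserved while now $\omega'(e)\le a$ for every $e\in E'$.

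Third, I would apply the construction behind Corollary~\ref{cor-weighted-to-unweigthed} to $J'$ to obtain an equivalent \MCI instance $J^{*}=(G^{*},s^{*},t^{*},d^{*},a)$. Since all costs in $J'$ are already $1$, the cost-normalisation of Lemma~\ref{lem to unit costs} does nothing; the capacity-normalisation of Lemma~\ref{lem to unit weights} replaces each edge $e$ of capacity $\omega'(e)$ by $e$ itself (now of capacity $1$) together with $\omega'(e)-1$ internally vertex-disjoint paths of length two, i.e.\ by $1+2(\omega'(e)-1)=2\omega'(e)-1\le 2a-1$ edges. Summing over the at most $2\vc(G)\cdot a$ edges of $G'$, the resulting \MCI graph $G^{*}$ has at most $(2a-1)\cdot 2\vc(G)\cdot a\le 4\vc(G)\cdot a^{2}$ edges, all modifications run in polynomial time, and equivalence of every intermediate instance is guaranteed by Theorem~\ref{Theorem: Kernel a+vc} and Corollary~\ref{cor-weighted-to-unweigthed}. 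As $a$ is part of the parameter, this is a polynomial kernel with at most $4\vc(G)\cdot a^{2}$ edges.

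The main obstacle I expect is not any single calculation but precisely the bookkeeping in the second step: one must verify that the merge operation of Definition~\ref{def:edge contract} keeps all costs equal to one (so that Lemma~\ref{lem to unit costs} is vacuous and the blow-up does not depend on $d$), and that capacities can be freely capped at $a$ for edges on inclusion-minimal cuts, so that the weighted-to-unweighted expansion costs only a factor $O(a)$ per edge rather than something unbounded in $d$.
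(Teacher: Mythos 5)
Your proposal follows essentially the same route as the paper: view the \MCI instance as a unit-cost, unit-capacity \WMCI instance, apply the kernelization of Theorem~\ref{Theorem: Kernel a+vc} to get at most $2\vc(G)\cdot a$ edges with costs one and bounded capacities, and then expand back to \MCI via the construction behind Corollary~\ref{cor-weighted-to-unweigthed}, counting the per-edge blow-up. Your bookkeeping is in fact slightly more careful than the paper's (which only bounds the number of new vertices by $m\cdot(a+1)$ and asserts capacities at most $a+1$), since you explicitly arrange $\omega'(e)\le a$ so that the $2\omega'(e)-1\le 2a-1$ expansion yields exactly the claimed $4\vc(G)\cdot a^2$ bound.
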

\todom{prove the constant of the kernel}

\iflong
\subsection{Limits of Problem Kernelization}
\else 
\fi

\iflong
Let~$B_q$ be a full binary tree of height~$q$. 
We denote the vertices on level~$\ell$ as~$b_{\ell,1}^q, \ldots , b_{\ell,2^\ell}^q$ for each~$\ell\in[0,q]$.
Hence, vertex~$b_{\ell,i}^q$ for some~$\ell\in[0,q-1]$ and some~$i\in[1,2^\ell]$ has the neighbors~$b_{\ell+1,2i-1}^q$ and~$b_{\ell+1,2i}^q$ in the next level.
The full binary tree~$R_q$ of height~$q$ with the vertices~$r_{\ell,1}^q, \ldots , r_{\ell,2^\ell}^q$ on level~$\ell\in[0,q]$ is defined analogously. 
A \emph{mirror fully binary tree}~$M_q$ is the graph obtained after merging the vertices~$b_{q,i}^q$ and~$r_{q,i}^q$ for each~$i\in[1,2^q]$.
By~$\lp(G)$ we denote the length of a longest path in~$G$.  

\begin{lemma}
\label{lem-longest-path-in-mirror-fully-binary trees}
Let~$q\ge 3$, then the longest path of a mirror fully binary tree~$M_q$ is~$2q^2$.
\end{lemma}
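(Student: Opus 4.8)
The plan is to prove the stronger statement that $\lp(M_q)=2q^2$ for all $q\ge 2$ by induction on $q$, exploiting a recursive decomposition of $M_q$. Write $\beta_q:=b_{0,1}^q$ and $\rho_q:=r_{0,1}^q$ for the two roots. Removing $\beta_q$ and $\rho_q$ leaves two vertex-disjoint copies of $M_{q-1}$: the subtree of $B_q$ rooted at $b_{1,1}^q$ glued along its $2^{q-1}$ leaves to the subtree of $R_q$ rooted at $r_{1,1}^q$ is a copy $M^L\cong M_{q-1}$ with roots $\beta^L:=b_{1,1}^q$, $\rho^L:=r_{1,1}^q$, and symmetrically one gets $M^R$ with roots $\beta^R,\rho^R$; the only edges of $M_q$ outside $M^L\cup M^R$ are $\beta_q\beta^L$, $\beta_q\beta^R$, $\rho_q\rho^L$, $\rho_q\rho^R$. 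Since there is no edge between $V(M^L)$ and $V(M^R)$, none between $\beta_q$ and $\rho_q$, and both $\beta_q$ and $\rho_q$ have degree exactly $2$, any path of $M_q$ splits into maximal sub-paths (\emph{blocks}) each lying inside $M^L$ or inside $M^R$, with consecutive blocks glued through $\beta_q$ or $\rho_q$; as each of these two vertices is used at most once, a path has at most three blocks, alternating between the two halves.

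To run the induction I would carry along four quantities for each $M_q$: the length $L_q$ of a longest path; the length $A_q$ of a longest path having $\beta_q$ as one endpoint (by the $B\leftrightarrow R$ symmetry, the same value holds with $\rho_q$); the length $P_q$ of a longest path whose two endpoints are $\beta_q$ and $\rho_q$; and the maximum total length $Q_q$ of two vertex-disjoint paths, one ending at $\beta_q$ and one at $\rho_q$. The block structure gives clean recurrences: a $\beta_q$–$\rho_q$ path cannot cross between the halves (each root, being an endpoint, is used only once), so it lies in $M^L\cup\{\beta_q,\rho_q\}$ and $P_q=2+P_{q-1}$; a longest $\beta_q$-rooted path either stays in one half ($1+A_{q-1}$) or crosses once through $\rho_q$, forcing a $\beta^L$–$\rho^L$ path in $M^L$ followed by a $\rho^R$-rooted path in $M^R$, giving $A_q=\max\{1+A_{q-1},\,3+P_{q-1}+A_{q-1}\}=3+P_{q-1}+A_{q-1}$; two disjoint rooted paths are best placed in opposite halves, so $Q_q=\max\{2+Q_{q-1},\,2+2A_{q-1}\}=2+2A_{q-1}$; and a longest path uses $0$, $1$, or $2$ crossings, the one-crossing optimum being two $\beta$-rooted paths joined through $\beta_q$ ($2A_{q-1}+2$) and the two-crossing optimum being a $Q$-configuration in one half joined through $\beta_q$ and $\rho_q$ to a $\beta^R$–$\rho^R$ path in the other ($Q_{q-1}+P_{q-1}+4$), so $L_q=\max\{L_{q-1},\,2A_{q-1}+2,\,Q_{q-1}+P_{q-1}+4\}$. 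With the base values from the $4$-cycle $M_1$, namely $P_1=2$, $A_1=3$, $Q_1=2$, $L_1=3$, these solve to $P_q=2q$, $A_q=q^2+2q$, $Q_q=2q^2$ and $L_q=2q^2$ for $q\ge 2$; in particular $\lp(M_q)=2q^2$ for $q\ge 3$, and the matching constructions come for free by unrolling the same recurrences (e.g. joining, through $\beta_q$, two $\beta$-rooted paths of length $A_{q-1}$, each built recursively).

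The substance of the argument, and the part demanding care, is the justification of the recurrences rather than the arithmetic. One must check that the restriction of a path (or path-pair) to a half is itself a path (or pair of paths) with exactly the claimed endpoint constraints — this is where "no $V(M^L)$–$V(M^R)$ edge" and "each of $\beta_q,\rho_q$ used at most once, and internally only as a crossing" are invoked — and that no extra crossings are possible. One then has to confirm that the listed options really dominate all alternatives: that $A_{q-1}\ge P_{q-1}+1$ (so crossing once beats crossing twice for $L_q$, and the $3+P_{q-1}+A_{q-1}$ term beats $1+A_{q-1}$ for $A_q$), that $2A_{q-1}\ge Q_{q-1}$ (so opposite halves beat the same half for $Q_q$), and that the zero-crossing term $\max\{L_{q-1},A_{q-1}+1,P_{q-1}+2\}$ is strictly below $2q^2$ once $q\ge 3$ (it equals $2(q-1)^2$, $q^2$, or $2q$). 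The quantity $Q_q$ is the only slightly non-obvious ingredient; it enters solely to bound the two-crossing case of $L_q$, and I expect establishing its recurrence cleanly to be the fiddliest step.
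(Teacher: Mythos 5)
Your proposal is correct, and I verified the arithmetic: with base values $P_1=2$, $A_1=3$, $Q_1=2$ on the $4$-cycle $M_1$, the recurrences $P_q=2+P_{q-1}$, $A_q=3+P_{q-1}+A_{q-1}$, $Q_q=2+2A_{q-1}$ do solve to $P_q=2q$, $A_q=q^2+2q$, $Q_q=2q^2$, the dominance checks all hold for $q\ge 2$, and the block/crossing analysis (at most three blocks, alternating halves, because $\beta_q$ and $\rho_q$ each have degree two with one neighbour per half) is the correct justification for the upper bounds. Your route is genuinely different in its bookkeeping from the paper's. The paper carries a single quantity through the induction, namely the length $L_q$ of a longest path starting at one root and \emph{avoiding} the other root (which equals $q^2$, i.e.\ your $1+A_{q-1}$), proves the recurrence $L_q=L_{q-1}+2q-1$, and then argues separately that every globally longest path contains exactly one of the two roots and has length $2L_q$; its base case is $q=3$, settled by inspecting all longest paths of $M_3$. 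Your version replaces that ad hoc global argument with a closed system of four mutually recursive quantities ($L$, $A$, $P$, $Q$) anchored at $q=1$, which makes the case analysis for the global optimum exhaustive by construction — the two-crossing configuration, which the paper handles somewhat tersely inside its second inductive claim, is cleanly captured by your $Q_{q-1}+P_{q-1}+4$ term. The paper's approach buys a shorter statement to induct on; yours buys a base case that can be checked in one line and an upper-bound argument where nothing has to be dismissed "without loss of generality". Both ultimately identify the same optimal configuration: two longest root-rooted paths of the two half-copies joined through one root, of total length $2+2A_{q-1}=2q^2$.

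One small presentational caution if you write this up: your labels "zero-crossing" for the terms $1+A_{q-1}$ and $2+P_{q-1}$ are slightly at odds with your own definition of a crossing (these paths do touch a root, just as an endpoint), and the parenthetical attributions of which inequality justifies which dominance are looser than the actual computations; since you state the closed forms explicitly, the dominances should simply be verified against those. Neither point is a gap.
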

\begin{proof}
By~$L_q$ we denote the length of each longest path with one endpoint being~$b^q_{0,1}$ which does not contain vertex~$r^q_{0,1}$.
We prove the following statements inductively for~$q\ge 3$.
\begin{enumerate}
\item $L_q=L_{q-1}+2q-1$.
\item $|V(P_q)\cap \{b^q_{0,1},r^q_{0,1}\}|=1$ for each longest path~$P_q$ of~$M_q$ and $\lp(M_q)=2L_q$.
\end{enumerate}

Solving the recurrence implied by~$1.$ and~$2.$ leads to~$\lp(M_q)=2q^2$.
Hence, it remains to prove the two statements.

\textbf{Base Case~$q=3$:}
By considering all possible longest paths in~$M_3$ we show that the length of a longest path with one endpoint being~$b^3_{0,1}$ and not containing~$r^3_{0,1}$ is nine and that~$\lp(M_3)=18$. 

\textbf{Inductive step~$j-1\mapsto j$:}

\begin{enumerate}
\item Let~$Z_j$ be a longest path starting at~$b^j_{0,1}$ not containing vertex~$r^j_{0,1}$.
Without loss of generality, assume that~$b^j_{1,1}\in Z_j$.

First, consider the case that~$r^j_{1,1}\notin Z_j$.
Then, the length of~$Z_j$ is at most the length of a longest path starting in vertex~$b^j_{1,1}$ and not containing vertex~$r^j_{1,1}$ in the mirror fully binary tree of height~$j-1$ rooted in vertex~$b^j_{1,1}$ plus one for the edge~$\{b^j_{1,1},b^j_{0,1}\}$.
By inductive hypothesis we obtain that~$Z_j$ has length at most~$L_{j-1}+1$.

Second, consider the case that~$r^j_{1,1}\in Z_j$.
Since~$b^j_{1,1},r^j_{1,1}\in Z_j$ there exists a path connecting these two vertices. 
Without loss of generality, assume that vertices~$b^j_{2,1}$ and~$r^j_{2,1}$ are on this path of length exactly~$2j-2$.
Note that~$b^j_{1,1}$ has the neighbors~$b^j_{0,1}$ and~$b^j_{2,1}$ in the path and thus~$b^j_{2,2}$ is not a neighbor of~$b^j_{1,1}$.
Thus, we can now use the inductive hypothesis.
The length of each longest path starting at vertex~$r^j_{1,1}$ and not containing vertex~$b^j_{1,1}$ in the mirror fully binary tree of height~$j-1$ rooted in~$r^j_{1,1}$ is at most~$L_{j-1}$.
Hence, the length of~$Z_j$ is at most~$L_{j-1}+2j-1$.
Thus, we obtain~$L_j=L_{j-1}+2j-1$.

\item 
Consider the case that~$|V(P_j)\cap \{b^j_{0,1},r^j_{0,1}\}|=1$.
Then, by~$1.$ we can construct a path~$Z_j$ of length~$2L_j=2L_{j-1}+4j-2$ by joining two paths where one endpoint is~$b^j_{0,1}$ which both do not contain vertex~$r^j_{0,1}$.
Thus,~$\lp(M_j)\ge 2L_j$.
Next, assume towards a contradiction that~$|V(P_j)\cap \{b^j_{0,1},r^j_{0,1}\}|\ne 1$.

First, consider that case~$|V(P_j)\cap \{b^j_{0,1},r^j_{0,1}\}|=0$. 
Then, the path~$P_j$ has length at most~$\lp(M_{j-1})=2L_{j-1}<2L_{j-1}+4j-2=2L_j$, a contradiction to the existence of the path~$Z_j$.

Second, consider the case~$|V(P_j)\cap \{b^j_{0,1},r^j_{0,1}\}|=2$.
Let~$Q_j$ be the unique subpath of~$P_j$ with endpoints~$b^j_{0,1}$ and~$r^j_{0,1}$. 
Without loss of generality,~$b^j_{1,1}\in V(Q_j)$ (and thus also~$r^j_{1,1}\in V(Q_j)$). 
Since~$M_j$ is a mirror fully binary tree the subpath~$Q_j$ has length exactly~$2j$.
Let~$M'$ be the mirror fully binary tree rooted at vertex~$b^j_{1,1}$.
Then~$M'\cap V(P_j)=V(Q_j)\setminus\{b^j_{0,1},r^j_{0,1}\}$.
Furthermore,~$P_j$ can contain the edges~$\{b^j_{0,1},b^j_{1,2}\}$ and~$\{r^j_{0,1},r^j_{1,2}\}$, and a longest path in the mirror fully binary tree rooted in~$b^j_{1,2}$ of height~$j-1$.
Thus, the length of~$P_j$ is at most~$\lp(M_{j-1})+2j+2=2L_{j-1}+2j+2<2L_{j-1}+4j-2=2L_j$, a contradiction to the existence of the path~$Z_j$.

Hence,~$\lp(M_j)=2L_j$.
\end{enumerate}\lncsqed\end{proof}
\fi 

On the negative side, we provide an OR-composition to exclude a polynomial kernel for the combination of almost all considered parameters \iflong with the exception of~$\vc(G)$ and~$\fvs(G)$\fi.
  
\begin{theorem}
\label{thm-mci-no-poly-kernel}
\iflong
None of the problems \MCI, \WMCI, and~\ZMCI admits a polynomial kernel when parameterized by~$d + a + \lp(G) + \Delta(G) + \td(G)$, unless~$\NP \subseteq \coNPpoly$, where~$\td(G)$ denotes the treedepth of~$G$.
\else 
Both \MCI and~\ZMCI do not admit a polynomial kernel when parameterized by~$d + a + \lp(G) + \Delta(G) + \td(G)$, unless~$\NP \subseteq \coNPpoly$, where~$\td(G)$ denotes the tree-depth of~$G$.
\fi
\end{theorem}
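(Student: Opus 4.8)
The plan is to prove this by an \emph{OR\nobreakdash-cross\nobreakdash-composition} in the sense of Bodlaender, Jansen, and Kratsch (see~\cite{CFK+15}): it suffices to give a polynomial-time algorithm that takes~$t$ instances~$I_1,\dots,I_t$ of some \NP-hard problem~$L$ that are equivalent under a polynomial equivalence relation and produces one instance of the target problem whose parameter is bounded by a polynomial in~$\max_j|I_j|+\log t$ and which is a yes-instance if and only if some~$I_j$ is. Since \MCI{} is already \NP-hard on subcubic graphs by~\Cref{theo: parameter degree}, I would take~$L$ to be exactly that restriction, and call two such instances equivalent if they have the same number of vertices and the same values~$d$ and~$a$; as~$d$ and~$a$ are polynomially bounded in the instance size there are polynomially many classes, so we may assume all~$I_j=(G_j,s_j,t_j,d,a)$ have a common vertex count~$n_0$, a common~$d$, and a common~$a$, and (after routine padding with trivial no-instances and duplication) that~$t=2^q$.

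The composed instance would be built around a mirror fully binary tree~$M_q$ used as a bounded-degree \emph{selector}: split each middle vertex~$m_j$ of~$M_q$, identify one side with~$s_j$ and the other with~$t_j$, and insert~$G_j$ in between, so that the two roots of~$M_q$ become the global source~$s$ and sink~$t$ and slot~$j$ carries the full instance~$G_j$. All edges get unit cost and unit capacity, so the output is simultaneously a valid instance of \MCI, \WMCI, and \ZMCI (with the same graph and the same parameter), and the bound will transfer to all three. I would set the defender budget to~$d':=d+\Oh(q)$ and the attacker budget to~$a':=a+\Theta(q)$, calibrated so that (i)~each root-to-leaf path of either half of~$M_q$ carries a ``frontier'' $(s,t)$-cut of capacity at most~$a'$ consisting of the~$\Theta(q)$ downward edges of a linear antichain, while (ii)~``wide'' frontier antichains already give cuts that are too expensive, so that hitting all frontier cuts forces the defender to spend its first~$\Oh(q)$ protected edges essentially along one root-to-leaf path in each half; these two paths meet at a common leaf~$m_j$, thereby \emph{committing} to slot~$j$. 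The remaining~$d$ protected edges must then lie in~$G_j$ and hit every small $(s_j,t_j)$-cut of~$G_j$ (every other small $(s,t)$-cut of the composed graph is already destroyed by the committed selector edges), i.e.\ form a solution of~$I_j$; conversely, a solution~$D_j$ of a yes-instance~$I_j$ together with the~$\Oh(q)$ selector edges of slot~$j$ is a solution of the composed instance. This gives the desired OR-equivalence.

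For the parameter bookkeeping: the composed graph is subcubic because~$M_q$ is subcubic, each~$G_j$ is subcubic, and the splicing can be done without raising any degree, so~$\Delta(G)=3$; its longest path is at most~$\lp(M_q)+2\max_j\lp(G_j)\le 2q^2+2n_0$ by~\Cref{lem-longest-path-in-mirror-fully-binary trees}, hence~$\lp(G)=\Oh(\log^2 t+n_0)$; and~$\td(G)\le \lp(G)+1$ holds for every graph, so~$\td(G)$ is bounded the same way. Finally~$d'=d+\Oh(q)$ and~$a'=a+\Theta(q)$ with~$d,a\le n_0$, so~$d'+a'+\lp(G)+\Delta(G)+\td(G)$ is polynomial in~$n_0+\log t=\max_j|I_j|+\log t$, as required. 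The cross-composition framework then yields the claimed conclusion for \MCI{}, and, the output being a \WMCI- and \ZMCI-instance with the same parameter, also for \WMCI{} and \ZMCI{}.

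The main obstacle is establishing the correctness of the selector on the nose, i.e.\ calibrating the unit capacities together with the exact~$d'$ and~$a'$ so that the OR-semantics is exact: one must (a)~classify which antichain $(s,t)$-cuts of~$M_q$ remain uncovered after a given set of ``selector'' edges is protected, in order to show that the defender cannot win by distributing a partial root-to-leaf path over two or more slots, and (b)~rule out cheap spurious $(s,t)$-cuts near~$s$ or near~$t$ that a frugal defender could neutralise without committing to any slot. A further, more routine, point will be to argue that the instances~$G_j$ can be spliced into the split middle vertices of~$M_q$ while keeping the graph subcubic, and to check that all of the above still works after~$G_j$ is only connected to~$M_q$ through the two identified endpoints~$s_j$ and~$t_j$.
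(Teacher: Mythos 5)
Your high-level plan (an OR-composition of $2^q$ size-normalized \NP-hard instances arranged between two height-$q$ binary trees whose roots become $s^*$ and $t^*$) is the same as the paper's, and your parameter bookkeeping via \Cref{lem-longest-path-in-mirror-fully-binary trees} and the DFS-tree bound on tree-depth is fine. However, the step you flag as "the main obstacle" — calibrating a selector with \emph{unit costs and unit capacities} — is not just delicate, it fails, and this is exactly where the paper's proof does something different. With unit costs and budgets $d'=d+\Oh(q)$, $a'=a+\Theta(q)$, nothing stops the defender from protecting \emph{two} root-to-leaf paths in each tree (a subtree with $q+1$ edges reaching two sibling leaves costs only $q+1$ unit-cost edges, and its tree-frontier has only $q-1$ sibling edges) and then spending the remaining $\approx d$ protections split across the two selected slots $G_{j_1},G_{j_2}$. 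By Menger's theorem applied with the protected edges at infinite capacity, the resulting flow is at least $(q-1)+\mathrm{mc}_1+\mathrm{mc}_2$, where $\mathrm{mc}_i$ is the min $(s_{j_i},t_{j_i})$-cut of $G_{j_i}$ avoiding the protected edges placed there. So if two of the composed instances are no-instances for budget pair $(d,a)$ but each admits a cheap partial solution raising its min cut to roughly $a/2+1$ (such instances are easy to manufacture), the composed instance becomes a yes-instance even though every $I_j$ is a no-instance. This breaks the OR-semantics, and no recalibration of $d'$ and $a'$ alone repairs it, because the cheat scales: $r$ partially-solved slots contribute roughly $r\cdot(\text{partial min cut})$ flow at a tree cost of only $\Oh(q+r)$.

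The paper's construction avoids this precisely by \emph{not} using unit costs: it composes into \WMCI, gives every tree edge cost $d+1$ and sets $d^*=2q(d+1)+d$, so the defender can afford at most $2q$ tree edges and is thereby forced to select exactly one root-to-leaf path per tree (hence one slot); it additionally attaches $6q$ unprotectable (cost $d^*+1$) parallel length-two $(s_j,t_j)$-paths per slot to pad the relevant cuts, and sets $a^*=7q+a$ accordingly (see \Cref{claim-wmci-no-poly-instance-choice}). Only afterwards is the instance converted to \MCI{} by replacing each cost-$c(e)$ edge with a path of $c(e)$ unit-cost edges (\Cref{lem to unit costs}), which is what preserves $d$, $a$, $\Delta$, and keeps $\lp$ and $\td$ polynomially bounded. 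If you want to keep your unit-cost route, you would have to prove that the multi-slot/partial-solution cheat is impossible for your chosen source problem, which is not true for \MCI{} in general; otherwise you should adopt the two-stage weighted-then-subdivide strategy.
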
  
\iflong
\begin{proof}
Our strategy is as follows:
First, we provide an OR-composition~\cite{BDFH09,BJK14} of~$2^q$ instances of \MCI to \WMCI where~$\omega(e)=1$ and~$c(e)\in(d+q)^{\Oh(1)}$ for each edge~$e$.
Second, we apply Lemma~\ref{lem to unit costs} exhaustively to transform the constructed instance of \WMCI to an equivalent instance of \MCI.
Clearly, the budgets~$a$ and~$d$ do not change.
In this transformation, each edge~$e$ with~$c(e)\ge 2$ is replaced by a path with~$c(e)$ edges.
Hence, the maximum degree does not increase.
Furthermore, since~$c(e)\in(d+q)^{\Oh(1)}$, the length of the longest path does only increase by a factor of~$(d+q)^{\Oh(1)}$ and the tree-depth is only increased by~$\Oh(\log(d+q))$.
Thus, this transformation preserves all five parameters.
It remains to show the statement for \WMCI.

Now, we prove the no polynomial kernel result for \WMCI by presenting an OR-composition from~\MCI.
Let~$I_1, I_2, \dots, I_{2^q}$ be instances of~\MCI with the same budgets~$d$ and~$a$, the same maximum degree~$\Delta(G)$, the same length~$\lp(G)$ of the longest path, and the same tree-depth~$\td(G)$ for some integer~$q\ge 3$.
Moreover, let~$I_j := (G_j =(V_j, E_j), s_j, t_j, d,a)$.

We describe how to construct an instance~$I^* = (G^*, s^*, t^*, c^*, \omega^*, d^*, a^*)$ of~\WMCI in polynomial time, where~$d^* + a^* + \lp(G^*) + \Delta(G^*) + \td(G^*) \in (d + a + \lp(G) + \Delta(G) + \td(G) + q)^{\Oh(1)}$ such that~$I^*$ is a yes-instance of~\WMCI if and only if~$I_j$ is a yes-instance of~\MCI for at least one~$j \in [1,2^q]$.

We add the following vertices and edges to the graph~$G^*$:

\begin{itemize}
\item We add a copy of the graph~$G_j$ for each~$j\in[1,2^q]$ to~$G^*$.
\item Furthermore, we add~$6q$ new vertices~$Z_j:=\{z^1_j, \ldots , z^{6q}_j\}$ and we add the the edges~$\{s_j,z^i_j\}$ and~$\{z^i_j,t_j\}$ for each~$j\in[1,2^q]$ and each~$i\in[1,6q]$ to~$G^*$.
\item Next, we add a full binary tree~$B$ with height~$q$ to~$G^*$. 
We denote the vertices on level~$\ell$ as~$b_\ell^1, \ldots , b_\ell^{2^\ell}$ for each~$\ell\in[0,q]$.
Hence, vertex~$b_\ell^i$ for some~$\ell\in[0,q-1]$ and some~$i\in[1,2^\ell]$ has the neighbors~$b_{\ell+1}^{2i-1}$ and~$b_{\ell+1}^{2i}$ in the next level.
Now, we identify vertex~$s_j$ with the leaf~$b_q^j$ for each~$j\in[1,2^q]$ and we identify vertex~$s^*$ with the root~$b_0^1$.
\item Analogously, we add a full binary tree~$R$ of height~$q$ with the vertices~$r_\ell^1, \ldots , r_\ell^{2^\ell}$ on level~$\ell\in[0,q]$.
Similarly, we identify vertex~$t^*$ with the root~$r_0^1$ and identify vertex~$t_j$ with the leaf~$r_q^j$ for each~$j\in[1,2^q]$.
\end{itemize}

Next, we set~$d^*:=2q(d+1)+d$ and~$a^*:=7q+a$. 
Afterwards, we set~$\omega^*(e):=1$ for each edge~$e\in E(G^*)$.
We define the costs of each edge in~$E(G^*)$ as follows:

\begin{itemize}
\item For each~$e\in E_j$ for some~$j\in[1,2^q]$ we set~$c(e):=1$.
\item We set~$c(\{s_j,z_j^i\}):=d^*+1$ and~$c(\{z_j^i,t_j\}):=d^*+1$ for each~$j\in[1,2^q]$ and each~$i\in[1,6q]$.
\item For each edge~$e$ in one of the full binary trees we set~$c(e):=d+1$.
\end{itemize}

This completes the construction of~$I^*$.
Now, we prove that the parameters~$a^*$,~$d^*$,~$\Delta(G^*)$, and~$\lp(G^*)$ are bounded by~$(a+d+\Delta(G)+\lp(G)+q)^{\Oh(1)}$. 

\begin{itemize}
\item Since~$d^*=2q(d+1)$ and~$a^*=7q+a$, the statement is clear for~$a^*$ and~$d^*$.
\item Each vertex in~$B\cup R$ except the leaves have degree at most three.
Furthermore, each vertex~$z^i_j$ for some~$j\in[1,2^q]$ and some~$i\in[1,6q]$ as degree two and each vertex in~$V_j\setminus\{s_j, t_j\}$ has degree at most~$\Delta(G)$.
Note that vertex~$s_j$ and~$t_j$ for each~$j\in[1,2^q]$ has degree at most~$6q+1+\Delta(G)$.
Hence, the statement is true for~$\Delta(G^*)$.
\item Observe that the graph obtained from contracting all vertices in~$W_j:=V(G_j)\cup\{z^1_j, \ldots , z^{6q}_j\}$ in the graph~$G^*$ into one vertex for each~$j\in[1,2^q]$ is a mirror fully binary tree of height~$q$.
Observe that by construction,~$\lp(G^*(W_j))\in \Oh(\lp(G))$ for each~$j\in [1,2^q]$.
Thus, by Lemma~\ref{lem-longest-path-in-mirror-fully-binary trees} we obtain that~$\lp(G^*)\in\Oh(\lp(G)\cdot q^2)$.
Hence, the statement is true for~$\lp(G^*)$.
\item Since~$\td(G_j) = \td(G)$ for each~$j\in [1, 2^q]$, the tree-depth of~$G'_j = G^*[V_j \cup Z_j]$ is at most~$\td(G) + 2$. Hence, there is a directed tree~$T_j(V_j \cup Z_j, A_j)$ of depth at most~$\td(G) + 2$ and root~$s_j$, such that for each edge~$\{u,w\}\in E(G'_j)$ either~$u$ is an ancestor of~$w$ in~$T_j$ or vice versa.
We define a directed tree~$T^*=(V(G^*), A^*)$ as follows. 
The tree~$T_j$ is a subtree of~$T^*$ for each~$j\in [1,2^q]$.
The vertex~$b^1_0$ is the root of~$T^*$ and for each~$\ell\in [0,q-1]$ and each~$i\in[1, 2^\ell]$,~$A^*$ contains the arcs~$(b_\ell^i, r_\ell^i),(r_\ell^i, b_{\ell+1}^{2i})$, and~$(r_\ell^i, b_{\ell+1}^{2i + 1})$.
Recall that~$b_q^j = s_j$.
Since~$T_j$ is a subtree of~$T^*$ it follows that for each edge~$\{u,w\}\in E(G^*)$ either~$u$ is an ancestor of~$w$ in~$T^*$ or vice versa.
Moreover, since~$T^*$ has depth at most~$2q + \td(G)$ we obtain the stated bound on the tree-depth of~$G^*$.

\end{itemize}

Next, we prove the correctness.
That is, we show that at least one instance~$I_j$ has a solution~$D_j$ with cost at most~$d$ for some~$j\in[1,2^q]$ if and only if~$I^*$ has a solution~$D$ with cost at most~$d^*$.

$(\Rightarrow)$
Let~$D_j$ be a solution of~$I_j$ with cost at most~$d$.

Let~$P^s_j$ be the unique~$(s^*, s_j)$-path in~$B$ and let~$P^t_j$ be the unique~$(t_j, t^*)$-path in~$R$.
We set~$D:=D_j\cup E(P^s_j)\cup E(P^t_j)$.

First, we show that~$c^*(D)\le d^*$.
Since~$B$ and~$R$ are full binary trees of height~$q$, both paths~$P^s_j$ and~$P^t_j$ consist of exactly~$q$ edges.
Recall that each edge in both~$B$ and~$R$ has cost~$d+1$.
Since~$|D_j|\le d$ and~$c(e)=1$ for each edge~$e\in E_j$, we obtain~$c^*(D)\le 2q(d+1)+d$.

Second, we prove that there is no~$(s^*,t^*)$-cut~$A\subseteq E(G^*)\setminus D$ in~$G^*$ with~$\omega^*(A)\le a^*$. 
To this end, we present~$a^*+1$ many~$(s^*,t^*)$ paths in~$G^*$ whose edge sets may only intersect in~$D$.
Together with~$\omega^*(e)=1$ for each edge~$e\in E(G^*)$ we then conclude that~$\omega^*(M)\ge a^*+1$.
We use the following notation:
For two paths~$P_1=(v_1, \dots, v_k)$ and~$P_2=(w_1, \dots, w_r)$ in~$G^*$ where~$w_r = v_1$, we let~$P_1 \multimap P_2 := (v_1, \dots, v_k = w_1, \dots, w_r)$ denote the \emph{merge} of~$P_1$ and~$P_2$.

\begin{itemize}
\item Let~$P^{s,\ell}_j$ be the subpath of~$P^s_j$ until level~$\ell\in[0,q-1]$.
Since~$B$ is a full binary tree, let~$b^i_{\ell+1}$ be the child of the endpoint of~$P^{s,\ell}_j$ which is not contained in~$P^s_j$.
The subpath~$P^{\ell, t}_j$ and the vertex~$r^i_{\ell+1}$ are defined similarly.
We consider the~$(s^*,t^*)$-path~$P^{s,\ell}_j \cdot (b^i_{\ell+1},b^{2i}_{\ell+2}, \ldots, b^{2^{q-\ell}\cdot i}_q=s_{2^{q-\ell} \cdot i},z^1_{2^{q-\ell}\cdot i},t_{2^{q-\ell}\cdot i}=r^{2^{q-\ell}\cdot i}_q, \ldots, r^{2i}_{\ell+2},r^i_{\ell+1})\cdot P^{\ell, t}_j$.
Since~$\ell\in[0,q-1]$, these are~$q$ paths in total.

\item Observe that~$P^s_j \multimap (s_j, z_j^i, t_j) \multimap  P^t_j$ is an~$(s^*,t^*)$-path for each~$i\in[1,6q]$.
Hence, these are~$6q$ paths in total.
\item Since~$D_j$ is a solution of~$I_j$, there are~$a+1$ many~$(s_j, t_j)$-paths~$P_1, \ldots , P_{a+1}$ in~$G_j$ whose edge set may only intersect in~$D_j$.
Since~$D_j\subseteq D$,~$P^s_j \multimap P_i \multimap P^t_j$ is an~$(s^*,t^*)$-path for each~$i\in[1,a+1]$ such that~$P_i\subseteq E_j$.
Hence, these are~$a+1$ paths in total.
\end{itemize}

Thus,~$G^*$ contains at least~$a+1$ many~$(s^*,t^*)$-paths whose edge set may only intersect in~$D$ and hence~$I^*$ is a yes-instance of \WMCI.

$(\Leftarrow)$
Conversely, let~$D$ be a solution with cost at most~$d^*$ of~$I^*$.
At the beginning, we prove the following statement.

\begin{myclaim}
\label{claim-wmci-no-poly-instance-choice}
For each solution~$D$ of~$I^*$ with cost at most~$d^*$, there exists a~$j\in[1,2^q]$ such that~$E(P^s_j)\subseteq D$ for the unique path~$P^s_j$ from~$s^*$ to~$s_j$ and~$E(P^t_j)\subseteq D$ for the unique path~$P^t_j$ from~$t^*$ to~$t_j$.
\end{myclaim}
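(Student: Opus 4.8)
The plan is to prove Claim~\ref{claim-wmci-no-poly-instance-choice} by contradiction: suppose $D$ is a solution of $I^*$ with $c^*(D)\le d^*$ but no index $j$ satisfies $E(P^s_j)\subseteq D$ \emph{and} $E(P^t_j)\subseteq D$ at the same time. Under this hypothesis I will construct an $(s^*,t^*)$-cut $A\subseteq E(G^*)\setminus D$ with $\omega^*(A)\le a^*$, contradicting that $D$ is a solution.

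First I would exploit the cost bound. Every edge of $B$ and of $R$ has cost $d+1$, every edge $\{s_j,z_j^i\}$ or $\{z_j^i,t_j\}$ has cost $d^*+1>d^*$, and every edge inside some $E_j$ has cost $1$; hence $D$ contains no $z$-edge and, writing $b:=|D\cap E(B)|$ and $r:=|D\cap E(R)|$, we get $(b+r)(d+1)\le c^*(D)\le d^*=2q(d+1)+d<(2q+1)(d+1)$, so $b+r\le 2q$. Let $T_B$ be the connected component of $s^*$ in the graph $(V(B),D\cap E(B))$ and $T_R$ the component of $t^*$ in $(V(R),D\cap E(R))$; since $B$ and $R$ are trees, $s_j\in T_B$ iff $E(P^s_j)\subseteq D$, and symmetrically for $T_R$. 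If $T_B$ contains no leaf of $B$, let $\partial T_B$ be the set of edges of $B$ with exactly one endpoint in $T_B$; by maximality $\partial T_B$ is disjoint from $D$, and after deleting $\partial T_B$ the component of $s^*$ is exactly $T_B$, which contains no leaf of $B$ and hence no vertex incident with a gadget or with $R$, so $\partial T_B$ is an $(s^*,t^*)$-cut. An elementary degree count in $B$ (where only $s^*$ has degree two and all other non-leaf vertices have degree three) gives $|\partial T_B|\le |V(T_B)|+1\le b+2\le 2q+2\le a^*$ using $q\ge 3$, a contradiction; the case that $T_R$ contains no leaf of $R$ is symmetric.

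So it remains to treat the case where $T_B$ contains a leaf $s_{j_B}$ and $T_R$ contains a leaf $t_{j_R}$. Then $E(P^s_{j_B})\subseteq D$ and $E(P^t_{j_R})\subseteq D$, so $b\ge q$ and $r\ge q$, which together with $b+r\le 2q$ forces $b=r=q$, $D\cap E(B)=E(P^s_{j_B})$, $D\cap E(R)=E(P^t_{j_R})$, $T_B=P^s_{j_B}$, $T_R=P^t_{j_R}$, and, by the contradiction hypothesis, $j_B\neq j_R$. Now I would set $A:=\partial T_B\cup\partial T_R$, i.e.\ the $q$ edges of $B$ branching off $P^s_{j_B}$ together with the $q$ edges of $R$ branching off $P^t_{j_R}$; this set is disjoint from $D$ and has $\omega^*(A)=2q\le a^*$. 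The crux is to show that $A$ is an $(s^*,t^*)$-cut. Identify each leaf of $B$ (and the equally indexed leaf of $R$) with the length-$q$ string of root-to-leaf choices. After deleting $\partial T_B$, the component of $s^*$ in $B$ is $P^s_{j_B}$, whose only leaf is $s_{j_B}$, so $s^*$ can enter a gadget only through $s_{j_B}$; after deleting $\partial T_R$, the component of $t^*$ in $R$ is $P^t_{j_R}$, whose only leaf is $t_{j_R}$, so $t^*$ can be reached only through $t_{j_R}$. Let $p^*$ be the length of the common prefix of $j_B$ and $j_R$ (so $0\le p^*\le q-1$), and let $J^*$ be the dyadic block of all leaves agreeing with $j_B$ on positions $0,\dots,p^*$; then $j_B\in J^*$ and $j_R\notin J^*$.

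The decisive step is a closure argument showing that the set of leaves $j$ whose incident gadget is reachable from $s^*$ in $G^*-A$ is contained in $J^*$. Starting from $s_{j_B}$, one reaches gadget $j_B$ and hence $t_{j_B}$, and the component of $t_{j_B}$ in $R-\partial T_R$ is precisely the subtree branching off $P^t_{j_R}$ at level $p^*$, whose leaf set is exactly $J^*$. Inductively, for every $j\in J^*$ the component of $s_j$ in $B-\partial T_B$ and the component of $t_j$ in $R-\partial T_R$ again have leaf sets contained in $J^*$, because $j$ agrees with $j_B$ on a prefix at least as long as the prefix of $j_B$ (resp.\ of $j_R$) that the relevant branching point fixes; so no leaf outside $J^*$ is ever reached. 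In particular gadget $j_R$ is unreachable, hence $t_{j_R}$ — and therefore $t^*$ — is unreachable from $s^*$, so $A$ is an $(s^*,t^*)$-cut avoiding $D$ with $\omega^*(A)=2q\le a^*$, the desired contradiction. The main obstacle is exactly this dyadic-block reachability analysis in the two mirrored binary trees; once it is in place, everything else is bookkeeping with the cost budget and elementary degree counts in $B$ and $R$.
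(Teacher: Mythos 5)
Your proof is correct. It shares the paper's skeleton: argue by contradiction, use the cost accounting to exclude all $z$-edges from $D$ and to force $|D\cap(E(B)\cup E(R))|\le 2q$, and then split on whether the protected components of $s^*$ in $B$ and of $t^*$ in $R$ reach a leaf. Your first case (no leaf reached) is essentially the paper's, with a slightly sharper degree count than the paper's crude bound $|X|\le 3(2q+1)$. The genuine difference is in the main case $j_B\neq j_R$. The paper cuts $X\cup\{e^*\}$, where $X$ is the boundary of $P^s_{j_B}$ in $B$ and $e^*$ is the single edge joining the leaf $t_{j_B}=r_q^{j_B}$ to its parent in $R$; since the only leaf-incident edge of $P^t_{j_R}$ sits at $t_{j_R}\neq t_{j_B}$, we have $e^*\notin D$, and every $(s^*,t^*)$-path surviving $X$ is forced through $s_{j_B}$, then $t_{j_B}$, then $e^*$ --- so the cut property is a one-line bottleneck observation. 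You instead take the symmetric cut $\partial T_B\cup\partial T_R$ of capacity $2q$ and must then run the dyadic-block closure argument to show that the set of reachable gadget indices stays inside the block $J^*$ hanging off the branching point, so that $t_{j_R}$, and hence $t^*$, is never reached. I checked that closure step and it is sound: for $j\in J^*$ the component of $t_j$ in $R-\partial T_R$ is exactly the subtree whose leaf set is $J^*$, and the component of $s_j$ in $B-\partial T_B$ branches off $P^s_{j_B}$ at a level at least $p^*+1$ and therefore also has all its leaves in $J^*$, while the component of $t^*$ is the path $P^t_{j_R}$ itself, enterable only through $t_{j_R}\notin J^*$. Your route is noticeably heavier than the paper's, and what it buys --- a smaller, symmetric cut of capacity $2q$ instead of roughly $q+2$ --- is irrelevant here since $a^*=7q+a$ leaves ample slack for either construction; still, both arguments are complete and correct.
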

\begin{claimproof}
Assume towards a contradiction that this is not the case.
We define~$B_s:=E(B)\cap D$ and~$R_t:=E(R)\cap D$ as the set of protected edges in the binary trees~$B$ and~$R$.
Note that since~$c^*(e)=d+1$ for each edge~$e$ in the binary trees~$B$ and~$R$ and~$d^*=2q(d+1)+d$, we have~$|B_s|+|R_t|\le 2q$.
By~$Z_s$ we denote the connected component of~$G^*[B_s]$ containing vertex~$s^*$.
Since~$|B_s|\le 2q$, we conclude that~$Z_s$ contains at most~$2q+1$ vertices.
Since~$B$ is a binary tree, each vertex in~$B$ has degree at most three.
Recall that only vertices in level~$q$ of~$B$ have neighbors outside of~$B$.
We set~$X:=E_B(Z_s,N(Z_s))$. 
Note that~$|X|\le 3\cdot(2q+1)\le 6q+3$.

First, we consider the case that~$G^*[B_s]$ does not contain the path~$P^s_j$ as an induced subgraph for any~$j\in[1,2^q]$.
We set~$A:=X$ and show that~$M$ is an~$(s^*,t^*)$-cut in~$G^*$. 
Note that~$|M|\le 6q+3\le a^*$.
Observe that~$A$ avoids~$D$ since~$Z_s$ is a connected component in~$G^*[D]$ and~$A$ contains only adjacent edges of~$Z_s$.
Thus,~$A$ is an~$(s^*,t^*)$-cut in~$G^*$ that avoids~$D$ with~$\omega^*(A)\le a^*$, a contradiction.
Analogously, we can prove that~$G^*[R_t]$ is a path~$P^t_{j'}$ for some~$j'\in[1,2^q]$.

Second, we consider the case that~$G^*[B_s]$ is a path~$P^s_j$ for some~$j\in[1,2^q]$ and that~$G^*[R_t]$ is a path~$P^t_{j'}$ for some~$j'\in[1,2^q]$ and~$j\ne j'$.
Recall that~$c(e)=d+1$ for each edge~$e$ in the binary trees~$B$ and~$R$ and that~$d^*=2q(d+1)+d$.
Furthermore, note that~$|B_s|=q=|R_t|$ and that~$Z_s$ contains~$q+1$ vertices.
Thus,~$D$ contains no other edges of~$R$ than~$R_t$.
In particular,~$e^*:=\{r_q^j,r_{q-1}^{\lceil j/2\rceil}\}\notin D$.
Recall that~$t_j=r_q^j$.
We define~$A:=\{e^*\}\cup X$.
Clearly,~$A$ avoids~$D$.
Since~$q\ge 2$, we conclude that~$|A|\le 1+6q+1\le 7q$.
It remains to show that~$A$ is an~$(s^*,t^*)$-cut in~$G^*$.
Since~$X \subseteq A$, every~$(s^*,t^*)$-path~$P^*$ in~$G^*- X$ starts with~$P^s_j$ followed by an~$(s_j, t_j)$-path. 
Moreover,~$P^*$ has to contains the edge~$e^*$.
Since~$e^*\in A$,~$A$ is indeed an~$(s^*,t^*)$-cut in~$G^*$ with~$|A|\le a^*$ that avoids~$D$, a contradiction.
\end{claimproof}

By Claim~\ref{claim-wmci-no-poly-instance-choice}, let~$j\in[1,2^q]$ such that~$E(P^s_j)\cup E(P^t_j)\subseteq D$.
We define~$D_j:=D\cap E_j$.
Observe that since~$c(e)=d+1$ for each edge in both binary trees~$B$ and~$R$, each edge incident with vertex~$z^i_{j'}$ for some~$j'\in[1,2^q]$ and some~$i\in[1,6q]$ has cost~$d^*+1$, and each edge in the copy of~$G_{j'}$ for some~$j'\in[1,2^q]$ has costs one, we conclude that~$|D_j|\le d$.
In the following, we show that~$D_j$ is a solution of~$I_j$.

Assume towards a contradiction that there is an~$(s_j,t_j)$-cut~$A_j\subseteq E_j$ of size at most~$a$ in~$G_j$ that avoids~$D_j$.
We set~$A := A_j \cup \{\{s_j, z^i_j\}\mid i \in [1,6q]\} \cup X$, where~$X := E_B(V(P^s_j), N(V(P^s_j)))$.
Note that~$A$ avoids~$D$.
By the fact that~$B$ is a binary tree of depth~$q$, it follows that~$|M| \leq |M_j| + 6q + q \leq a + 7q = a^*$.

Since~$X \subseteq A$, every~$(s^*,t^*)$-path~$P^*$ in~$G^*- X$ starts with~$P^s_j$ followed by an~$(s_j, t_j)$-path. 
Moreover, since~$A_j$ is an~$(s_j, t_j)$-cut in~$G_j$ and~$\{\{s_j, z^i_j\}\mid i \in [1,6q]\} \subseteq A$,~$A$ is an~$(s^*, t^*)$-cut of capacity at most~$a^*$ in~$G^*$, a contradiction.
Hence,~$D_j$ is a solution with cost at most~$d$ of~$I_j$ and, thus,~$I_j$ is a yes-instance of~\WMCI.
\lncsqed\end{proof}
\fi

\bibliographystyle{plainurl}
\bibliography{bib}
\end{document}

